\newcommand{\techrptonly}[1]{\ignorespacesafterend}
\newcommand{\paperonly}[1]{#1}
\newcommand{\material}{online technical appendix~\cite{bowman2018:cccc:tr}}
\definecolor{acmartlink}{RGB}{134,0,106}
\LetLtxMacro{\oldtt}{\tt}
\renewcommand{\judgshape}[2][]{\begin{flushleft}\fbox{\inlinemath{#2}}~{\small #1}\end{flushleft}}
\LetLtxMacro{\oldmathpar}{\mathpar}
\LetLtxMacro{\oldendmathpar}{\endmathpar}
\LetLtxMacro{\olddisplaymath}{\displaymath}
\LetLtxMacro{\oldenddisplaymath}{\enddisplaymath}
\LetLtxMacro{\oldalign}{\align}
\LetLtxMacro{\oldendalign}{\endalign}
\LetLtxMacro{\oldfigure}{\figure}
\LetLtxMacro{\oldendfigure}{\endfigure}
\LetLtxMacro{\oldcaption}{\caption}
\LetLtxMacro{\oldparagraph}{\paragraph}
\LetLtxMacro{\oldsection}{\section}
\LetLtxMacro{\oldsubsection}{\subsection}
\renewenvironment{mathpar}{
\begin{nop}\small\oldmathpar}{
\oldendmathpar\end{nop}\ignorespacesafterend}
\renewenvironment{displaymath}{
\begin{nop}\small\olddisplaymath}{
\oldenddisplaymath\end{nop}\ignorespacesafterend}
\renewenvironment{align}{
\small\oldalign}{
\oldendalign\ignorespacesafterend}
\renewenvironment{figure}
{\oldfigure}
{\vspace{-3ex}\oldendfigure}
\renewcommand{\inlinemath}[1]{{\small\(#1\)}}
\renewcommand{\caption}[1]{\vspace{-1.5ex}\vspace{-\baselineskip}\oldcaption{#1}}
\LetLtxMacro{\oldtfonttext}{\tfonttext}
\LetLtxMacro{\oldsfonttext}{\sfonttext}
\LetLtxMacro{\oldbefonttext}{\befonttext}
\renewcommand{\tfonttext}[1]{\oldtfonttext{\small #1}}
\renewcommand{\sfonttext}[1]{\oldsfonttext{\small #1}}
\renewcommand{\befonttext}[1]{\oldbefonttext{\small #1}}
\LetLtxMacro{\oldrulename}{\rulename}
\renewcommand{\rulename}[1]{\oldrulename{\small #1}}
\renewcommand{\paragraph}{\oldparagraph}
\author{William J. Bowman}
\affiliation{
  \institution{Northeastern University}
  \country{USA}
}
\affiliation{
  \institution{Inria Paris}
  \country{France}
}
\email{wjb@williamjbowman.com}
\author{Amal Ahmed}
\affiliation{
  \institution{Northeastern University}
  \country{USA}
}
\affiliation{
  \institution{Inria Paris}
  \country{France}
}
\email{amal@ccs.neu.edu}
\title{Typed Closure Conversion for the Calculus of Constructions}
\keywords{Dependent types, type theory, type-preserving compilation, closure conversion}
\begin{document}
\begin{abstract}
Dependently typed languages such as Coq are used to specify and verify the full functional
correctness of source programs.
Type-preserving compilation can be used to preserve these specifications and proofs of correctness
through compilation into the generated target-language programs.
Unfortunately, type-preserving compilation of dependent types is hard.
In essence, the problem is that dependent type systems are designed around high-level compositional abstractions
to decide type checking, but compilation interferes with the type-system rules for reasoning about
run-time terms.

We develop a type-preserving closure-conversion translation from the Calculus of Constructions
(CC) with strong dependent pairs ($\Sigma$ types)---a subset of the core language of Coq---to
a type-safe, dependently typed compiler intermediate language named CC-CC.
The central challenge in this work is how to translate the source type-system rules for reasoning about
functions into target type-system rules for reasoning about closures.
To justify these rules, we prove soundness of CC-CC by giving a model in CC.
In addition to type \nonbreaking{preservation, we prove correctness of separate compilation.}
\end{abstract}
\maketitle

\section{Introduction}
\label{sec:intro}
Full-spectrum dependently typed programming languages such as Coq have had tremendous impact on the
formal verification of large-scale software.
Coq has been used to specify and prove the full functional correctness the CompCert C
compiler~\cite{leroy2009:compcert-jfp}, the CertiKOS OS kernel~\cite{gu2016,gu2015}, and implementations of
cryptographic primitives and protocols~\cite{barthe2009,appel2015}.
The problem is that these proofs are about \emph{source programs}, but we need guarantees about the
\emph{target programs}, generated by compilers, that actually end up running on machines.
Projects such as CertiCoq~\cite{anand2017}, which aims to build a verified compiler for
Coq in Coq, are a good first step.
Unfortunately, CertiCoq throws out type information before compilation.
This makes it difficult to ensure that the invariants of verified programs are respected when linking.
A similar problem occurs when we extract a proven correct Coq program
\im{e} to OCaml, then link with some unverified OCaml component \im{f}
that violates the invariants of \im{e} and causes a segfault.
Since Coq types are not preserved into OCaml, there is no way to type check
\im{f} and flag that we should not link \im{f} with \im{e}.
The state of the art is to tell the programmer to be careful.

Type-preserving compilation is the key to solving this problem.
Types are useful for enforcing invariants in source
programs, and we can similarly use them to check invariants when linking target programs.
With type-preserving compilation, we could compile \im{e} and preserve its specifications into
a typed target language.
Then we could use type checking at link time to verify that all components match the invariants that
\im{e} was originally verified against.
Once we have a whole program after linking all components in a low-level typed---perhaps dependently typed---assembly language,
there would no longer be a need to enforce invariants, so types could be erased to generate (untyped)
machine code.
Preserving \emph{full-spectrum} dependent types has additional benefits---we could
preserve proofs of full functional correctness into the generated code!

The goal in type-preserving compilation is not to develop new compiler translations, but to adapt
existing translations so that they perform the same function but also preserve typing invariants.
Unfortunately, these two goals are in conflict, particularly as the typing invariants become richer.
The richer the invariants the type system can express, the less freedom the compiler is permitted, and
the more work required to establish typing invariants in the transformed code.

In the case of full-spectrum dependently typed languages, type-preserving compilation is hard.
The essential problem is that compiler transformations disrupt the syntactic reasoning used by the
type system to decide type checking.
With full-spectrum dependent types, any runtime term can appear in types, so the type system includes
rules for reasoning about equivalence and sometimes partially evaluating runtime terms during type
checking.
This works well in high level, functional languages such as the core language of Coq, but when
compilers transform high-level language concepts into low-level machine concepts, we need new rules
for how to reason about terms during type checking.

\newcommand{\mcloe}[2]{\langle\!\langle#1,#2\rangle\!\rangle}
In the case of closure conversion, the problem is that, unlike in simply typed languages, free term
variables are bound in \emph{types} as well as terms.
Intuitively, we translate a simply typed function \im{\Gamma \vdash \lambda x:A.e : A \to
B} into a closure \im{\Gamma \vdash \mcloe{(\lambda \Gamma,x:A.e)}{\dom{\Gamma}} : A \to B} where the code of the function
is paired with its environment, and the code now receives its environment as an explicit argument.
Note that the environment is \emph{hidden} in the type of the closure so that two functions of the
same type but with different environment still have the same type.\footnote{Normally, we use existential
types to hide the environment, but as we will see in \fullref[]{sec:idea}, existential types cause
problems with dependent types.}
With dependent types, the \emph{type} of a closure may refer to free variables from the
environment.
That is, in \im{\Gamma \vdash \lambda x:A.e : \Pi x:A.B}, variables from \im{\Gamma} can appear in
\im{A} and \im{B}.
After closure conversion, how can we keep the environment hidden in the type when the type must refer
to the environment?
That is, in the closure converted version of the above example \im{\Gamma \vdash \mcloe{(\lambda
\Gamma,x:A.e)}{env} : \Pi x:A.B}, how can \im{A} and \im{B} refer to \im{env} if \im{env} must remain
hidden in the type?

We solve this problem for type-preserving closure conversion of the Calculus of Constructions with
\(\Sigma\) types (CC)---a subset of the core language of Coq, and a calculus that is representative
of full-spectrum dependently typed languages.
Closure conversion transforms first-class functions with free variables into closures that pair
closed, statically allocated code with a dynamically allocated environment containing the values
of the free variables.
There are two major challenges in designing new type-system rules for closures, which we discuss at a
high-level in \fullref[]{sec:idea} before we formally present our results.
In short, we need new type-system rules for reasoning about closures, and a way to synchronize the
\emph{type} of a closure, which depends on free variables, with the type of (closed) code, which
cannot depend on free variables.

\paragraph{Contributions} We make the following contributions:
\begin{enumerate}[leftmargin=*]
  \item We design and prove the consistency of CC-CC, a full-spectrum dependently typed compiler IL
    with support for statically reasoning about closures, \fullref[]{sec:target}.
    The proof of consistency also guarantees type safety of any programs in CC-CC---\ie, linking
    any two components in CC-CC is guaranteed to have well-defined behavior.
  \item We give a typed closure-conversion translation from CC to CC-CC
    \fullref[]{sec:cc}.
  \item Leveraging the type-preservation proof, we prove that this translation is correct with respect
    to separate compilation, \ie, linking components in CC and then running to a value is equivalent
    to first compiling the components separately and then linking in CC-CC.
    \omitthis{
    \item From the type-preservation proof and our proof of consistency for CC-CC, we prove
    preservation and reflection of definitional equivalence \fullref[]{sec:cc}.
    This theorem resembles \emph{full abstraction}, which has applications in secure
    compilation, although full abstraction requires preservation and reflection of contextual equivalence.
  }
\end{enumerate}

Next, we introduce CC (\fullref[]{sec:source}), both to introduce our source language and to formally
introduce dependent types, before presenting the central problem with typed closure conversion, and the
main idea behind our solution (\fullref[]{sec:idea}).
Elided parts of figures and proofs are included in our \material.

\newcommand{\FigCCSyntax}[1][t]{
  \begin{figure}[#1]
    \begin{displaymath}
      \begin{array}{lr@{\hspace{1em}}c@{\hspace{1em}}l}
        \bnflabel{Universes} &
        \sU & \!\!\bnfdef & \sstarty \bnfalt \sboxty
        \techrptonly{\\[6pt]}
        \paperonly{\\}

        \bnflabel{Expressions} & \se,\sA,\sB & \bnfdef & \sx
                                                         \bnfalt \sstarty
                                                         \bnfalt \salete{\sx}{\se}{\sA}{\se}
                                                         \bnfalt \spity{\sx}{\sA}{\sB}
        \paperonly{\\ &&}                    \bnfalt \paperonly{&}
                                                         \sfune{\sx}{\sA}{\se}
                                                         \bnfalt \sappe{\se}{\se}
                                                         \bnfalt \ssigmaty{\sx}{\sA}{\sB}
        \\
                             &&\bnfalt& \sdpaire{\seone}{\setwo}{\ssigmaty{\sx}{\sA}{\sB}}
                                        \bnfalt \sfste{\se}
                                        \bnfalt \ssnde{\se}
        \techrptonly{\\[6pt]}
        \paperonly{\\}

        \bnflabel{Environments} &
        \slenv & \bnfdef & \cdot \bnfalt \slenv,\sx:\sA \bnfalt \slenv,\sx = \se:\sA
        \techrptonly{\\[6pt]}
        \paperonly{\\}
      \end{array}
    \end{displaymath}
    \paperonly{\vspace{1ex}}
    \caption{CC Syntax}
    \label{fig:cc:syntax}
    \paperonly{\vspace{-1ex}}
  \end{figure}
}

\newcommand{\FigCCConv}[1][t]{
  \begin{figure}[#1]
    \judgshape{\sstepjudg{\slenv}{\se}{\sepr}}
    \paperonly{\vspace{-1.5ex}}
    \begin{displaymath}
      \begin{array}{rlll}
        \sx & \step_{\delta} & \se & \where{\sx = \se : \sA \in \slenv}
        \techrptonly{\\[4pt]}
        \paperonly{\\}
        \salete{\sx}{\se}{\sA}{\seone} & \step_{\zeta} & \subst{\seone}{\se}{\sx}
        \techrptonly{\\[4pt]}
        \paperonly{\\}
        \sappe{(\sfune{\sx}{\sA}{\seone})}{\setwo} & \step_{\beta} & \subst{\seone}{\setwo}{\sx}
        \techrptonly{\\[4pt]}
        \paperonly{\\}
        \sfste{\spaire{\seone}{\setwo}} & \step_{\pi_{1}} & \seone

        \techrptonly{\\[4pt]}
        \paperonly{\\}
        \ssnde{\spaire{\seone}{\setwo}} & \step_{\pi_{2}} & \setwo
      \end{array}
    \end{displaymath}
    \judgshape{\sequivjudg{\slenv}{\se}{\sepr}}
    \paperonly{\vspace{-1ex}}
    \begin{mathpar}
      \inferrule*[right=\rulename{\figm{\equiv}}]
        {\sstepjudg[\stepstar]{\slenv}{\seone}{\se} \\
         \sstepjudg[\stepstar]{\slenv}{\setwo}{\se}}
        {\sequivjudg{\slenv}{\seone}{\setwo}}

      \inferrule*[right=\rulename{\figm{\equiv}-\figm{\eta_1}}]
        {\sstepjudg[\stepstar]{\slenv}{\seone}{\sfune{\sx}{\sA}{\se}} \\
         \sstepjudg[\stepstar]{\slenv}{\setwo}{\setwopr} \\
         \sequivjudg{\slenv,\sx:\sA}{\se}{\sappe{\setwopr}{\sx}}}
        {\sequivjudg{\slenv}{\seone}{\setwo}}

      \inferrule*[right=\rulename{\figm{\equiv}-\figm{\eta_2}}]
        {\sstepjudg[\stepstar]{\slenv}{\seone}{\seonepr} \\
         \sstepjudg[\stepstar]{\slenv}{\setwo}{\sfune{\sx}{\sA}{\se}} \\
         \sequivjudg{\slenv,\sx:\sA}{\sappe{\seonepr}{\sx}}{\se}}
        {\sequivjudg{\slenv}{\seone}{\setwo}}
    \end{mathpar}
    \caption{CC Conversion and Equivalence}
    \label{fig:cc:conv}
    \paperonly{\vspace{-1ex}}
  \end{figure}
}

\newcommand{\FigCCTyping}[1][t]{
  \begin{figure}[#1]
    \judgshape{\styjudg{\slenv}{\se}{\sA}}
    \paperonly{\vspace{-.5ex}}
    \begin{mathpar}
      \inferrule*[right=\rulename{Ax-*}]
      {\swf{\slenv}}
      {\styjudg{\slenv}{\sstarty}{\sboxty}}

      \inferrule*[right=\rulename{Var}]
      {(\sx : \sA \in \slenv \text{ or } \sx = \se : \sA \in \slenv) \\
        \swf{\slenv}}
      {\styjudg{\slenv}{\sx}{\sA}}

      \inferrule*[right=\rulename{Let}]
      {\styjudg{\slenv}{\se}{\sA} \\
        \styjudg{\slenv,\sx=\se:\sA}{\sepr}{\sB}}
      {\styjudg{\slenv}{\salete{\sx}{\se}{\sA}{\sepr}}{\subst{\sB}{\se}{\sx}}}

     \inferrule*[right=\rulename{Prod-*}]
        {\styjudg{\slenv,\sx:\sA}{\sB}{\sstarty}}
        {\styjudg{\slenv}{\spity{\sx}{\sA}{\sB}}{\sstarty}}

     \inferrule*[right=\rulename{Prod-\(\square\)}]
        {\styjudg{\slenv,\sx:\sA}{\sB}{\sboxty}}
        {\styjudg{\slenv}{\spity{\sx}{\sA}{\sB}}{\sboxty}}

      \inferrule*[right=\rulename{Lam}]
        {\styjudg{\slenv,\sx:\sA}{\se}{\sB}}
        {\styjudg{\slenv}{\sfune{\sx}{\sA}{\se}}{\spity{\sx}{\sA}{\sB}}}

        \inferrule*[right=\rulename{App}]
        {\styjudg{\slenv}{\se}{\spity{\sx}{\sApr}{\sB}} \\
         \styjudg{\slenv}{\sepr}{\sApr}}
        {\styjudg{\slenv}{\sappe{\se}{\sepr}}{\subst{\sB}{\sepr}{\sx}}}

        \inferrule*[right=\rulename{Sig-*}]
        {\styjudg{\slenv}{\sA}{\sstarty} \\
          \styjudg{\slenv,\sx:\sA}{\sB}{\sstarty}}
        {\styjudg{\slenv}{\ssigmaty{\sx}{\sA}{\sB}}{\sstarty}}

        \inferrule*[right=\rulename{Sig-\(\square\)}]
        {\styjudg{\slenv,\sx:\sA}{\sB}{\sboxty}}
        {\styjudg{\slenv}{\ssigmaty{\sx}{\sA}{\sB}}{\sboxty}}

        \inferrule*[right=\rulename{Fst}]
        {\styjudg{\slenv}{\se}{\ssigmaty{\sx}{\sA}{\sB}}}
        {\styjudg{\slenv}{\sfste{\se}}{\sA}}

        \inferrule*[right=\rulename{Snd}]
        {\styjudg{\slenv}{\se}{\ssigmaty{\sx}{\sA}{\sB}}}
        {\styjudg{\slenv}{\ssnde{\se}}{\subst{\sB}{\sfste{\se}}{\sx}}}

      \inferrule*[right=\rulename{Conv}]
        {\styjudg{\slenv}{\se}{\sA} \\
         \styjudg{\slenv}{\sB}{\sU} \\
         \sequivjudg{\slenv}{\sA}{\sB}}
        {\styjudg{\slenv}{\se}{\sB}}
    \end{mathpar}
    \caption{CC Typing}
    \label{fig:cc:type}
  \end{figure}
}

\newcommand{\FigCCWF}[1][t]{
  \begin{figure}[#1]
    \judgshape{\swf{\slenv}}
    \paperonly{\vspace{-1ex}}
    \begin{mathpar}
        \inferrule*[right=\rulename{W-Empty}]
        {~}
        {\swf{\cdot}}

      \inferrule*[right=\rulename{W-Assum}]
        {\swf{\slenv} \\
         \styjudg{\slenv}{\sA}{\sU}}
        {\swf{\slenv,\sx:\sA}}

      \inferrule*[right=\rulename{W-Def}]
        {\swf{\slenv} \\
         \styjudg{\slenv}{\se}{\sA} \\
         \styjudg{\slenv}{\sA}{\sU}}
        {\swf{\slenv,\sx = \se :\sA}}
    \end{mathpar}
    \paperonly{\vspace{1ex}}
    \caption{CC Well-Formed Environments}
    \label{fig:cc:wf}
    \paperonly{\vspace{1ex}}
  \end{figure}
}

\newcommand{\FigCCSubst}[1][t]{
  \begin{figure}[#1]
    \begin{displaymath}
      \begin{array}{lrcl}
        \bnflabel{Closing Substitution} &
        \ssubst & \bnfdef & \cdot \bnfalt \mapext{\ssubst}{\sx}{\se}
      \end{array}
    \end{displaymath}

    \judgshape{\ssubstok{\slenv}{\ssubst}} (valid closing substitution)
    \begin{mathpar}
      \ssubstok{\slenv}{\ssubst} \defeq \forall \sx:\st \in \slenv. \styjudg{}{\ssubst(\sx)}{\st}.
    \end{mathpar}
  \caption{CC Linking}
  \label{fig:cc:link}
  \end{figure}
}

\section{Source: Calculus of Constructions (CC)}
\label{sec:source}

\FigCCSyntax
Our source language is a variant of the Calculus of Constructions (CC) extended with strong dependent
pairs (\(\Sigma\) types) and \(\eta\)-equivalence for functions, which we typeset in a
\sfonttext{non-bold, blue, sans-serif font}.
This model is based on the CIC specification used in Coq~\cite[Chapter 4]{coq2017}.
For brevity, we omit base types from this formal system but will freely use base types like natural
numbers in examples.

We present the syntax of CC in \fullref[]{fig:cc:syntax}.
Universes, or sorts, \im{\sU} are essentially the types of types.
CC includes one impredicative universe \im{\sstarty}, and one predicative universe \im{\sboxty}.
Expressions have no explicit distinction between terms, types, or kinds, but we usually use the
meta-variable \im{\se} to evoke a term expression and \im{\sA} or \im{\sB} to evoke a type expression.
Expressions include names \im{\sx}, the universe \im{\sstarty}, functions
\im{\sfune{\sx}{\sA}{\se}}, application \im{\sappe{\seone}{\setwo}}, dependent function types
\im{\spity{\sx}{\sA}{\sB}}, dependent let \im{\salete{\sx}{\se}{\sA}{\sepr}}, \(\Sigma\) types
\im{\ssigmaty{\sx}{\sA}{\sB}}, dependent pairs
\im{\sdpaire{\seone}{\setwo}{\ssigmaty{\sx}{\sA}{\sB}}}, first projections \im{\sfste{\se}} and second
projections \im{\ssnde{\se}}.
The universe \im{\sboxty} is only used by the type system and is not a valid term.
As syntactic sugar, we omit the type annotations on dependent let \im{\slete{\sx}{\se}{\sepr}} and on
dependent pairs \im{\spaire{\seone}{\setwo}} when they are irrelevant or obvious from context.
We also write function types as \im{\sfunty{\sA}{\sB}} when the result \im{\sB} does not depend on the
argument.
Environments \im{\slenv} include assumptions \im{\sx:\sA} that a name \im{\sx} has type
\im{\sA}, and definitions \im{\sx = \se : \sA} that name \im{\sx} refers to \im{\se} of type \im{\sA}.

\FigCCConv
We define conversion, or reduction, and definitional equivalence for CC in \fullref[]{fig:cc:conv}.
Conversion here is defined for deciding equivalence between types (which include terms), but it can
also be viewed as the operational semantics of CC terms.
The small-step reduction \im{\sstepjudg{\slenv}{\se}{\sepr}} reduces the expression \im{\se} to the
term \im{\sepr} under the local environment \im{\slenv}, which we usually leave implicit
for brevity.
The local environment is necessary to convert a name to its definition.
Each conversion rule is labeled, and when we refer to conversion with an unlabeled arrow \im{\se \step
\sepr}, we mean that \im{\se} reduces to \im{\sepr} by \emph{some} reduction rule, \ie, either
\im{\step_\delta}, \im{\step_\zeta}, \im{\step_\beta}, \im{\step_{\pi_1}}, or \im{\step_{\pi_2}}.
We write \im{\sstepjudg[\stepstar]{\slenv}{\se}{\sepr}} to mean the reflexive, transitive,
contextual closure of the relation \im{\sstepjudg[\step]{\slenv}{\se}{\sepr}}.
Essentially, \im{\se \stepstar \sepr} runs \im{\se} using the \im{\step} relation any number of
times, under any arbitrary context.
\techrptonly{
  The \im{\stepstar} relation introduces a definition into the local environment when descending into
  the body of a dependent let.
  That is, we have the following closure rule for \im{\stepstar}.

  \begin{mathpar}
    \inferrule
    {\sstepjudg[\stepstar]{\slenv,\sx = \se}{\seone}{\setwo}}
    {\sstepjudg[\stepstar]{\slenv}{\slete{\sx}{\se}{\seone}}{\slete{\sx}{\se}{\setwo}}}
  \end{mathpar}
}

\noindent We define equivalence \im{\sequivjudg{\slenv}{\se}{\sepr}} as reduction in the
\im{\stepstar} relation up to \(\eta\)-equivalence, as in Coq~\cite[Chapter 4]{coq2017}.

\FigCCTyping
In \fullref[]{fig:cc:type}, we present the typing rules.
The type system is standard.

Functions \im{\sfune{\sx}{\sA}{\se}} have dependent function type \im{\spity{\sx}{\sA}{\sB}}
(\rulename{Lam}).
The dependent function type describes that the function takes an argument, \im{\sx}, of type \im{\sA},
and returns something of type \im{\sB} where \im{\sB} may refer to, \ie, \emph{depends on}, the
value of the argument \im{\sx}.
We can use this to write polymorphic functions, such as the polymorphic identity function described by
the type \im{\spity{\sA}{\sstarty}{\spity{\sx}{\sA}{\sA}}}, or functions with pre/post
conditions, such as the division function described by
\im{\spity{\sx}{\sNat}{\spity{\sy}{\sNat}{\spity{\_}{\sfont{\sy > 0}}{\sNat}}}}, which statically ensures
that we never divide by zero by requiring a proof that its second argument is greater than zero.

Applications \im{\sappe{\seone}{\setwo}} have type \im{\subst{\sB}{\setwo}{\sx}} (\rulename{App}),
\ie, the result type \im{\sB} of the function \im{\seone} with the argument \im{\setwo} substituted
for the name of the argument \im{\sx}.
Using this rule and our example of the division function \im{\sfont{div} :
  \spity{\sx}{\sNat}{\spity{\sy}{\sNat}{\spity{\_}{\sfont{\sy > 0}}{\sNat}}}}, we type check the term
\im{\sappe{\sfont{div}}{\sfont{4~2}} : {\spity{\_}{\sfont{2 > 0}}{\sNat}}}.
Notice that the term variable \im{\sy} in the type has been replaced with the value of the argument
\im{2}.

Dependent pairs \im{\spaire{\seone}{\setwo}} have type \im{\ssigmaty{\sx}{\sA}{\sB}}
(\rulename{Pair}).
Again, this type is a binding form.
The type \im{\sB} of the second component of the pair can refer to the first component of the pair by
the name \im{\sx}.
We see in the rule \rulename{Snd} that the type of \im{\ssnde{\se}} is
\im{\subst{\sB}{\sfste{\se}}{\sx}}, \ie, the type \im{\sB} of the second component of the pair with
the name \im{\sx} substituted by \im{\sfste{\se}}.
We can use this to encode refinement types, such as the describing positive numbers by
\im{\ssigmaty{\sx}{\sNat}{\sfont{\sx > 0}}}, \ie, a pair of a number \im{\sx} with a proof that
\im{\sx} is greater than \im{\sfont{0}}.

Since types are also terms, we have typing rules for types.
The type of \im{\sstarty} is \im{\sboxty}.
We call \im{\sstarty} the universe of small types and \im{\sboxty} the universe of large types.
Intuitively, small types are the types of programs while large types are the types of types and
type-level computations.
Since no user can write down \im{\sboxty}, we need not worry about the type of \im{\sboxty}.
In \rulename{Prod-*}, we assign the type \im{\sstarty} to the dependent function type when the result
type is also \im{\sstarty}.
This rule allows \emph{impredicative} functions, since it allows forming a function that quantifies
over large types but is in the universe of small types.
The rule \rulename{Prod-\im{\square}} looks similar, but is implicitly predicative, since there is no
universe larger than \im{\sboxty} to quantify over.
(We could combine the rules for \(\Pi\), but explicit separation helps clarify the issue of
predicativity when compared with the rules for \(\Sigma\) types, which cannot be combined.)
Formation rules for \(\Sigma\) types have an important restriction: it is
unsound to allow impredicativity in strong dependent pairs~\cite{coquand1986,hook1986}.
The \rulename{Sig-*} rule only allows quantifying over a small type when forming a small dependent pair.
The \rulename{Sig-\im{\square}} rule allows quantifying over either small or large types when forming a
large \(\Sigma\).
As usual in models of dependent type theory, we exclude base types, although they are simple
to add.

The rule \rulename{Conv} allows resolving type equivalence and reducing terms in types.
For instance, if we want to show that \im{\se : \ssigmaty{\sx}{\sNat}{\sfont{\sx = 2}}} but we have \im{\se :
  \ssigmaty{\sx}{\sNat}{\sfont{\sx = 1 + 1}}}, the \rulename{Conv} rule performs this reduction.
Note while our equivalence relation is untyped, the \rulename{Conv}
rule ensures that \im{\sA} and \im{\sB}
are well-typed before appealing to equivalence, ensuring decidability.
\mbox{(It is a standard lemma that if \im{\styjudg{\slenv}{\se}{\sA}}, then \im{\styjudg{\slenv}{\sA}{\sU}}~\cite{luo1989}.)}

Finally, we extend well-typedness to well-formedness of environments \im{\swf{\slenv}} in
\fullref[]{fig:cc:wf}.

\FigCCWF

\section{Main Ideas}
\label{sec:idea}

Closure conversion makes the implicit closures from a functional language explicit to facilitate
statically allocating functions in memory.
The idea is to translate each first-class function into an explicit closure, \ie, a pair of closed
\emph{code} and an environment data structure containing the values of the free variables.
We use \emph{code} to refer to functions with no free variables, as in a closure-converted language.
The environment is created dynamically, but the closed code can be lifted to the top-level and
statically allocated.
Consider the following example translation.

\begin{inlinedisplay}
  \begin{array}{@{\hspace{0pt}}r@{\hspace{1ex}}c@{\hspace{1ex}}l}
  (\lambda x.y)^+ &=& \left<(\lambda
                      n\,x.\,\kwopen{\text{let}}y\mathrel{=}(\kwopen{\pi_1}n)\mathrel{\text{in}}
                      y), \left<y\right>\right> \\
  ((\lambda x.y)~\text{true})^+ &=& \begin{stackTL}
                           \kwopen{\text{let}}\left<f, n\right>\mathrel{=}\left<(\lambda
                               n\,x.\,\kwopen{\text{let}}y\mathrel{=}(\kwopen{\pi_1}n)\mathrel{\text{in}}y),
                            \left<y\right>\right>\mathrel{\text{in}}
                           \\\quad f~n~\text{true}
                          \end{stackTL}
    \end{array}
\end{inlinedisplay}

We write \im{e^+} to indicate the translation of an expression \im{e}.
We translate each function into a pair of code and its environment.
The code accepts its free variables in an environment argument, \im{n} (since \im{n} sounds similar to
\emph{env}).
In the body of the code, we bind the names of all free variables by projecting from this
environment \im{n}.
To call a closure, we apply the code to its environment and its argument.

This translation is not type preserving since the structure of the environment shows up in the type.
For example, the following two functions have the same type in the source, but end up with different
types in the target.

\begin{inlinedisplay}
  \begin{array}{rcl}
  (\lambda x.y)^+ &:&  ((\text{Nat} \times \text{Nil}) \to \text{Nat} \to \text{Nat})  \times (\text{Nat} \times \text{Nil})
  \\
  (\lambda x.x)^+ &:&  (\text{Nil} \to \text{Nat} \to \text{Nat})  \times \text{Nil}
  \end{array}
\end{inlinedisplay}

This is a well-known problem with typed closure conversion, so we could try the well-known
solution~[\citealp{minamide1996}, \citealp{morrisett1998:reccc}, \citealp{morrisett1998:ftotal}, \citealp{ahmed2008},
\citealp{perconti2014}, \citealp{new2016}].
(Spoiler alert: it won't work for CC.)
We represent closures as an existential package of a pair of the function and its environment, whose
type is hidden.
The existential type hides the structure of the environment in the type.

\begin{inlinedisplay}
  \begin{array}{rcl}
    (\lambda x.y)^+ &:& \exists \alpha. (\alpha \to \text{Nat} \to \text{Nat})  \times \alpha
    \\
    (\lambda x.x)^+ &:& \exists \alpha. (\alpha \to \text{Nat} \to \text{Nat})  \times \alpha
  \end{array}
\end{inlinedisplay}

This works well for simply typed and polymorphic languages, but when we move to a dependently typed
language, we have new challenges.
First, the environment must now be ordered since the type of each new variable can depend on all prior
variables.
Second, types can now refer to variables in the closure's environment.
Recall the polymorphic identity function from earlier.

\begin{inlinedisplay}
  \begin{array}{rcl}
  \sfune{\sA}{\sstarty}{\sfune{\sx}{\sA}{\sx}} & : & \spity{\sA}{\sstarty}{\spity{\sx}{\sA}{\sA}}
  \end{array}
\end{inlinedisplay}

This function takes a type variable, \im{\sA}, whose type is \im{\sstarty}.
It returns a function that accepts an argument \im{\sx} of type \im{\sA} and returns it.
There are two closures in this example: the outer closure has no free variables, and thus will have an
empty environment, while the inner closure \im{\sfune{\sx}{\sA}{\sx}} has \im{\sA} free, and thus
\im{\sA} will appear in its environment.

Below, we present the translation of this example using the previous translation.
We typeset target language terms produced by our translation in a
\tfonttext{bold, red, serif font}.
We produce two closures, one nested in the other.
Note that we translate source variables \im{\sx} to \im{\tx}.
In the outer closure, the environment is empty \im{\tnpaire{}}, and the code simply returns the
inner closure.
The inner closure has the argument \im{\tA} from the outer code in its environment.
Since the inner code takes an argument of type \im{\tA}, we project \im{\tA} from the
environment \emph{in the type annotation} for \im{\tx}.
That is, the inner code takes an environment \im{\tntwo} that contains \im{\tA}, and the type
annotation for \im{\tx} is \im{\tx : \tfste{\tntwo}}.
The type \im{\tfste{\tntwo}} is unusual, but is no problem since dependent types allow computations in
types.

\begin{inlinedisplay}
  \begin{stackTL}
    \tcloe{\tnfune{(\tnone:\tunitty,\tA:\tstarty)}{
        \tcloe{\tnfune{(\tntwo:
            \tpairty{\tstarty}{\tunitty},\tx:\tfste{\tntwo})}{\tx}
        }{\tnpaire{\tA,\tnpaire{}}}}}{\tnpaire{}} ~~ : ~~\\
    ~~\begin{array}{@{\hspace{0pt}}l@{\hspace{0pt}}l@{\hspace{0pt}}l}
      \texistty{\talphaone}{\tstarty}{&\,\tpairty{(\tnpity{&(\tnone:\talphaone,\tA:\tstarty)}{
            \\ &&
            \texistty{\talphatwo}{\tboxty}{\tpairty{(\tnpity{(\tntwo:\talphatwo,\tx:\tfste{\tntwo})}{\tfste{\tntwo}})}{\talphatwo}}})\,}{\talphaone}}
    \end{array}
    \end{stackTL}
\end{inlinedisplay}

We see that the inner code on its own is well typed with the closed type
\im{\tnpity{(\tntwo:\tpairty{\tstarty}{\tunitty},\tx:\tfste{\tntwo})}{\tfste{\tntwo}}}.
That is, the code takes two arguments: the first argument \im{\tntwo} is the environment, and the
second argument \im{\tx} is a value of type \im{\tfste{\tntwo}}.
The result type of the code is also \im{\tfste{\tntwo}}.
As discussed above, we must hide the type of the environment to ensure type preservation.
That is, when we build the closure
\im{\tcloe{\tnfune{(\tntwo:
      \tpairty{\tstarty}{\tunitty},\tx:\tfste{\tntwo})}{\tx}
    }{\tnpaire{\tA,\tnpaire{}}}}, we must hide the type of the environment \im{\tnpaire{\tA,\tnpaire{}}}.
We use an existential type to quantify over the type \im{\talphatwo} of the environment, and we
produce the type \im{\tnpity{(\tntwo:\talphatwo,\tx:\tfste{\tntwo})}{\tfste{\tntwo}}} for the
code in the inner closure.
But this type is trying to take the first projection of something of type \im{\talphatwo}.
We can only project from pairs, and something of type \im{\talphatwo} isn't a pair!
In hiding the type of the environment to recover type preservation, we've broken type preservation for
dependent types.

A similar problem also arises when closure converting System F, since System F also features type
variables~\cite{minamide1996,morrisett1998:ftotal}.
To understand our solution, it is important to understand why the solutions that have historically
worked for System F do not scale to CC.
We briefly present these past results and why they do not scale before moving on to the key idea behind
our translation.
Essentially, past work using existential types relies on assumptions about computational relevance,
parametricity, and impredicativity that do not necessarily hold in full-spectrum dependent type
systems.

\subsection{Why the well known solution doesn't work}
\citet{minamide1996} give a translation that encodes
closure types using existential types, a standard type-theoretic
feature that they use to make environment hiding explicit in the
types. In essence, they encode closures as objects; the environment
can be thought of as the private field of an object.  Since then, the use of
existential types to encode closure types has been standard in all
work on typed closure conversion.

However, the use of existential types to encode closures in a dependently
typed setting is problematic.  First, let us just consider closure
conversion for System F. As \citet{minamide1996} observed, there is a
problem when code must be closed with respect to both term and
\emph{type} variables.  This problem is similar to the one
discussed above:
when closure environments contain type variables,
since those type variables can also appear in the closure's type, the closure's type needs
to project from the closure's (hidden) environment which has type \im{\alpha}.
To fix the problem, they extend their target language with \emph{translucency} (essentially, a kind of
type-level equivalence that we now call singleton types), type-level pairs, and kinds.
All of these features can be encoded in CC, so we could extend their translation essentially as
follows.

\begin{inlinedisplay}
  \begin{array}{rcl}
    \!\!\!(\spity{\sx}{\sA}{\sB})^+ \defeq
    \texistty{\talpha}{\tU}{\texistty{\tn}{\talpha}{\tcodety{\tnpr{\,:\,}\talpha, \ty{\,:\,}\tnpr = \tn,\tx{\,:\,}\sA^+}{\sB^+}}}
  \end{array}
\end{inlinedisplay}

In this translation, we would existentially quantify over the type of the environment \im{\talpha},
the \emph{value} of the environment \im{\tn}, and generate code that requires an environment
\im{\tnpr} plus a proof that the code is only ever given the environment \im{\tn} as the argument
\im{\tnpr}.
The typing rule for an existential package copies the existential value into the type.
That is, for a closure \im{\tnpackoe{\tApr,\tv,\te}} of type
\im{\texistty{\talpha}{\tU}{\texistty{\tn}{\talpha}{\tcodety{\tnpr{\,:\,}\talpha,
        \ty{\,:\,}\tnpr = \tn,\tx{\,:\,}\sA^+}{\sB^+}}}}, the typing rule for
\im{\tfont{pack}} requires that we show \im{\te : \tcodety{\tnpr{\,:\,}\tApr,
        \ty{\,:\,}\tnpr = \tv,\tx{\,:\,}\sA^+}{\sB^+}}; notice that the variable \im{\tn}
has been replaced by the value of the environment \im{\tv}.
The equality \im{\tnpr = \tv} essentially unifies projections from
\im{\tnpr} with projections from \im{\tv}, the list of
free variables representing the actual environment.

The problem with this translation is that it relies on \emph{impredicativity}.
That is, if \im{(\spity{\sx}{\sA}{\sB}) : \sstarty}, then we require that
\im{(\spity{\sx}{\sA}{\sB})^+ : \tstarty}.
Since the existential type quantifies over a type in an arbitrary universe \im{\tU} but must be in the
base universe \im{\tstarty}, the existential type must be impredicative.
Impredicative existential types (weak dependent sums) are consistent on their own, but impredicativity
causes inconsistency when combined with other features, including
computational relevance and Coq's universe hierarchy.
In Coq by default, the base computationally relevant universe \texttt{Set} is predicative, so this
translation would not work.
There is a flag to enable impredicative \texttt{Set}, but this can introduce inconsistency with some
axioms, such as a combination of the law of excluded middle plus the axiom of
choice, or ad-hoc polymorphism~\cite{boulier2017}.
Even with impredicative \texttt{Set}, there are computationally relevant universes higher in Coq's universe
hierarchy, and it would not be safe to allow impredicativity at more than one universe.
Furthermore, some dependently typed languages, such as Agda, do not allow impredicativity at all since
it is the source of paradoxes, such as Girard's paradox.

A second problem arises in developing an \(\eta\) principle, because
the existential type encoding relies on \emph{parametricity} to hide
the environment.
So, any \(\eta\) principle would need
to be justified by a parametric relation on environments.
Internalizing parametricity for dependent type theory is an active area of
research~\cite{krishnaswami2013,bernardy2012,keller2012,nuyts2017} and not all dependent type
theories admit parametricity~\cite{boulier2017}.

Later, \citet{morrisett1998:ftotal} improved the existential-type translation
for System F, avoiding translucency and kinds by relying on \emph{type
  erasure} before runtime, which meant that their
code didn't have to close over type variables.  This translation does
not apply in a dependently typed setting, since now types can contain
term variables not just ``type erasable'' type variables.

\subsection{Our translation}
To solve type-preserving closure conversion for CC, we avoid
existential types altogether and instead take inspiration from the
so-called ``abstract closure conversion'' of \citet{minamide1996}.
They add new forms to the target language to represent code and
closures for a simply typed source language.  We scale the design of
these forms to dependent types.

Adapting and scaling even a well-known translation to dependent type
theory is complex.
Recall from \fullref[]{sec:intro} that the goal of our compiler is to implement the same functionality
as standard closure conversion, but preserve the typing invariants.
Operationally, our translation will do the obvious thing, but the
complexity of our translation comes from the types.
In the case of dependent types, the complexity (and usefulness) of the type system comes from the
ability to interpret terms as logical formulas that are capable of expressing mathematical theorems and
proofs.
When we add new typing rules to the target language, we must justify
that the new system is still consistent when interpreted as a logic.
Moreover, we must design new equivalence rules for terms and, ideally, ensure that equivalence is still
decidable.

In the case of closure conversion, we are transforming the fundamental feature of dependent type
theory: functions and \(\Pi\) types.
Functions can be interpreted as proofs of universal properties represented by \(\Pi\) types.
This transformation requires dependent types for both code and closures, and a novel equivalence
principle for closures.
But in proving the new rules consistent, we must not just prove that
we do not allow proofs of False in the new system, but also establish
that all universal properties and their proofs that were representable
and provable in the source language are still representable and
provable in the target language.
We leave the proofs of these properties until \fullref[]{sec:consistency}, but present the key typing
and equivalence rules now.

We extend our type system with primitive types for code and closures.
We represent code as \im{\tnfune{(\tn:\tApr,\tx:\tA)}{\teone}} of
the \emph{code type} \im{\tcodety{\tn:\tApr,\tx:\tA}{\tB}}.
These are still dependent types, so \im{\tn} may appear in both \im{\tA} and \im{\tB}, and
\im{\tx} may appear in \im{\tB}.
Code must be well typed in an empty environment, \ie, when it is closed.
For simplicity, code only takes two arguments.
\paperonly{\vspace{-1ex}}

\begin{mathpar}
  \inferrule*[right=\rulename{Code}]
  {\ttyjudg{\cdot,\tn:\tApr,\tx:\tA}{\te}{\tB}}
  {\ttyjudg{\tlenv}{\tnfune{\tn:\tApr,\tx:\tA}{\te}}{\tcodety{\tn:\tApr,\tx:\tA}{\tB}}}
\end{mathpar}

We represent closures as \im{\tcloe{\te}{\tepr}} of type
\im{\tpity{\tx}{\subst{\tA}{\tepr}{\tn}}{\subst{\tB}{\tepr}{\tn}}}, where
\im{\te} is code and \im{\tepr} is its environment.
We continue to use \im{\tfontsym{\Pi}} types to describe closures; note that ``functions'' in CC
are implicit closures.
The typing rule for closures is:

\begin{mathpar}
  \inferrule*[right=\rulename{Clo}]
  {\ttyjudg{\tlenv}{\te}{\tcodety{\tn:\tApr,\tx:\tA}{\tB}} \\
  \ttyjudg{\tlenv}{\tepr}{\tApr}}
  {\ttyjudg{\tlenv}{\tcloe{\te}{\tepr}}{\tpity{\tx}{\subst{\tA}{\tepr}{\tn}}{\subst{\tB}{\tepr}{\tn}}}}
\end{mathpar}

\noindent We should think of a closure \im{\tcloe{\te}{\tepr}} not as a pair, but as a delayed partial
application of the code \im{\te} to its environment \im{\tepr}.
This intuition is formalized in the typing rule since the environment is substituted into the type,
just as in dependent-function application in CC.

To understand our translation, let us start with the translation of functions.

\begin{inlinedisplay}
  \begin{array}{@{\hspace{0pt}}r@{\hspace{1ex}}c@{\hspace{1ex}}l}
    (\sfune{\sx}{\sA}{\se})^+ &\defeq&
                                     \tcloe{(\tnfune{\begin{stackTL}(\tn:\tnsigmaty{(\txi:\sAi^+\dots)},\tx:\tlete{\tnpaire{\txi\dots}}{\tn}{\sA^+})}{
                                           \\
                                           \tlete{\tnpaire{\txi\dots}}{\tn}{\se^+}})}{\tnpaire{\txi\dots}}
                                     \end{stackTL} \\
                              && \text{ where \im{\sxi:\sAi\dots} are the free variables of \im{\se} and \im{\sA}}
  \end{array}
\end{inlinedisplay}

\noindent The translation of functions is simple to construct.
We know we want to produce a closure containing code and its environment.
We know the environment should be constructed from the free variables of the body of the function,
namely \im{\se}, and, due to dependent types, the type annotation \im{\sA}.

The question is: what should the type translation of \im{\sfontsym{\Pi}} types be?
Let's return to our polymorphic identity function (just the inner closure).
If we apply the above translation, we produce the following for the inner closure.
We know its type by following the typing rules \rulename{Clo} and \rulename{Code} above.

\begin{inlinedisplay}
  \begin{array}{l}
    \tcloe{\tnfune{(\tntwo:\tpairty{\tstarty}{\tunitty},\tx:\tfste{\tntwo})}{\tx}}
    {\tnpaire{\tA,\tnpaire{}}} :
    \\ \tpity{(\tx}{\subst{(\tfste{\tntwo})}{\tnpaire{\tA,\tunite}}{\tntwo})}{\subst{(\tfste{\tntwo})}{\tnpaire{\tA,\tunite}}{\tntwo}}
  \end{array}
\end{inlinedisplay}

We know that the code
\im{\tnfune{(\tntwo:\tpairty{\tstarty}{\tunitty},\tx:\tfste{\tntwo})}{\tx}} has type
\im{\tcodety{\tpairty{\tstarty}{\tunitty},\tx:\tfste{\tntwo}}{\tfste{\tntwo}}}.
Following \rulename{Clo}, we substitute the environment into this type, so we get:

\begin{inlinedisplay}
\tpity{(\tx}{\subst{(\tfste{\tntwo})}{\tnpaire{\tA,\tunite}}{\tntwo})}{\subst{(\tfste{\tntwo})}{\tnpaire{\tA,\tunite}}{\tntwo}}
\end{inlinedisplay}

\noindent So how do we translate the function type \im{\spity{\sx}{\sA}{\sA}} into the closure type
\im{\tpity{(\tx}{\subst{(\tfste{\tntwo})}{\tnpaire{\tA,\tunite}}{\tntwo})}{\subst{(\tfste{\tntwo})}{\tnpaire{\tA,\tunite}}{\tntwo}}}?
Note that this type reduces to \im{\tpity{\tx}{\tA}{\tA}}.
So by the rule \rulename{Conv}, we simply need to translate
\im{\spity{\sx}{\sA}{\sA}} to \im{\tpity{\tx}{\tA}{\tA}}!

The key translation rules are given below.

\begin{inlinedisplay}
  \begin{array}{@{\hspace{0pt}}r@{\hspace{1ex}}c@{\hspace{1ex}}l}
    (\spity{\sx}{\sA}{\sB})^+ &\defeq &\tpity{\tx}{\sA^+}{\sB^+}\\
    (\sfune{\sx}{\sA}{\se})^+ &\defeq&
                                     \tcloe{(\tnfune{\begin{stackTL}(\tn:\tnsigmaty{(\txi:\sAi^+\dots)},\tx:\tlete{\tnpaire{\txi\dots}}{\tn}{\sA^+})}{
                                           \\
                                           \tlete{\tnpaire{\txi\dots}}{\tn}{\se^+}})}{\tnpaire{\txi\dots}}
                                     \end{stackTL} \\
                              && \text{ where \im{\sxi:\sAi\dots} are the free variables of \im{\se} and \im{\sA}}
  \end{array}
\end{inlinedisplay}

A final challenge remains in the design of our target language: we need to know when two closures
are equivalent.
As we just saw, CC partially evaluates terms while type checking.
If two closures get evaluated while resolving type equivalence, we may inline a term into the
environment for one closure but not the other.
When this happens, two closures that were syntactically identical and thus equivalent become
inequivalent.
We discuss this problem in detail in \fullref[]{sec:cc}, but essentially we need to know when two
syntactically distinct closures are equivalent.
Our solution is simple: get rid of the closures and keep inlining things!

\begin{mathpar}
  \inferrule
  {\tequivjudg{\tlenv,\tx:\tA}{\subst{\teone}{\teonepr}{\tn}}{\subst{\tetwo}{\tetwopr}{\tn}}}
  {\tequivjudg{\tlenv}{\tcloe{(\tnfune{(\tn:\tApr,\tx:\tA)}{\teone})}{\teonepr}}{\tcloe{(\tnfune{(\tn:\tApr,\tx:\tA)}{\tetwo})}{\tetwopr}}}
\end{mathpar}

Two closures are equivalent when we inline the environment, free variables or not, and run the body of
the code.
We leave the argument free, too.
We run the bodies of the code to normal forms, then compare the normal forms.
Recall that equivalence runs terms while \emph{type checking} and does not change the program, so the
free variables do no harm.

This equivalence essentially corresponds to an \(\eta\)-principle for closures.
From it, we can derive a normal form for closures \im{\tcloe{\te}{\tepr}} that says the
environment \im{\tepr} contains only free variables, \ie, \im{\tepr = \tnpaire{\txi\dots}}.

The above is an intuitive, declarative presentation, but is incomplete without additional rules.
We use an algorithmic presentation that is similar to the \(\eta\)-equivalence rules for functions
in CC, which we show in \fullref[]{sec:target}.
\newcommand{\FigCCTerm}[1][t]{
  \begin{figure*}[#1]
    \judgshape[~where~{\im{\styjudg{\slenv}{\se}{\st}}}]{\ccjudg{\slenv}{\se}{\st}{\te}}
    \begin{mathpar}
      \inferrule*[right=\rulename{CC-*}]
      {~}
      {\ccjudg{\slenv}{\sstarty}{\sboxty}{\tstarty}}

      \inferrule*[right=\rulename{CC-Var}]
      {~}
      {\ccjudg{\slenv}{\sx}{\sA}{\tx}}

      \inferrule*[right=\rulename{CC-Let}]
      {\ccjudg{\slenv}{\se}{\sA}{\te} \\
        \ccjudg{\slenv}{\sA}{\sU}{\tA} \\
        \ccjudg{\slenv,\sx:\sA}{\sepr}{\sB}{\tepr}}
      {\ccjudg{\slenv}{\salete{\sx}{\se}{\sA}{\sepr}}{\subst{\sB}{\se}{\sx}}{\talete{\tx}{\te}{\tA}{\tepr}}}

      \inferrule*[right=\rulename{CC-Prod-*}]
      {\ccjudg{\slenv}{\sA}{\sU}{\tA} \\
        \ccjudg{\slenv,\sx:\sA}{\sB}{\sstarty}{\tB}}
      {\ccjudg{\slenv}{\spity{\sx}{\sA}{\sB}}{\sstarty}{\tpity{\tx}{\tA}{\tB}}}

      \inferrule*[right=\rulename{CC-Prod-\(\square\)}]
      {\ccjudg{\slenv}{\sA}{\sU}{\tA} \\
        \ccjudg{\slenv,\sx:\sA}{\sB}{\sboxty}{\tB}}
      {\ccjudg{\slenv}{\spity{\sx}{\sA}{\sB}}{\sboxty}{\tpity{\tx}{\tA}{\tB}}}

      \inferrule*[right=\rulename{CC-Lam}]
      {\ccjudg{\slenv,\sx:\sA}{\se}{\sB}{\te} \\
        \ccjudg{\slenv}{\sA}{\sU}{\tA} \\
        \ccjudg{\slenv,\sx:\sA}{\sB}{\sU}{\tB} \\
        \sxi:\sAi\dots{}=\DFV{\slenv}{\sfune{\sx}{\sA}{\se},\spity{\sx}{\sA}{\sB}} \\
        \ccjudg{\slenv}{\sAi}{\sU}{\tAi} \dots}
      {\ccjudg{\slenv}{\sfune{\sx}{\sA}{\se}}{\spity{\sx}{\sA}{\sB}}{
          \tcloe{
            \begin{stackTL}
              (\tnfune{\begin{stackTL}(\tn:\tnsigmaty{(\txi:\tAi\dots)},\tx:\tlete{\tnpaire{\txi\dots}}{\tn}{\tA})}{
                  \\
                  \tlete{\tnpaire{\txi\dots}}{\tn}{\te}})}{
              \end{stackTL}
              \\\tdnpaire{\txi\dots}{\tnsigmaty{(\txi:\tAi\dots)}}}}
        \end{stackTL}}

      \inferrule*[right=\rulename{CC-App}]
      {\ccjudg{\slenv}{\seone}{\spity{\sx}{\sA}{\sB}}{\teone} \\
        \ccjudg{\slenv}{\setwo}{\sA}{\tetwo}}
      {\ccjudg{\slenv}{\sappe{\seone}{\setwo}}{\subst{\sB}{\setwo}{\sx}}{\tappe{\teone}{\tetwo}}}

      \inferrule*[right=\rulename{CC-Sig-*}]
      {\ccjudg{\slenv}{\sA}{\sstarty}{\tA} \\
        \ccjudg{\slenv,\sx:\sA}{\sB}{\sstarty}{\tB}}
      {\ccjudg{\slenv}{\ssigmaty{\sx}{\sA}{\sB}}{\sstarty}{\tsigmaty{\tx}{\tA}{\tB}}}

      \inferrule*[right=\rulename{CC-Sig-\(\square\)}]
      {\ccjudg{\slenv}{\sA}{\sboxty}{\tA} \\
        \ccjudg{\slenv,\sx:\sA}{\sB}{\sboxty}{\tB}}
      {\ccjudg{\slenv}{\ssigmaty{\sx}{\sA}{\sB}}{\sstarty}{\tsigmaty{\tx}{\tA}{\tB}}}

      \inferrule*[right=\rulename{CC-Fst}]
      {\ccjudg{\slenv}{\se}{\ssigmaty{\sx}{\sA}{\sB}}{\te}}
      {\ccjudg{\slenv}{\sfste{\se}}{\sA}{\tfste{\te}}}

      \inferrule*[right=\rulename{CC-Snd}]
      {\ccjudg{\slenv}{\se}{\ssigmaty{\sx}{\sA}{\sB}}{\te}}
      {\ccjudg{\slenv}{\ssnde{\se}}{\subst{\sB}{\sfste{\se}}{\sx}}{\tsnde{\te}}}

      \inferrule*[right=\rulename{CC-Conv}]
      {\ccjudg{\slenv}{\se}{\sA}{\te}}
      {\ccjudg{\slenv}{\se}{\sB}{\te}}
      \paperonly{\vspace{-1ex}}
    \end{mathpar}
    \judgshape[~where~{\im{\swf{\slenv}}}]{\ccenvjudg{\slenv}{\tlenv}}
    \begin{mathpar}
      \inferrule*[right=\rulename{W-Empty}]
      {~}
      {\ccenvjudg{\cdot}{\cdot}}

      \inferrule*[right=\rulename{W-Assum}]
      {\ccenvjudg{\slenv}{\tlenv} \\
       \ccjudg{\slenv}{\sA}{\_}{\tA}}
      {\ccenvjudg{\slenv,\sx:\sA}{\tlenv,\tx:\tA}}

      \inferrule*[right=\rulename{W-Def}]
      {\ccenvjudg{\slenv}{\tlenv} \\
        \ccjudg{\slenv}{\sA}{\_}{\tA} \\
        \ccjudg{\slenv}{\se}{\sA}{\te}}
      {\ccenvjudg{\slenv,\sx = \se:\sA}{\tlenv,\tx = \te:\tA}}
    \end{mathpar}
    \caption{Closure Conversion}
    \label{fig:cc:term}
  \end{figure*}
}

\newcommand{\FigCCCCSyntax}[1][t]{
  \begin{figure}[#1]
    \paperonly{\vspace{-1ex}}
    \begin{displaymath}
      \begin{array}{@{\hspace{0em}}l@{\hspace{1ex}}r@{\hspace{1ex}}c@{\hspace{1ex}}l}
        \techrptonly{
        \bnflabel{Universes} &
                               \tU & \bnfdef & \tstarty \bnfalt \tboxty
        \\[6pt]
        }
        \bnflabel{Expressions} & \te,\tA,\tB & \bnfdef &
                                                         \techrptonly{\tx
                                                         \bnfalt \tstarty
                                                         \bnfalt \talete{\tx}{\te}{\tA}{\te}
                                                         }\paperonly{\cdots}
                                                         \bnfalt \tunitty
                                                         \bnfalt \tunite
                                                         \bnfalt \tcodety{\txpr:\tApr,\tx:\tA}{\tB}
        \paperonly{\\ &&} \bnfalt \paperonly{&}
                                                         \tnfune{(\txpr:\tApr,\tx:\tA)}{\te}
        \techrptonly{\\[6pt] &&} \bnfalt \techrptonly{&}
                                          \tpity{\tx}{\tA}{\tB}
                                          \bnfalt \tcloe{\te}{\te}
                                          \techrptonly{
                                          \bnfalt \tappe{\te}{\tepr}
                                           \bnfalt \tsigmaty{\tx}{\tA}{\tB}
                                          \bnfalt \tfste{\te}
                                          \bnfalt \tsnde{\te}
                                          }
      \end{array}
    \end{displaymath}
    \paperonly{\vspace{1ex}}
    \caption{CC-CC Syntax\paperonly{ (excerpts)}}
    \label{fig:cc-cc:syntax}
  \end{figure}
}

\newcommand{\FigCCCCConv}[1][t]{
  \begin{figure}[#1]
    \judgshape{\tstepjudg[\step]{\tlenv}{\te}{\tepr}}
    \paperonly{\vspace{-3ex}}
    \begin{displaymath}
      \begin{array}{rcl@{\hspace{2ex}}l}
        \paperonly{&\!\!\!\!\vdots&\\}
        \techrptonly{
        \tx & \step_{\delta} & \te & \where{\tx = \te : \tA \in \tlenv}

        \\[4pt]
        \talete{\tx}{\te}{\tA}{\teone} & \step_{\zeta} & \subst{\teone}{\te}{\tx}
        \\[4pt]
        }
        \tappe{\tcloe{\tnfune{\txpr:\tApr,\tx:\tA}{\teone}}{\tepr}}{\te}
        & \step_{\beta} &
                          \subst{\subst{\teone}{\tepr}{\txpr}}{\te}{\tx}
        \techrptonly{
        \\[4pt]
        \tfste{\tpaire{\teone}{\tetwo}} & \step_{\pi_{1}} & \teone

        \\[4pt]
        \tsnde{\tpaire{\teone}{\tetwo}} & \step_{\pi_{2}} & \tetwo
        \\[4pt]
        }
      \end{array}
    \end{displaymath}
    \judgshape{\tequivjudg{\tlenv}{\te}{\tepr}}
    \begin{mathpar}
      \paperonly{\cdots}

      \techrptonly{
      \inferrule*[right=\rulename{\figm{\equiv}}]
        {\tstepjudg[\stepstar]{\tlenv}{\teone}{\te} \\
         \tstepjudg[\stepstar]{\tlenv}{\tetwo}{\te}}
        {\tequivjudg{\tlenv}{\teone}{\tetwo}}

      }
      \inferrule*[right=\rulename{\figm{\equiv}-Clo\figm{_1}}]
      {\tstepjudg[\stepstar]{\tlenv}{\teone}{\tcloe{\tnfune{(\txpr:\tApr,\tx:\tA)}{\teonepr}}{\tepr}}
        \\
        \tstepjudg[\stepstar]{\tlenv}{\tetwo}{\tetwopr} \\
        \tequivjudg{\tlenv,\tx:\tA}{\subst{\teone}{\tepr}{\txpr}}{\tappe{\tetwopr}{\tx}}}
        {\tequivjudg{\tlenv}{\teone}{\tetwo}}

        \inferrule*[right=\rulename{\figm{\equiv}-Clo\figm{_2}}]
        {\tstepjudg[\stepstar]{\tlenv}{\tetwo}{\tcloe{\tnfune{(\txpr:\tApr,\tx:\tA)}{\tetwopr}}{\tepr}}
          \\
          \tstepjudg[\stepstar]{\tlenv}{\teone}{\teonepr} \\
          \tequivjudg{\tlenv,\tx:\tA}{\tappe{\teonepr}{\tx}}{\subst{\tetwopr}{\tepr}{\txpr}}}
        {\tequivjudg{\tlenv}{\teone}{\tetwo}}
    \end{mathpar}
    \paperonly{\vspace{1ex}}
    \caption{CC-CC Conversion and Equivalence\paperonly{ (excerpts)}}
    \label{fig:cc-cc:conv}
    \paperonly{\vspace{1ex}}
  \end{figure}
}

\newcommand{\FigCCCCTyping}[1][t]{
  \begin{figure}[#1]
    \judgshape{\ttyjudg{\tlenv}{\te}{\tt}}
    \paperonly{\vspace{-1.5ex}}
    \begin{mathpar}
      \paperonly{\cdots}
      \techrptonly{
        \inferrule*[right=\rulename{Ax-*}]
        {\twf{\tlenv}}
        {\ttyjudg{\tlenv}{\tstarty}{\tboxty}}

        \inferrule*[right=\rulename{Var}]
        {\tx : \tA \in \tlenv \\
          \twf{\tlenv}}
        {\ttyjudg{\tlenv}{\tx}{\tA}}

        \inferrule*[right=\rulename{T-Unit}]
        {\twf{\tlenv}}
        {\ttyjudg{\tlenv}{\tunitty}{\tstarty}}

        \inferrule*[right=\rulename{Unit}]
        {\twf{\tlenv}}
        {\ttyjudg{\tlenv}{\tunite}{\tunitty}}

        \inferrule*[right=\rulename{Let}]
        {\ttyjudg{\tlenv}{\te}{\tA} \\
          \ttyjudg{\tlenv,\tx=\te:\tA}{\tepr}{\tB}}
        {\ttyjudg{\tlenv}{\talete{\tx}{\te}{\tA}{\tepr}}{\subst{\tB}{\te}{\tx}}}

        \inferrule*[right=\rulename{Prod-*}]
        {\ttyjudg{\tlenv}{\tA}{\tU} \\
          \ttyjudg{\tlenv,\tx:\tA}{\tB}{\tstarty}}
        {\ttyjudg{\tlenv}{\tpity{\tx}{\tA}{\tB}}{\tstarty}}

        \inferrule*[right=\rulename{Prod-\(\square\)}]
        {\ttyjudg{\tlenv}{\tA}{\tU} \\
          \ttyjudg{\tlenv,\tx:\tA}{\tB}{\tboxty}}
        {\ttyjudg{\tlenv}{\tpity{\tx}{\tA}{\tB}}{\tboxty}}
      }

      \inferrule*[right=\rulename{T-Code-*}]
      {\ttyjudg{\tlenv,\txpr:\tApr,\tx:\tA}{\tB}{\tstarty}}
      {\ttyjudg{\tlenv}{\tcodety{\txpr:\tApr,\tx:\tA}{\tB}}{\tstarty}}

      \inferrule*[right=\rulename{T-Code-\(\square\)}]
      {\ttyjudg{\tlenv,\txpr:\tApr,\tx:\tA}{\tB}{\tboxty}}
      {\ttyjudg{\tlenv}{\tcodety{\tx:\tA,\txpr:\tApr}{\tB}}{\tboxty}}

      \inferrule*[right=\rulename{Code}]
      {\ttyjudg{\cdot,\txpr:\tApr,\tx:\tA}{\te}{\tB}}
      {\ttyjudg{\tlenv}{\tnfune{(\txpr:\tApr,\tx:\tA)}{\te}}{\tcodety{\txpr:\tApr,\tx:\tA}{\tB}}}

      \inferrule*[right=\rulename{Clo}]
      {\ttyjudg{\tlenv}{\te}{\tcodety{\txpr:\tApr,\tx:\tA}{\tB}} \\
        \ttyjudg{\tlenv}{\tepr}{\tApr}}
      {\ttyjudg{\tlenv}{\tcloe{\te}{\tepr}}{\tpity{\tx}{\subst{\tA}{\tepr}{\txpr}}{\subst{\tB}{\tepr}{\txpr}}}}

      \techrptonly{
        \inferrule*[right=\rulename{App}]
        {\ttyjudg{\tlenv}{\te}{\tpity{\tx}{\tApr}{\tB}} \\
          \ttyjudg{\tlenv}{\tepr}{\tApr}}
        {\ttyjudg{\tlenv}{\tappe{\te}{\tepr}}{\subst{\tB}{\tepr}{\tx}}}

        \inferrule*[right=\rulename{Sig-*}]
        {\ttyjudg{\tlenv}{\tA}{\tstarty} \\
          \ttyjudg{\tlenv,\tx:\tA}{\tB}{\tstarty}}
        {\ttyjudg{\tlenv}{\tsigmaty{\tx}{\tA}{\tB}}{\tstarty}}

        \inferrule*[right=\rulename{Sig-\(\square\)}]
        {\ttyjudg{\tlenv}{\tA}{\tU} \\
          \ttyjudg{\tlenv,\tx:\tA}{\tB}{\tboxty}}
        {\ttyjudg{\tlenv}{\tsigmaty{\tx}{\tA}{\tB}}{\tboxty}}

        \inferrule*[right=\rulename{Fst}]
        {\ttyjudg{\tlenv}{\te}{\tsigmaty{\tx}{\tA}{\tB}}}
        {\ttyjudg{\tlenv}{\tfste{\te}}{\tA}}

        \inferrule*[right=\rulename{Snd}]
        {\ttyjudg{\tlenv}{\te}{\tsigmaty{\tx}{\tA}{\tB}}}
        {\ttyjudg{\tlenv}{\tsnde{\te}}{\subst{\tB}{\tfste{\te}}{\tx}}}

        \inferrule*[right=\rulename{Conv}]
        {\ttyjudg{\tlenv}{\te}{\tA} \\
          \ttyjudg{\tlenv}{\tB}{\tU} \\
          \tequivjudg{\tlenv}{\tA}{\tB}}
        {\ttyjudg{\tlenv}{\te}{\tB}}
      }

    \end{mathpar}
    \paperonly{\vspace{1ex}}
    \caption{CC-CC Typing\paperonly{ (excerpts)}}
    \label{fig:cc-cc:type}
  \end{figure}
}

\newcommand{\FigCCCCWF}[1][t]{
  \begin{figure}[#1]
    \judgshape{\twf{\tlenv}}
    \begin{mathpar}
        \inferrule*[right=\rulename{W-Empty}]
        {~}
        {\twf{\cdot}}

      \inferrule*[right=\rulename{W-Assum}]
        {\twf{\tlenv} \\
         \ttyjudg{\tlenv}{\tA}{\tU}}
        {\twf{\tlenv,\tx:\tA}}

      \inferrule*[right=\rulename{W-Def}]
        {\twf{\tlenv} \\
         \ttyjudg{\tlenv}{\te}{\tA} \\
         \ttyjudg{\tlenv}{\tB}{\tU}}
        {\twf{\tlenv,\tx = \te :\tA}}
    \end{mathpar}
    \caption{CC-CC well-formed environments}
    \label{fig:cc-cc:wf}
  \end{figure}
}

\newcommand{\FigModelCCCC}[1][t]{
  \begin{figure*}[#1]
    \judgshape{\mjudg{\tlenv}{\te}{\tA}{\se}}
    \paperonly{\vspace{-2ex}}
    \begin{mathpar}
      \paperonly{\cdots}
      \techrptonly{
        \inferrule*[right=\rulename{M-*}]
        {~}
        {\mjudg{\tlenv}{\tstarty}{\tboxty}{\sstarty}}

        \inferrule*[right=\rulename{M-Var}]
        {~}
        {\mjudg{\tlenv}{\tx}{\tA}{\sx}}

        \inferrule*[right=\rulename{M-T-Unit}]
        {~}
        {\mjudg{\tlenv}{\tunitty}{\tstarty}{\spity{\salpha}{\sstarty}{\spity{\sx}{\salpha}{\salpha}}}}

        \inferrule*[right=\rulename{M-Unit}]
        {~}
        {\mjudg{\tlenv}{\tunite}{\tunitty}{\sfune{\salpha}{\sstarty}{\sfune{\sx}{\salpha}{\sx}}}}

        \inferrule*[right=\rulename{M-Let}]
        {\mjudg{\tlenv}{\te}{\tA}{\se} \\
          \mjudg{\tlenv,\tx=\te:\tA}{\tepr}{\tB}}
        {\mjudg{\tlenv}{\talete{\tx}{\te}{\tA}{\tepr}}{\subst{\tB}{\te}{\tx}}{\salete{\sx}{\se}{\sA}{\sepr}}}
      }

      \inferrule*[right=\rulename{M-Prod-*}]
      {\mjudg{\tlenv}{\tA}{\tU}{\sA} \\
        \mjudg{\tlenv,\tx:\tA}{\tB}{\tstarty}{\sB}}
      {\mjudg{\tlenv}{\tpity{\tx}{\tA}{\tB}}{\tstarty}{\spity{\sx}{\sA}{\sB}}}

      \techrptonly{
        \inferrule*[right=\rulename{M-Prod-\(\square\)}]
        {\mjudg{\tlenv}{\tA}{\tU}{\sA} \\
          \mjudg{\tlenv,\tx:\tA}{\tB}{\tboxty}{\sB}}
        {\mjudg{\tlenv}{\tpity{\tx}{\tA}{\tB}}{\tboxty}{\spity{\sx}{\sA}{\sB}}}

      }
      \inferrule*[right=\rulename{M-T-Code-*}]
      {\mjudg{\tlenv}{\tApr}{\tUpr}{\sApr} \\
        \mjudg{\tlenv,\txpr:\tApr}{\tA}{\tU}{\sA} \\
      \mjudg{\tlenv,\txpr:\tApr,\tx:\tA}{\tB}{\tstarty}{\sB}}
      {\mjudg{\tlenv}{\tcodety{\txpr:\tApr,\tx:\tA}{\tB}}{\tstarty}{\spity{\sxpr}{\sApr}{\spity{\sx}{\sA}{\sB}}}}

      \inferrule*[right=\rulename{M-T-Code-\(\square\)}]
      {\mjudg{\tlenv}{\tApr}{\tUpr}{\sApr} \\
        \mjudg{\tlenv,\txpr:\tApr}{\tA}{\tU}{\sA} \\
      \mjudg{\tlenv,\txpr:\tApr,\tx:\tA}{\tB}{\tboxty}{\sB}}
      {\mjudg{\tlenv}{\tcodety{\txpr:\tApr,\tx:\tA}{\tB}}{\tboxty}{\spity{\sxpr}{\sApr}{\spity{\sx}{\sA}{\sB}}}}

      \inferrule*[right=\rulename{M-Code}]
      {\mjudg{\tlenv}{\tApr}{\tUpr}{\sApr} \\
        \mjudg{\tlenv,\txpr:\tApr}{\tA}{\tU}{\sA} \\
        \mjudg{\tlenv,\txpr:\tApr,\tx:\tA}{\tB}{\tU}{\sB} \\
        \mjudg{\tlenv,\txpr:\tApr,\tx:\tA}{\te}{\tB}{\se}}
      {\mjudg{\tlenv}{\tnfune{(\txpr:\tApr,\tx:\tA)}{\te}}{\tcodety{\txpr:\tApr,\tx:\tA}{\tB}}
        {\sfune{\sxpr}{\sApr}{\sfune{\sx}{\sA}{\se}}}}

      \inferrule*[right=\rulename{M-Clo}]
      {\mjudg{\tlenv}{\te}{\tcodety{\txpr:\tApr,\tx:\tA}{\tB}}{\se} \\
      \mjudg{\tlenv}{\tepr}{\tApr}{\sepr}}
      {\mjudg{\tlenv}{\tcloe{\te}{\tepr}}{\tpity{\tx}{\subst{\tA}{\tepr}{\tx}}{\subst{\tB}{\tepr}{\tx}}}{\sappe{\se}{\sepr}}}

      \inferrule*[right=\rulename{M-App}]
      {\mjudg{\tlenv}{\te}{\tpity{\tx}{\tA}{\tB}}{\se} \\
        \mjudg{\tlenv}{\tepr}{\tA}{\sepr}}
      {\mjudg{\tlenv}{\tappe{\te}{\tepr}}{\subst{\tB}{\tepr}{\tx}}{\sappe{\se}{\sepr}}}

      \techrptonly{
        \inferrule*[right=\rulename{M-Sig-*}]
        {\mjudg{\tlenv}{\tA}{\tstarty}{\sA} \\
          \mjudg{\tlenv,\tx:\tA}{\tB}{\tstarty}{\sB}}
        {\mjudg{\tlenv}{\tsigmaty{\tx}{\tA}{\tB}}{\tstarty}{\ssigmaty{\sx}{\sA}{\sB}}}

        \inferrule*[right=\rulename{M-Sig-\(\square\)}]
        {\mjudg{\tlenv}{\tA}{\tU}{\sA} \\
          \mjudg{\tlenv,\tx:\tA}{\tB}{\tboxty}{\sB}}
        {\mjudg{\tlenv}{\tsigmaty{\tx}{\tA}{\tB}}{\tboxty}{\ssigmaty{\sx}{\sA}{\sB}}}

        \inferrule*[right=\rulename{M-Fst}]
        {\mjudg{\tlenv}{\te}{\tsigmaty{\tx}{\tA}{\tB}}{\se}}
        {\mjudg{\tlenv}{\tfste{\te}}{\tA}{\sfste{\se}}}

        \inferrule*[right=\rulename{M-Snd}]
        {\mjudg{\tlenv}{\te}{\tsigmaty{\tx}{\tA}{\tB}}{\se}}
        {\mjudg{\tlenv}{\tsnde{\te}}{\tA}{\ssnde{\se}}}

        \inferrule*[right=\rulename{M-Conv}]
        {\mjudg{\tlenv}{\te}{\tB}{\se}}
        {\mjudg{\tlenv}{\te}{\tA}{\se}}
      }
    \end{mathpar}
    \paperonly{\vspace{1ex}}
    \caption{Translation from CC-CC to CC\paperonly{ (excerpts)}}
    \label{fig:model}
    \paperonly{\vspace{-1ex}}
  \end{figure*}
}

\section{Target: CC, Closure-Converted (CC-CC)}
\label{sec:target}

\FigCCCCSyntax
The target language CC-CC is based on CC, but first-class functions are replaced by closed
code and closures.
We add a primitive unit type \im{\tunitty} to support encoding environments.
\techrptonly{This language is typeset in a \tfonttext{bold, red, serif font}.}%
We extend the syntax of expressions, \fullref[]{fig:cc-cc:syntax}, with a unit value \im{\tnpaire{}}
and its type \im{\tunitty}, closed code \im{\tnfune{\tn:\tApr,\tx:\tA}{\te}} and dependent code types
\im{\tcodety{\tn:\tApr,\tx:\tA}{\tB}}, and closure values \im{\tcloe{\te}{\tepr}} and dependent
closure types \im{\tpity{\tx}{\tA}{\tB}}.
The syntax of application \im{\tappe{\te}{\tepr}} is unchanged, but it now applies closures
instead of functions.

We define additional syntactic sugar for sequences of terms, to support writing environments whose
length is arbitrary.
We write a sequence of terms \im{\tei\dots} to mean a sequence of length \im{\len{i}} of expressions
\im{\tein{i_0},\dots,\tein{i_n}}.
We extend the notation to patterns such as \im{\txi:\tAi\dots}, which implies two sequences
\im{\txin{i_0},\dots,\txin{i_n}} and \im{\tAin{0},\dots,\tAin{i_n}} each of length \im{\len{i}}.
We define environments as dependent n-tuples, written
\im{\tdnpaire{\tei\dots}{\tnsigmaty{(\txi:\tAi\dots)}}}.
We encode dependent n-tuples as nested dependent pairs followed by a unit value, \ie,
\im{\tpaire{\tein{0}}{\tpaire{\dots}{\tpaire{\tein{i}}{\tunite}}}}.
We omit the annotation on n-tuples \im{\tnpaire{\tei\dots}} when it is obvious from context.
We also define pattern matching on n-tuples, written \im{\tlete{\tnpaire{\txi\dots}}{\tepr}{\te}}, to
perform the necessary nested projections, \ie,
\im{\tlete{\txin{0}}{\tfste{\tepr}}{\dots\tlete{\txi}{\tfste{\tsnde{\dots \tsnde{\tepr}}}}{\te}}}.

\FigCCCCConv
In \fullref[]{fig:cc-cc:conv} we present the additional conversion and equivalence rules for CC-CC.
Code cannot be applied directly, but must be part of a closure.
Closures applied to an argument \(\beta\)-reduce, applying the underlying code to the environment
and the argument.
All the other conversion rules remain unchanged.
For equivalence, we no longer have the usual \(\eta\) rules, since functions have been turned into
closures.
Instead, we need \(\eta\) rules for closures.

\FigCCCCTyping
We give the typing rules in \fullref[]{fig:cc-cc:type}.
\paperonly{All unspecified rules are unchanged from the source language. }%
\techrptonly{Most rules are unchanged from the source language. }%
The most interesting rule is \rulename{Code}, which
that code only type checks when it is closed.
This rule captures the entire point of typed closure conversion and gives us static machine-checked
guarantees that our translation produces closed code.
The typing rule \rulename{Clo} for closures \im{\tcloe{\te}{\tepr}} substitutes the environment
\im{\tepr} into the type of the closure\paperonly{, as discussed in \fullref[]{sec:idea}}.
This is similar to the CC rule \rulename{App} that substitutes a function argument into the result
type of a function.
\paperonly{As we discussed in \fullref[]{sec:idea}, t}\techrptonly{T}his is also critical to type
preservation, since our translation must generate closure types with free variables and then
synchronize the closure type containing free variables with a closed code type.
As with \im{\sfontsym{\Pi}} types in CC, we have two rules for well typed \im{\tfont{Code}} types.
The rule \rulename{T-Code-*} allows impredicativity in \im{\tstarty}, while \rulename{T-Code-\im{\square}}
is predicative.

\subsection{Type Safety and Consistency}
\label{sec:consistency}
\FigModelCCCC
We prove that CC-CC is type safe when interpreted as a programming language and consistent when
interpreted as a logic.
Type safety guarantees that all programs in CC-CC have well-defined behavior, and consistency ensures
that when interpreting types as propositions and programs as proofs, we cannot prove
\im{\tfont{False}} in CC-CC.
We prove both theorems by giving a model of CC-CC in CC, \ie, by encoding the target language in the
source language.
The model reduces type safety and consistency of CC-CC to that of CC, which is known to be type safe
and consistent.
This standard technique is well explained by \citet{boulier2017}.

We construct a model essentially by ``decompiling'' closures, translating code to functions and
closures to partial application.
To show this translation is a model, we need to show that it preserves falseness---\ie, that we
translate \im{\tfont{False}} to \im{\sfont{False}}---and show that the translation is
type-preserving---\ie, we translate any well-typed CC-CC program (valid proof) into a well-typed
program in CC.
To extend the model to type safety, we must also show that the translation preserves reduction
semantics---\ie, that reducing an expression in CC-CC is essentially equivalent to reducing the
translated term in CC.
Since our type system includes reduction, we already prove this to show type preservation.

We then prove consistency and type safety of CC-CC by contradiction.
If CC-CC were inconsistent, then we could prove the proposition \im{\tfont{False}} in CC-CC, and
translate that proof into a valid proof of \im{\sfont{False}} in CC.
But since CC is consistent, we can never produce a proof of \im{\sfont{False}} in CC, therefore we
could not have constructed one in CC-CC.
A similar argument applies for type safety.
Since we preserve reduction semantics in CC-CC, if a term had undefined behavior, we could translate
the term into a CC term with undefined behavior.
However, CC has no terms with undefined behavior, hence neither does CC-CC.

The translation from CC-CC to CC, \fullref[]{fig:model}, is defined on typing derivations.
We use the following notation.

\begin{inlinedisplay}
  \te^\circ \defeq \se~\where{\mjudg{\tlenv}{\te}{\tA}{\se}}
\end{inlinedisplay}

\noindent The CC expression \im{\te^\circ} refers to the expression produced by translating the CC-CC expression
\im{\te}, with the typing derivation for \im{\te} as an implicit argument.

The rule \rulename{M-Code} translates a code type \im{\tcodety{\tn:\tApr,\tx:\tA}{\tB}} to
the curried function type \im{\spity{\sn}{{\tApr}^\circ}{\spity{\tx}{\tA^\circ}{\tB^\circ}}}.
The rule \rulename{M-Code} models code \im{\tnfune{\tn:\tApr,\tx:\tA}{\te}} as a curried function
\im{\sfune{\sn}{{\tApr}^\circ}{\sfune{\sx}{\tA^\circ}{\te^\circ}}}.
Observe that the inner function produced in CC is not closed, but that is not a problem since the
model only exists to prove type safety and consistency.
It is only in CC-CC programs that code must be closed.
The rule \rulename{M-Clo} models a closure \im{\tcloe{\te}{\tepr}} as the application
\im{\sappe{\te^\circ}{{\tepr}^\circ}}---\ie, the application of the function \im{\te^\circ} to
its environment \im{{\tepr}^\circ}.
We model \im{\tfont{Unit}}\paperonly{, omitted for brevity,} with the standard Church encoding as the
polymorphic identity function.
All other rules simply recursively translate subterms.

We first prove that this translation preserves falseness.
We encode \im{\tfont{False}} in CC-CC as \im{\tpity{\tA}{\tstarty}{\tA}}.
This encoding represents a function that takes any arbitrary proposition \im{\tA} and
returns a proof of \im{\tA}.
Similar, in CC \im{\sfont{False}} as \im{\spity{\sA}{\sstarty}{\sA}}.
It is clear from \rulename{M-Prod-*} that the translation preserves falseness.
We use \im{=} as the terms are not just definitionally equivalent, but syntactically identical.

\begin{lemma}[False Preservation]
  \label{sec:m:false-pres}
  \im{\tfont{False}^\circ = \sfont{False}}
\end{lemma}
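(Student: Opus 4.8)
The plan is to compute $\tfont{False}^\circ$ directly by unfolding the translation $\modelsym$ on the typing derivation of $\tfont{False} = \tpity{\tA}{\tstarty}{\tA}$. Since $\tfont{False} : \tstarty$, this derivation ends with the rule \rulename{M-Prod-*} (the translation counterpart of \rulename{Prod-*}), whose two premises ask us to translate the domain $\tstarty$ and the codomain $\tA$ in the extended context $\tlenv,\tA:\tstarty$. The domain translates by \rulename{M-*} as $\tstarty^\circ = \sstarty$, and the codomain, being a variable, translates by \rulename{M-Var} as $\tA^\circ = \sA$. Reassembling via \rulename{M-Prod-*} yields $(\tpity{\tA}{\tstarty}{\tA})^\circ = \spity{\sA}{\sstarty}{\sA}$, which is exactly the chosen encoding of $\sfont{False}$ in CC. Hence the two sides are not merely definitionally equivalent but syntactically identical, justifying the use of $=$ in the statement.

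The only point requiring a word of care is that $\modelsym$ is defined on typing \emph{derivations}, so a priori one might worry about the choice of derivation for $\ttyjudg{\tlenv}{\tfont{False}}{\tstarty}$: a derivation could, for instance, thread an application of \rulename{M-Conv}. But \rulename{M-Conv} leaves the translated term unchanged, and apart from that the derivation of $\tfont{False}$ has only the universe $\tstarty$ and the bound variable $\tA$ as subterms, each of which has a unique applicable translation rule; so every derivation produces the same CC term. I therefore do not expect any genuine obstacle here—the lemma is an unfolding of definitions and its proof should be a line or two. Its purpose is simply to record, for the consistency argument that follows, that the encodings of falseness in CC-CC and in CC line up on the nose, so that a CC-CC proof of $\tfont{False}$ translates to a genuine CC proof of $\sfont{False}$.
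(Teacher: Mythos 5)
Your proposal is correct and takes exactly the paper's route: the paper likewise observes that \rulename{M-Prod-*} (together with the rules for \im{\tstarty} and variables) sends \im{\tpity{\tA}{\tstarty}{\tA}} to \im{\spity{\sA}{\sstarty}{\sA}} syntactically on the nose, which is why it uses \im{=} rather than \im{\equiv}. Your extra remark about independence from the choice of typing derivation (in particular that \rulename{M-Conv} leaves the output term unchanged) is a reasonable point of care that the paper glosses over.
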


To prove type preservation, we split the proof into three key lemmas.
First, we show \emph{compositionality}, \ie, that the translation from CC-CC to CC commutes with
substitution.
Then we prove preservation of reduction semantics and equivalence, which essentially follows from
compositionality.
Finally, we prove type preservation, which relies on preservation of equivalence and on
compositionality.
\paperonly{The proofs are straightforward, since the typing rules in CC-CC essentially
  correspond to partial application already, so we elide them here.
  They follow the same structure as our type preservation proof for closure conversion, which we
  present in \fullref[]{sec:cc}.
  For complete details, see our \material.}

Compositionality is an important lemma since the type system and conversion relations are defined by
substitution.
\begin{lemma}[Compositionality]
  \label{lem:m:subst}
  \im{(\subst{\te}{\tepr}{\tx})^\circ = \subst{\te^{\circ}}{\te^{\tprime\circ}}{\sx}}
\end{lemma}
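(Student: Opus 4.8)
The plan is to prove the equation by structural induction on the CC-CC expression $\te$ --- equivalently, on the derivation of $\mjudg{\tlenv}{\te}{\tA}{\se}$ that $\te^\circ$ carries implicitly. What makes this go through is that the translation of \fullref[]{fig:model} is syntax-directed: every compound form is sent to a CC term obtained by recursively translating its immediate subterms and reassembling them with a CC constructor, and the resulting CC term depends only on the CC-CC term, never on the type index nor on which of a pair of rules (e.g.\ \rulename{M-Prod-*} versus \rulename{M-Prod-\(\square\)}) was applied, so $(-)^\circ$ may be regarded as a well-defined partial function on expressions. Throughout, I work up to $\alpha$-equivalence under the Barendregt convention, choosing all bound names --- those of $\te$ and $\tepr$, and the fresh names appearing in the Church encodings of $\tunitty$ and $\tunite$ introduced by \rulename{M-T-Unit}/\rulename{M-Unit} --- fresh for $\tx$, $\tepr$, and their CC images, so that the standard CC facts about commuting substitutions apply componentwise.

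In the variable case $\te = \tx$, both sides collapse to $\te^{\tprime\circ}$: the left because $\subst{\tx}{\tepr}{\tx} = \tepr$, the right by \rulename{M-Var} since $\tx^\circ = \sx$, so $\subst{\sx}{\te^{\tprime\circ}}{\sx} = \te^{\tprime\circ}$. For a variable other than $\tx$ both sides are that variable renamed into CC, and for $\tstarty$, $\tunitty$, and $\tunite$ the substitution is a no-op on the CC-CC side and, by the freshness convention, on the CC side too, so both sides equal $\te^\circ$. In each compound case --- dependent let, $\Pi$-types, code types, code, closures, application, $\Sigma$-types, and the two projections --- substitution distributes componentwise over the CC-CC constructor (after renaming binders away from $\tx$ and $\tepr$), the corresponding rule of \fullref[]{fig:model} translates each component recursively, and CC substitution distributes componentwise over the matching CC constructor in exactly the same shape; applying the induction hypothesis to each component and reassembling yields the equation. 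For instance, a closure $\tcloe{\teone}{\tetwo}$ is sent by \rulename{M-Clo} to an application, and $\im{(\subst{(\tcloe{\teone}{\tetwo})}{\tepr}{\tx})^\circ = \sappe{(\subst{\teone}{\tepr}{\tx})^\circ}{(\subst{\tetwo}{\tepr}{\tx})^\circ}}$, which by two uses of the induction hypothesis equals $\im{\sappe{\subst{{\teone}^\circ}{\te^{\tprime\circ}}{\sx}}{\subst{{\tetwo}^\circ}{\te^{\tprime\circ}}{\sx}} = \subst{(\tcloe{\teone}{\tetwo})^\circ}{\te^{\tprime\circ}}{\sx}}$. The \rulename{M-Conv} case is immediate, since it alters neither the CC-CC term nor its translation.

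The one genuine subtlety is that $(-)^\circ$ is officially defined on derivations, so $(\subst{\te}{\tepr}{\tx})^\circ$ only makes sense once we know $\subst{\te}{\tepr}{\tx}$ is again in the domain of $(-)^\circ$; I would therefore prove, by the same induction, a substitution lemma for $\modelsym$ --- from $\mjudg{\tlenv_1,\tx:\tApr,\tlenv_2}{\te}{\tA}{\se}$ and $\mjudg{\tlenv_1}{\tepr}{\tApr}{\te^{\tprime\circ}}$ derive $\mjudg{\tlenv_1,\subst{\tlenv_2}{\tepr}{\tx}}{\subst{\te}{\tepr}{\tx}}{\subst{\tA}{\tepr}{\tx}}{\subst{\se}{\te^{\tprime\circ}}{\sx}}$ --- of which the claimed equation is exactly the term component (and the derivation it builds witnesses that the substituted term is translatable). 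Past that, the work is bookkeeping rather than anything deep: keeping binders disjoint from $\tx$ and $\tepr$ under the variable convention, and invoking the standard CC lemma that substitutions commute when the replaced variables are distinct and uncaptured. The reason it all goes through so cleanly is the one the paper already flags --- the CC-CC typing rules were designed so that each new constructor behaves under substitution precisely like the CC term it unfolds to (closures like partial applications, code like curried functions), which is exactly what compositionality records.
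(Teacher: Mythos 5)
Your proof takes essentially the same route as the paper's: induction on the typing derivation implicitly carried by \im{\te^\circ}, with every compound case following componentwise from the induction hypothesis and the \rulename{Conv} case being immediate because the translation ignores the type index. Your added care about well-definedness of \im{(-)^\circ} on the substituted term (via a strengthened substitution lemma for \im{\modelsym}) is a point the paper glosses over, but it does not change the substance of the argument.
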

\techrptonly{
\begin{proof}
  The proof is by induction on the typing derivation \nonbreaking{\im{\ttyjudg{\tlenv}{\te}{\tA}}}.
  Recall that by convention this derivation is an implicit argument to the lemma.
  \paperonly{The proof is straightforward.\qedhere}
    \begin{itemize}
      \case \rulename{Ax-*}
      Trivial, since \im{\te = \tstarty} cannot have free variables.

      \case \rulename{Var}
      Hence \im{\te = \txpr}.
      There are two subcases:
      \scase \im{\txpr = \tx}
      Then the proof follows since \im{(\subst{\tx}{\tepr}{\tx})^\circ = \te^{\tprime\circ} = \subst{\sx}{\te^{\tprime\circ}}{\sx}}

      \scase \im{\txpr \neq \tx}
      Then the proof follows since \im{(\subst{\txpr}{\tepr}{\tx})^\circ = \sxpr = \subst{\sxpr}{\te^{\tprime\circ}}{\sx}}

      \case \rulename{Let}
      Follows easily by the inductive hypotheses, since both the translation of \im{\tfont{let}} and
      the definition of substitution are structural, except for the capture avoidance reasoning.

      \case \rulename{Prod-*}
      Follows easily by the inductive hypotheses, since both the translation of \im{\tfontsym{\Pi}} and
      the definition of substitution are structural, except for the capture avoidance reasoning.

      \item[\vdots]

      \case \rulename{Conv}
      Recall that the translation is defined by induction on typing derivations, and therefore we have
      the conversion typing rule:

      \begin{mathpar}
        \inferrule*[right=\rulename{Conv}]
        {\ttyjudg{\tlenv}{\te}{\tA} \\
          \ttyjudg{\tlenv}{\tB}{\tU} \\
        \tequivjudg{\tlenv}{\tA}{\tB}}
        {\ttyjudg{\tlenv}{\te}{\tB}}
      \end{mathpar}

      We must show that \im{(\subst{\te}{\tepr}{\tx})^\circ \equiv
        \subst{\te^\circ}{{\tepr}^\circ}{\sx}} at, loosely speaking, the type \im{\tB^\circ}.
      (Loosely, since we haven't show type preservation yet.)

      By the induction hypothesis applied to \im{\ttyjudg{\tlenv}{\te}{\tA}}, we have that
      \im{(\subst{\te}{\tepr}{\tx})^\circ \equiv
        \subst{\te^\circ}{{\tepr}^\circ}{\sx}} at the type \im{\tA^\circ}.

      The astute type theorist may be concerned that we first need to show that these terms are
      well-typed---\ie, that we need to show type preservation---and that \im{\tA^\circ \equiv
        \tB^\circ}, \ie, \emph{coherence}.
      However, our definition of equivalence in CC and CC-CC is based on the CIC \emph{untyped}
      equivalence~\cite[Chapter 4]{coq2017}, so the proof is already done.
      We can think of this equivalence as justifying semantic equivalences that are statically ruled
      out by a conservative syntactic type system.
      The advantage of this equivalence is that it allows us to stage the proof as we have.
    \end{itemize}
\end{proof}
  }

Next we show that the translation preserves reduction, or that our model in CC weakly simulates
reduction in CC-CC.
This is used both to show that equivalence is preserved, since equivalence is defined by reduction,
and to show type safety.
\begin{lemma}[Pres. of Reduction]
  \label{lem:m:red}
  \nonbreaking{If \im{\te \step \tepr} then \im{\te^\circ \stepstar \te^{\tprime\circ}}}
\end{lemma}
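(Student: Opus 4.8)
The plan is to argue by case analysis on the reduction rule witnessing \im{\te \step \tepr}. In CC-CC the (unstarred) relation \im{\step} consists only of the top-level rewrites \im{\step_\delta}, \im{\step_\zeta}, \im{\step_\beta}, \im{\step_{\pi_1}}, and \im{\step_{\pi_2}}---congruence being folded into \im{\stepstar}---so there are exactly five cases, and all but closure-\im{\beta} are routine. Two ingredients do all the work: \fullref[]{lem:m:subst} (Compositionality), which lets us rewrite the translation of each contractum---each of which is defined by a substitution---as the corresponding substitution applied to translated subterms; and the fact that \im{\te^{\tprime\circ}} is well-defined because \im{\te} is assumed well typed (the model acts on derivations) and reduction preserves typing in CC-CC.

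For \im{\step_\delta}, \im{\tx \step_\delta \te} when \im{\tx = \te : \tA} appears in the environment; since the model translates environments pointwise and \rulename{M-Var} sends \im{\tx} to \im{\sx}, we get \im{\tx^\circ = \sx \step_\delta \te^\circ} in CC, a single step. For \im{\step_\zeta}, \rulename{M-Let} translates \im{\talete{\tx}{\te}{\tA}{\teone} \step_\zeta \subst{\teone}{\te}{\tx}} to \im{\slete{\sx}{\te^\circ}{{\teone}^\circ} \step_\zeta \subst{{\teone}^\circ}{\te^\circ}{\sx}}, and the right-hand side is \im{(\subst{\teone}{\te}{\tx})^\circ} by \fullref[]{lem:m:subst}. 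The cases \im{\step_{\pi_1}} and \im{\step_{\pi_2}} are handled identically and are even simpler, since the model sends pairs and projections structurally to the CC constructs with the same projection rules and no substitution is involved.

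The only case with content is the closure-\im{\beta} rule \im{\tappe{\tcloe{\tnfune{(\txpr:\tApr,\tx:\tA)}{\teone}}{\tepr}}{\te} \step_\beta \subst{\subst{\teone}{\tepr}{\txpr}}{\te}{\tx}}. Here \rulename{M-App}, \rulename{M-Clo} and \rulename{M-Code} together unfold the translation of the redex to the iterated application of the curried code \im{\sfune{\sxpr}{{\tApr}^\circ}{\sfune{\sx}{\tA^\circ}{{\teone}^\circ}}} first to its environment \im{{\tepr}^\circ} and then to \im{\te^\circ}; the point is that \rulename{M-Clo} models a closure precisely as the partial application of its (curried, function-valued) code to its environment, which is exactly what makes this step go through. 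That term takes two CC \im{\step_\beta} steps---consuming \im{{\tepr}^\circ}, then \im{\te^\circ}---so it \im{\stepstar}-reduces to \im{\subst{\subst{{\teone}^\circ}{{\tepr}^\circ}{\sxpr}}{\te^\circ}{\sx}}, which by two applications of \fullref[]{lem:m:subst} is exactly \im{(\subst{\subst{\teone}{\tepr}{\txpr}}{\te}{\tx})^\circ = \te^{\tprime\circ}}. Hence \im{\te^\circ \stepstar \te^{\tprime\circ}}.

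I expect the only real friction to be bookkeeping in the closure-\im{\beta} case: lining up the order of the two \im{\beta}-steps with the nesting of the two \fullref[]{lem:m:subst} applications, and confirming the model's treatment of environments so the \im{\step_\delta} case matches on the nose; the subject-reduction fact for CC-CC invoked above is standard. Everything else is direct computation. This lemma is what makes the model a weak simulation of CC-CC reduction, which together with \fullref[]{lem:m:subst} yields preservation of equivalence and hence type safety and consistency.
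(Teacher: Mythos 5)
Your proposal is correct and follows essentially the same route as the paper's proof: case analysis on the reduction rule, with the only nontrivial case being closure-\(\beta\), where the translated redex unfolds (via \rulename{M-App}, \rulename{M-Clo}, \rulename{M-Code}) to a curried application that takes two CC \(\beta\)-steps and then matches the translated contractum by Compositionality (\fullref[]{lem:m:subst}). The remaining cases are handled just as you sketch.
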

\techrptonly{
\begin{proof}
  By cases on \im{\te \step \tepr}.
  The only interesting case is for the reduction of closures.
  \begin{itemize}
    \case \im{\tappe{\tcloe{(\tnfune{\txpr:\tApr,\tx:\tA}{\tein{b}})}{\tepr}}{\te} \step_\beta \subst{\subst{\tein{b}}{\tepr}{\txpr}}{\te}{\tx}}

    \noindent We must show that

    \im{(\tappe{\tcloe{(\tnfune{\txpr:\tApr,\tx:\tA}{\tein{b}})}{\tepr}}{\te})^\circ \stepstar
      (\subst{\subst{\tein{b}}{\tepr}{\txpr}}{\te}{\tx})^\circ}
    \begin{align}
      & (\tappe{\tcloe{(\tnfune{\txpr:\tApr,\tx:\tA}{\tein{b}})}{\tepr}}{\te})^\circ \\
      &~= \sappe{(\sappe{(\sfune{\sxpr}{\tA^{\tprime\circ}}{\sfune{\sx}{\tA^\circ}}{\tein{b}^\circ})}{\te^{\tprime\circ}})}{\te^\circ} & \text{by definition} \\
      &~\step_\beta^2 \subst{\subst{\tein{b}^\circ}{\te^{\tprime\circ}}{\sxpr}}{\te^{\circ}}{\sx} \\
      &~= (\subst{\subst{\tein{b}}{\tepr}{\txpr}}{\te}{\tx})^\circ & \text{by \fullref[]{lem:m:subst}}
    \end{align}
  \end{itemize}
\end{proof}
}

Now we show that reduction \emph{sequences} are preserved.
This essentially follows from preservation of single-step reduction, \fullref[]{lem:m:red}.
\begin{lemma}[Preservation of Reduction Sequences]
  \label{lem:m:red*}
  If \im{\te \stepstar \tepr} then \im{\te^\circ \stepstar \te^{\tprime\circ}}
\end{lemma}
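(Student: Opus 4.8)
The plan is to prove this by induction on the derivation of \im{\te \stepstar \tepr}. Recall that \im{\stepstar} is the reflexive, transitive, and contextual closure of the atomic one-step relation \im{\step}, so the induction has three kinds of generating case to discharge: a reflexive step, a composition of two shorter \im{\stepstar} derivations, and a single \im{\step}-step taken inside an arbitrary context.

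The reflexive and transitive cases are routine. If \im{\te = \tepr} then \im{\te^\circ = \te^{\tprime\circ}}, and \im{\te^\circ \stepstar \te^\circ} holds by reflexivity of \im{\stepstar} in CC. If the derivation factors as \im{\te \stepstar \teone} followed by \im{\teone \stepstar \tepr}, I would apply the induction hypothesis to each half and conclude by transitivity of \im{\stepstar} in CC. The remaining case is a single contextual step: \im{\te = C[\tetwo]} and \im{\tepr = C[\tetwopr]} for a context \im{C} with \im{\tetwo \step \tetwopr}. When \im{C} is the empty context this is precisely \fullref[]{lem:m:red}. For a nonempty context, I would use that the model translation \im{\cdot^\circ} is defined structurally---\rulename{M-Clo} translates a closure to the application of the translated code to the translated environment, \rulename{M-Code} translates code to a curried function, and the remaining rules simply recurse on immediate subterms---so the contracted subterm \im{\tetwo} is translated in place and \im{\te^\circ} is just \im{\tetwo^\circ} occurring inside the translation of the surrounding context. \fullref[]{lem:m:red} then gives \im{\tetwo^\circ \stepstar \tetwopr^\circ}, and closing \im{\stepstar} under contexts in CC lifts this to \im{\te^\circ \stepstar \te^{\tprime\circ}}.

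The one point I expect to require care---and essentially the only nonmechanical part of the argument---is that \im{\cdot^\circ} is defined on typing derivations rather than on raw terms, so reducing under \im{C} swaps out the derivation of \im{\te}, and a priori the translation of \im{C[\tetwo]} and the result of plugging \im{\tetwo^\circ} into the translated context agree only up to definitional equivalence (coherence); one might worry that type preservation is needed first. As in the proof of \fullref[]{lem:m:subst}, this is not a genuine obstacle: equivalence in both CC and CC-CC is the \emph{untyped} CIC equivalence, so for the purpose of reasoning about reduction \im{\cdot^\circ} behaves as an ordinary syntactic map and the derivation-level bookkeeping may be ignored. This is exactly what lets us stage the induction in the layered way above.
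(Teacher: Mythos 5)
Your proposal is correct and matches the paper's own argument, which is a two-line induction on the length of the reduction sequence with the base case trivial and the inductive case discharged by \fullref[]{lem:m:red} and the induction hypothesis. The only difference is that you explicitly unfold the contextual part of the closure and justify lifting \fullref[]{lem:m:red} through a context via the structural, derivation-insensitive nature of \im{\cdot^\circ}---a detail the paper elides but which your treatment handles correctly.
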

\techrptonly{
\begin{proof}
  The proof is by induction on the length of the reduction sequence \im{\te \stepstar \tepr}.  The
  base case is trivial, and the inductive case follows by \fullref{lem:m:red} and the inductive
  hypothesis.
\end{proof}
}

Next, we show \emph{coherence}, \ie, that the translation preserves equivalence.
The proof essentially follows from \fullref[]{lem:m:red*}, but we must show that our \(\eta\)
rule for closures is preserved.
\begin{lemma}[Coherence]
  \label{lem:m:coherence}
  If \im{\teone \equiv \tetwo} then \im{\teone^\circ \equiv \tetwo^{\circ}}
\end{lemma}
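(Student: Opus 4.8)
The plan is to prove Coherence by induction on the derivation of the CC-CC equivalence \im{\teone \equiv \tetwo}. The equivalence judgment of CC-CC has exactly three introduction rules: the confluence rule \rulename{\figm{\equiv}}, which reduces both sides to a common reduct, and the two closure \(\eta\) rules \rulename{\figm{\equiv}-Clo\figm{_1}} and \rulename{\figm{\equiv}-Clo\figm{_2}} that take the place of CC's function \(\eta\) rules. As everywhere in this section, I treat the model translation \im{\te^\circ} as indexed by a typing derivation for \im{\te}; the subterms appearing in an equivalence derivation of well-typed \im{\teone,\tetwo} are themselves well typed (by subject reduction, inversion, and the substitution lemma), so \im{(\cdot)^\circ} is defined on them. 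And, exactly as in the \rulename{Conv} case of \fullref[]{lem:m:subst}, since both CC and CC-CC equivalence are the \emph{untyped} CIC equivalence, I can apply the induction hypothesis to the equivalence premises without having first established preservation of the types involved.

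The base case \rulename{\figm{\equiv}} is immediate: from \im{\teone \stepstar \te} and \im{\tetwo \stepstar \te}, \fullref[]{lem:m:red*} gives \im{\teone^\circ \stepstar \te^\circ} and \im{\tetwo^\circ \stepstar \te^\circ}, so the CC rule \rulename{\figm{\equiv}} yields \im{\teone^\circ \equiv \tetwo^\circ}.

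The interesting case is \rulename{\figm{\equiv}-Clo\figm{_1}}, with \rulename{\figm{\equiv}-Clo\figm{_2}} entirely symmetric. Here \im{\teone \stepstar \tcloe{\tnfune{(\txpr:\tApr,\tx:\tA)}{\teonepr}}{\tepr}}, \im{\tetwo \stepstar \tetwopr}, and the subderivation gives \im{\subst{\teonepr}{\tepr}{\txpr} \equiv \tappe{\tetwopr}{\tx}} under the extended environment \im{\tlenv,\tx:\tA}. The plan has three moves. First, unfold the model of the left-hand reduct: by \fullref[]{lem:m:red*} together with \rulename{M-Clo} and \rulename{M-Code}, \im{\teone^\circ \stepstar \sappe{(\sfune{\sxpr}{{\tApr}^\circ}{\sfune{\sx}{\tA^\circ}{{\teonepr}^\circ}})}{\te^{\tprime\circ}}}, which \im{\step_\beta}-reduces (contracting the applied environment) to \im{\sfune{\sx}{\tA^\circ}{\subst{{\teonepr}^\circ}{\te^{\tprime\circ}}{\sxpr}}}, whose body is \im{(\subst{\teonepr}{\tepr}{\txpr})^\circ} by \fullref[]{lem:m:subst}. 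Second, \fullref[]{lem:m:red*} again gives \im{\tetwo^\circ \stepstar {\tetwopr}^\circ}. Third, the induction hypothesis on the subderivation yields \im{(\subst{\teonepr}{\tepr}{\txpr})^\circ \equiv (\tappe{\tetwopr}{\tx})^\circ}, and \im{(\tappe{\tetwopr}{\tx})^\circ = \sappe{{\tetwopr}^\circ}{\sx}} by \rulename{M-App}. These three facts are exactly the premises of the CC rule \rulename{\figm{\equiv}-\figm{\eta_1}}, instantiated with the source abstraction \im{\sfune{\sx}{\tA^\circ}{(\subst{\teonepr}{\tepr}{\txpr})^\circ}} and with \im{{\tetwopr}^\circ}, so it derives \im{\teone^\circ \equiv \tetwo^\circ}. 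The case \rulename{\figm{\equiv}-Clo\figm{_2}} is the mirror image, using \rulename{\figm{\equiv}-\figm{\eta_2}}.

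The hard part will be the \(\eta\)-rule cases, and within them the one genuine calculation: checking that unfolding the model of a closure---that is, \rulename{M-Clo} composed with \rulename{M-Code}---exposes precisely the two \(\beta\)-redexes whose contraction yields a source \(\lambda\)-abstraction whose body, once rewritten by compositionality (\fullref[]{lem:m:subst}), is exactly the translation of the code body with its environment inlined, so that CC's \(\eta\) rule applies verbatim. Everything else---carrying the local environment \im{\tlenv,\tx:\tA} through the translation and the capture-avoidance side conditions---is routine given \fullref[]{lem:m:subst} and \fullref[]{lem:m:red*}, and the base case needs nothing beyond \fullref[]{lem:m:red*}.
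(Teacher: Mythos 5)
Your proposal is correct and follows essentially the same route as the paper's proof: induction on the equivalence derivation, the base case discharged by \fullref[]{lem:m:red*}, and the closure-\(\eta\) cases handled by unfolding \rulename{M-Clo}/\rulename{M-Code} into a \(\beta\)-redex, contracting it, rewriting the resulting body via \fullref[]{lem:m:subst}, and closing with CC's \rulename{\figm{\equiv}-\figm{\eta_1}} (resp.\ \figm{\eta_2}). The only nit is that the \(\beta\)-contraction also substitutes \im{\te^{\tprime\circ}} into the domain annotation, yielding \im{\sfune{\sx}{\subst{\tA^\circ}{\te^{\tprime\circ}}{\sxpr}}{\cdots}} rather than \im{\sfune{\sx}{\tA^\circ}{\cdots}}, which is harmless since the equivalence is untyped.
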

\techrptonly{
\begin{proof}
  The proof is by induction on the derivation \im{\te \equiv \tepr}.
  The only interesting case is for \(\eta\) equivalence of closures.
  \begin{itemize}
    \techrptonly{
      \case \rulename{\im{\equiv}}
      Follows by \fullref{lem:m:red*}.
    }
    \case \rulename{\im{\equiv}-Clo\im{_1}}

    \noindent By assumption, we have the following.
    \begin{enumerate}
    \item \im{\teone \stepstar {\tcloe{\tnfune{(\txpr:\tApr,\tx:\tA)}{\teonepr}}{\tepr}}}
    \item \im{\tetwo \stepstar \tetwopr}
    \item \im{\subst{\teone}{\tepr}{\txpr} \equiv \tappe{\tetwopr}{\tx}}
    \end{enumerate}

    \noindent We must show that \im{\teone^\circ \equiv \tetwo^\circ}.
    By \rulename{\im{\equiv}-\im{\eta_1}}, it suffices to show:
    \begin{enumerate}
        \item \im{\teone^\circ \stepstar
            \sfune{\sx}{\subst{\tA^\circ}{\te^{\tprime\circ}}{\sxpr}}{\subst{\teone^{\tprime\circ}}{\te^{\tprime\circ}}{\sxpr}}},
          which follows since:
          \begin{align}
            \qquad\teone^\circ &~\stepstar (\tcloe{\tnfune{(\txpr:\tApr,\tx:\tA)}{\teonepr}}{\tepr})^\circ & \text{by \fullref[]{lem:m:red*}}\\
            &~= \sappe{(\sfune{\sxpr}{\tA^{\tprime\circ}}{\sfune{\sx}{\tA^\circ}}{\teone^{\tprime\circ}})}{\te^{\tprime\circ}} \\
            &~\step {\sfune{\sx}{\subst{\tA^{\tprime\circ}}{\te^{\tprime\circ}}{\sxpr}}
              {\subst{\teone^{\tprime\circ}}{\te^{\tprime\circ}}{\sxpr}}}
            \end{align}
        \item \im{\tetwo^\circ \stepstar \tetwo^{\tprime\circ}} which follows by \fullref[]{lem:m:red*}.
        \item \im{{\subst{\teone^{\tprime\circ}}{\te^{\tprime\circ}}{\sxpr}} \equiv
            \sappe{{\tetwopr}^\circ}{\sx}}, which follows by the inductive hypothesis
          applied to \im{\subst{\teone}{\tepr}{\txpr} \equiv \tappe{\tetwopr}{\tx}} and \fullref[]{lem:m:subst}.\paperonly{\qedhere}
    \end{enumerate}
    \techrptonly{\case \rulename{\im{\equiv}-Clo\im{_2}} is symmetric.}
  \end{itemize}
\end{proof}
}

We can now show our final lemma: type preservation.
\begin{lemma}[Type Preservation]
  ~
  \label{lem:m:type-pres}
  \begin{enumerate}
    \item If \im{\twf{\tlenv}} then \im{\swf{\tlenv^{\circ}}}
    \item If \im{\ttyjudg{\tlenv}{\te}{\tA}} then \im{\styjudg{\tlenv^\circ}{\te^\circ}{\tA^\circ}}
  \end{enumerate}
\end{lemma}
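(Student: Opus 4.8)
The plan is to prove the two clauses simultaneously, by mutual induction on the derivations of \im{\twf{\tlenv}} and \im{\ttyjudg{\tlenv}{\te}{\tA}} --- equivalently, on the model derivations \im{\mjudg{\tlenv}{\te}{\tA}{\se}} of \fullref[]{fig:model}, on which the translation is defined --- where \im{\tlenv^{\circ}} denotes the pointwise translation of the environment, sending \im{\tx:\tA} to \im{\sx:\tA^{\circ}} and \im{\tx=\te:\tA} to \im{\sx=\te^{\circ}:\tA^{\circ}}. For well-formedness, \rulename{W-Empty} is immediate, and \rulename{W-Assum} and \rulename{W-Def} follow from the inductive hypotheses together with the corresponding CC rules, using the fact that a target universe translates to the corresponding source universe (so \im{\tU^{\circ}} is always \im{\sstarty} or \im{\sboxty}). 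For typing, the guiding observation is that CC-CC was designed so that each of its typing rules mirrors a CC rule up to the translation, and indeed the \im{\modelsym} rules exhibit exactly this correspondence; consequently the bulk of the cases close by applying the inductive hypotheses to the premises and re-deriving the conclusion with the matching CC rule. The only cases requiring thought are \rulename{Code}, \rulename{Clo}, \rulename{Conv}, the two rules for \im{\tfont{Unit}}, and the rules whose conclusion type carries a substitution (\rulename{Let}, \rulename{App}, \rulename{Snd}).

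All of the substitution cases --- and in particular \rulename{Clo} --- go through the same way, by using \fullref[]{lem:m:subst} (\emph{compositionality}) to commute the translation past the substitution. For \rulename{Clo}: the closure \im{\tcloe{\te}{\tepr}} translates to the application \im{\sappe{\te^{\circ}}{\tepr^{\circ}}}; by the inductive hypotheses \im{\te^{\circ}} has type \im{(\tcodety{\txpr:\tApr,\tx:\tA}{\tB})^{\circ} = \spity{\sxpr}{\tApr^{\circ}}{\spity{\sx}{\tA^{\circ}}{\tB^{\circ}}}} (read off from \rulename{M-T-Code-*} and \rulename{M-T-Code-\(\square\)}) while \im{\tepr^{\circ}} has type \im{\tApr^{\circ}}, so CC's \rulename{App} gives \im{\sappe{\te^{\circ}}{\tepr^{\circ}} : \subst{(\spity{\sx}{\tA^{\circ}}{\tB^{\circ}})}{\tepr^{\circ}}{\sxpr}}; pushing the substitution inward and then applying \fullref[]{lem:m:subst} rewrites the result type as \im{\spity{\sx}{(\subst{\tA}{\tepr}{\txpr})^{\circ}}{(\subst{\tB}{\tepr}{\txpr})^{\circ}}}, which is exactly \im{(\tpity{\tx}{\subst{\tA}{\tepr}{\txpr}}{\subst{\tB}{\tepr}{\txpr}})^{\circ}}. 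This is the step that formally cashes out the slogan that a closure is a delayed partial application. For \rulename{Code}: the code \im{\tnfune{(\txpr:\tApr,\tx:\tA)}{\te}} translates to the doubly curried \im{\sfune{\sxpr}{\tApr^{\circ}}{\sfune{\sx}{\tA^{\circ}}{\te^{\circ}}}}; since its body is type-checked in the empty context extended by the two parameters, after applying the inductive hypothesis we weaken the resulting CC judgment into the ambient context \im{\tlenv^{\circ}} --- the translation of a term does not depend on its typing context, so this is just the standard weakening lemma for CC --- and then apply \rulename{Lam} twice. For \rulename{Conv}: the inductive hypothesis on the subject gives \im{\te^{\circ}:\tA^{\circ}}, the inductive hypothesis on \im{\ttyjudg{\tlenv}{\tB}{\tU}} gives \im{\tB^{\circ}:\tU^{\circ}}, and \fullref[]{lem:m:coherence} (\emph{coherence}) gives \im{\tA^{\circ}\equiv\tB^{\circ}}, so CC's \rulename{Conv} yields \im{\te^{\circ}:\tB^{\circ}}. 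Finally, \im{\tfont{Unit}} and its element translate to the type \im{\spity{\salpha}{\sstarty}{\spity{\sx}{\salpha}{\salpha}}} and the term \im{\sfune{\salpha}{\sstarty}{\sfune{\sx}{\salpha}{\sx}}}, which are well typed in CC by the impredicative \rulename{Prod-*} and by \rulename{Lam}.

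I expect the mild bookkeeping in the \rulename{Clo} case to be the main obstacle: one must check that the variable the CC \rulename{App} rule substitutes for is precisely the environment variable being discharged from the code type, and that the capture-avoidance side conditions line up, so that \fullref[]{lem:m:subst} applies verbatim. The \rulename{Code} case is the only place the argument steps outside the lemmas already established, and it needs nothing beyond the routine weakening property of CC. Notably, no case has to revisit the \(\eta\)-rule for closures --- that obligation was discharged once and for all in \fullref[]{lem:m:coherence} --- so, as the shape of the target typing rules already suggests, the proof is otherwise a straightforward rule-by-rule translation, following the same structure as the type-preservation proof for closure conversion in \fullref[]{sec:cc}.
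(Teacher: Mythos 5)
Your proposal is correct and follows essentially the same route as the paper's proof: mutual induction on the two judgments, invoking compositionality (\fullref[]{lem:m:subst}) to commute the translation past the substitution in the \rulename{Clo} (and \rulename{App}, \rulename{Let}, \rulename{Snd}) result types, coherence (\fullref[]{lem:m:coherence}) for \rulename{Conv}, and direct rule-matching everywhere else. The only difference is that you make explicit the weakening step in the \rulename{Code} case (moving the curried function from the empty context into \im{\tlenv^\circ}), which the paper's proof elides but which is indeed needed and unproblematic.
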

\techrptonly{
\begin{proof}
  We prove parts 1 and 2 simultaneously by induction on the mutually defined judgments
  \im{\twf{\tlenv}} and \im{\ttyjudg{\tlenv}{\te}{\tA}}.
  Most cases follow easily by the induction hypothesis.
  \begin{itemize}
    \techrptonly{
      \case \rulename{W-Empty} Trivial.

      \case \rulename{W-Def}
      We must show that \im{\swf{(\tlenv,\tx = \te : \tA)}^\circ}.
      By \rulename{W-Def} in CC and part 1 of the inductive hypothesis, it suffices to show that
      \im{\styjudg{\tlenv^\circ}{\te^\circ}{\tA^\circ}}, which follows by part 2 of the inductive
      hypothesis applied to \im{\ttyjudg{\tlenv}{\te}{\tA}}.

      \case \rulename{W-Assum}
      We must show that \im{\swf{(\tlenv,\tx : \tA)}^\circ}.
      By \rulename{W-Assum} in CC and part 1 of the inductive hypothesis, it suffices to show that
      \im{\styjudg{\tlenv^\circ}{\tA^\circ}{\tU^\circ}}, which follows by part 2 of the inductive
      hypothesis applied to \im{\ttyjudg{\tlenv}{\tA}{\tU}}.

      \case \rulename{Ax-*}
      It suffices to show that \im{\swf{\tlenv^\circ}}, since \im{\tstarty^\circ = \sstarty},
      which follows by part 1 of the inductive hypothesis.

      \item[\vdots]
    }
    \case \rulename{T-Code-*}

    \noindent We have that
    \begin{inlinedisplay}
      \inferrule
      {\ttyjudg{\tlenv}{\tApr}{\tUpr} \\
        \ttyjudg{\tlenv,\txpr:\tApr}{\tA}{\tU} \\
        \ttyjudg{\tlenv,\txpr:\tApr,\tx:\tA}{\tB}{\tstarty}}
      {\ttyjudg{\tlenv}{\tcodety{\txpr:\tApr,\tx:\tA}{\tB}}{\tstarty}}
    \end{inlinedisplay}

    \noindent We must show that
    \im{\styjudg{\tlenv^\circ}{\spity{\sxpr}{\tA^{\tprime\circ}}{\spity{\sx}{\tA^\circ}{\tB^\circ}}}{\sstarty}}

    \noindent By two applications of \rulename{Prod-*}, it suffices to show
    \begin{itemize}
      \item \im{\styjudg{\tlenv^\circ}{\tA^{\tprime\circ}}{\tU^{\tprime\circ}}}, which follows by part
        2 of the inductive hypothesis.
      \item \im{\styjudg{\tlenv^\circ,\sxpr:\tA^{\tprime\circ}}{\tA^\circ}{\tU^\circ}},
        which follows by part 2 of the inductive hypothesis.
      \item \im{\styjudg{\tlenv^\circ,\sxpr:\tA^{\tprime\circ},\sx:\tA^\circ}{\tB^\circ}{\sstarty}},
        which follows by part 2 of the inductive hypothesis and by definition that \im{\tstarty^\circ
          = \sstarty}
    \end{itemize}

    \case \rulename{Code}

    \noindent We have that
    \begin{inlinedisplay}
      \inferrule
      {\ttyjudg{\tlenv,\txpr:\tApr,\tx:\tA}{\te}{\tB}}
      {\ttyjudg{\tlenv}{\tnfune{\txpr:\tApr,\tx:\tA}{\te}}{\tcodety{\txpr:\tApr,\tx:\tA}{\tB}}}
    \end{inlinedisplay}

    \noindent By definition of the translation, we must show
    \im{\styjudg{\tlenv^\circ}{\sfune{\sxpr}{\tA^{\tprime\circ}}{\sfune{\sx}{\tA^\circ}{\te^\circ}}}
      {\spity{\sxpr}{\tA^{\tprime\circ}}{\spity{\sx}{\tA^\circ}{\tB^\circ}}}}, which follows by
    two uses of \rulename{Lam} in CC and part 2 of the inductive hypothesis.

    \case \rulename{Clo}

    \noindent We have that
    \begin{inlinedisplay}
      \inferrule
      {\ttyjudg{\tlenv}{\te}{\tcodety{\txpr:\tApr,\tx:\tA}{\tB}} \\
        \ttyjudg{\tlenv}{\tepr}{\tApr}}
      {\ttyjudg{\tlenv}{\tcloe{\te}{\tepr}}{\tpity{\tx}{\subst{\tA}{\tepr}{\txpr}}{\subst{\tB}{\tepr}{\txpr}}}}
    \end{inlinedisplay}

    \noindent By definition of the translation, we must show that
    \im{\styjudg{\tlenv^\circ}{\sappe{\te^\circ}{\te^{\tprime\circ}}}{({\tpity{\tx}{\subst{\tA}{\tepr}{\txpr}}{\subst{\tB}{\tepr}{\txpr}}})^\circ}}.

    \noindent By \fullref{lem:m:subst}, it suffices to show that
    \im{\styjudg{\tlenv^\circ}{\sappe{\te^\circ}{\te^{\tprime\circ}}}{\spity{\sx}{\subst{\tA^\circ}{\te^{\tprime\circ}}{\sxpr}}{\subst{\tB^\circ}{\te^{\tprime\circ}}{\sxpr}}}}.

    \noindent By \rulename{App} in CC, it suffices to show that
    \begin{itemize}
    \item
      \im{\styjudg{\tlenv^\circ}{\te^\circ}{\spity{\sxpr}{\sApr}{\spity{\sx}{\tA^\circ}{\tB^\circ}}}},
      which follows with \im{\sApr = \tA^{\tprime\circ}} by part 2 of the inductive hypothesis.
    \item \im{\styjudg{\tlenv^\circ}{\te^{\tprime\circ}}{\sApr}}, which follows by part 2 of the
      inductive hypothesis.\paperonly{\qedhere}
    \end{itemize}
    \techrptonly{
      \case \rulename{App}
      Similar to the case for \rulename{Clo}.

      \case \rulename{Conv}
      Follows by part 2 of the inductive hypothesis and \fullref{lem:m:coherence}.
    }
  \end{itemize}
\end{proof}
}

\FigCCTerm[!th]

Finally, we can prove the desired consistency and type safety theorems.
\begin{theorem}[Consistency of CC-CC]
  \label{thm:m:sound}
  There does not exist a closed expression \im{\te} such that \im{\ttyjudg{\cdot}{\te}{\tfont{False}}}.
\end{theorem}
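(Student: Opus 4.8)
The plan is to obtain consistency of CC-CC as an immediate corollary of the consistency of CC, using the model of \fullref[]{fig:model} and the lemmas just established; the argument is a short proof by contradiction. Suppose, toward a contradiction, that there were a closed CC-CC expression \im{\te} with \im{\ttyjudg{\cdot}{\te}{\tfont{False}}}. Since the translation \im{\te \mapsto \te^\circ} is defined (and total) on typing derivations, and we are handed exactly such a derivation, we may form the CC expression \im{\te^\circ}.

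First I would apply \fullref[]{lem:m:type-pres}, part~2, to the derivation of \im{\ttyjudg{\cdot}{\te}{\tfont{False}}}, obtaining \im{\styjudg{\cdot^{\circ}}{\te^\circ}{\tfont{False}^{\circ}}}. The empty environment is preserved by the model---immediate from the environment-translation rules, whose only base case sends \im{\cdot} to \im{\cdot}---so this is \im{\styjudg{\cdot}{\te^\circ}{\tfont{False}^{\circ}}}. Next I would rewrite the type using \fullref[]{sec:m:false-pres} (False Preservation): \im{\tfont{False}^{\circ} = \sfont{False}}, and crucially this is a \emph{syntactic} identity (both unfold to \im{\spity{\sA}{\sstarty}{\sA}}), so no appeal to \rulename{Conv} or coherence is needed at this point. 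Hence \im{\styjudg{\cdot}{\te^\circ}{\sfont{False}}}: a closed CC proof of \im{\sfont{False}}. Finally I would invoke the consistency of CC---a standard result for the Calculus of Constructions with \im{\sfontsym{\Sigma}} types, a consequence of strong normalization---which says no closed CC expression inhabits \im{\spity{\sA}{\sstarty}{\sA}}. This contradicts the existence of \im{\te^\circ}, so no such \im{\te} exists, which is the claim.

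I do not expect any real obstacle here: all the difficulty has been front-loaded into \fullref[]{lem:m:type-pres} (which in turn rests on \fullref[]{lem:m:subst} and \fullref[]{lem:m:coherence}) and into the trivial \fullref[]{sec:m:false-pres}. The only points requiring a word of care are the two definitional facts used silently above---that \im{\cdot^\circ = \cdot} and that the model is total on the supplied derivation---both of which are read off directly from \fullref[]{fig:model}. The companion type-safety statement for CC-CC follows the same template: a stuck CC-CC configuration is transported along the model to CC using \fullref[]{lem:m:red*}, where it would contradict progress for CC; since reduction is weakly simulated, a term with undefined behavior in CC-CC would force one in CC, which cannot happen.
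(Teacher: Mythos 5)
Your proof is correct and follows exactly the paper's intended argument: use the model's type preservation (\fullref[]{lem:m:type-pres}) together with False Preservation (\fullref[]{sec:m:false-pres}) to turn a hypothetical closed CC-CC inhabitant of \im{\tfont{False}} into a closed CC inhabitant of \im{\sfont{False}}, contradicting the consistency of CC. The extra care you take about \im{\cdot^\circ = \cdot} and the syntactic (rather than up-to-conversion) identity of the translated \im{\tfont{False}} matches the paper's remarks, so nothing further is needed.
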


Type safety tells us that there is no undefined behavior that causes a program to get stuck before it
produces a value, and all programs terminate.
\begin{theorem}[Type Safety of CC-CC]
  \label{thm:m:safe}
  If \im{\ttyjudg{\cdot}{\te}{\tA}}, then \im{\te \stepstar \tv} and \im{\tv \not\step \tvpr}.
\end{theorem}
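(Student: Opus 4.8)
The plan is to derive type safety of CC-CC from type safety of CC by pushing computations across the model \im{\cdot^\circ}, reusing exactly the lemmas already set up for consistency. For a closed, well-typed \im{\te} two things need to be shown: \emph{termination}, that every reduction sequence from \im{\te} is finite, so \im{\te \stepstar \tv} for some normal form \im{\tv}; and \emph{progress}, that this normal form is in fact a value, whence \im{\tv \not\step \tvpr}. Throughout I rely on the standard subject-reduction lemma for CC-CC, so that every intermediate term stays closed and well typed and hence the translation \im{\cdot^\circ}, which is defined on typing derivations, applies to it.

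For termination I would argue by contradiction against strong normalization of CC (a standard property, subsuming its type safety). Lemma~\ref{lem:m:type-pres} gives \im{\styjudg{\cdot}{\te^\circ}{\tA^\circ}}, so \im{\te^\circ} has no infinite reduction sequence in CC. However, Lemma~\ref{lem:m:red} as stated is only a \emph{weak} simulation --- \im{\te \step \tepr} merely implies \im{\te^\circ \stepstar \te^{\tprime\circ}} --- which a priori allows an infinite CC-CC computation to collapse onto a finite CC one. So I would first strengthen it: inspecting the proof of Lemma~\ref{lem:m:red}, the closure \im{\beta}-rule decodes to exactly two \im{\beta}-steps and every other reduction rule to exactly one step, and \im{\cdot^\circ} commutes with term contexts (the structural reasoning already used for Lemma~\ref{lem:m:subst}); hence a single contextual CC-CC step maps to at least one CC step. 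An infinite reduction \im{\te = \te_0 \step \te_1 \step \cdots} in CC-CC would then produce an infinite reduction \im{\te_0^\circ \step^{+} \te_1^\circ \step^{+} \cdots} in CC, a contradiction. So \im{\te} is strongly normalizing, and a maximal reduction sequence yields \im{\te \stepstar \tv} with \im{\tv} a normal form.

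It remains to show \im{\tv} is a value, and here the model does \emph{not} help: a closure value \im{\tcloe{\te}{\tepr}} is sent to a \im{\beta}-redex \im{\sappe{\te^\circ}{\tepr^{\circ}}}, so \im{\cdot^\circ} does not reflect normal forms. Instead I would argue directly. Subject reduction gives \im{\ttyjudg{\cdot}{\tv}{\tA}}; then a canonical-forms lemma, now covering the new forms, shows a closed well-typed normal form is a value: one of \im{\tfontsym{\Pi}}-type is a closure \im{\tcloe{\te}{\tepr}}; of code type \im{\tcodety{\tn:\tApr,\tx:\tA}{\tB}}, a code \im{\tnfune{(\tn:\tApr,\tx:\tA)}{\te}}; of \im{\tfontsym{\Sigma}}-type, a pair; of \im{\tunitty}, the value \im{\tunite}; of \im{\tstarty}, a type former. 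Moreover no eliminator --- application, first projection, second projection --- applied to a closed normal form of the appropriate type is itself normal (it would be a \im{\beta}-, \im{\pi_1}-, or \im{\pi_2}-redex), a \im{\tfont{let}} is never normal, and a closed term has no free variables; so these are all the cases. Hence \im{\tv} is a value and \im{\tv \not\step \tvpr}.

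The main obstacle is precisely the mismatch between the \emph{weak} simulation the model provides and the \emph{strong} simulation one needs to reflect non-termination: termination of CC-CC does not come for free from termination of CC. Pinning this down --- checking that every CC-CC reduction step, in any context, still costs at least one CC step, in particular that decoding a closure application into nested partial applications erases no computation --- is the crux; the progress half is routine once Code, \im{\tcloe{\cdot}{\cdot}}, and \im{\tunitty} are added to the canonical-forms analysis, and the rest is just Lemmas~\ref{lem:m:type-pres} and~\ref{lem:m:red} plus strong normalization and type safety of CC.
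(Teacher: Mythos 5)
Your proposal follows the same overall strategy as the paper --- reduce type safety of CC-CC to that of CC via the model \im{\cdot^\circ} of \fullref[]{fig:model}, using Lemmas~\ref{lem:m:type-pres} and~\ref{lem:m:red} --- but you are right that the paper's prose argument (``if a term had undefined behavior, we could translate the term into a CC term with undefined behavior'') elides exactly the two points you isolate, and your repairs are the correct ones. For termination, Lemma~\ref{lem:m:red} as stated is only a weak simulation and cannot by itself reflect non-termination; your strengthening --- every CC-CC redex decodes to at least one CC step (two for closure \im{\beta}), and the translation commutes with reduction contexts --- is what makes the appeal to strong normalization of CC go through. One technicality worth flagging: since \im{\cdot^\circ} is defined on typing derivations, chaining the simulation along an infinite reduction sequence out of \im{\te} requires subject reduction of CC-CC to supply a derivation for each intermediate term, plus a coherence observation that the positive step count does not depend on which derivation is chosen; this is at the same level of informality as the paper's own use of the notation \im{\te^\circ}, but it is where the bookkeeping lives. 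For progress, your observation that the model cannot help --- \im{(\tcloe{\te}{\tepr})^\circ = \sappe{\te^\circ}{\te^{\tprime\circ}}} is itself a \im{\beta}-redex, so the translation does not reflect normal forms --- is a genuine and correct criticism of the sketch as literally stated, and a direct canonical-forms argument over the CC-CC typing rules (now covering \rulename{Code}, \rulename{Clo}, and \im{\tunitty}) together with subject reduction is the standard and necessary fix. In short: same route as the paper, with the two load-bearing details made explicit rather than hidden in ``reducing an expression in CC-CC is essentially equivalent to reducing the translated term in CC.''
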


\newcommand{\FigDFVs}[1][t]{
  \begin{figure}[#1]
    \paperonly{\vspace{-3ex}}
    \begin{displaymath}
      \begin{array}{rcl}
        \DFV{\slenv}{\se,\sB} &\defeq
        &\slenvin{0},\dots,\slenvin{n},\sxin{0}:\sAin{0},\dots,\sxin{n}:\sAin{n} \\
        &where & \sxin{0},\dots,{}\sxin{n} = \FV{(\se,\sB)} \\
                          && \styjudg{\slenv}{\sxin{0}}{\sAin{0}},\dots,\styjudg{\slenv}{\sxin{n}}{\sAin{n}}\\
                          && \slenvin{0} = \DFV{\slenv}{\sAin{0},\_} \\
                          && \qquad\!\!\!\! \vdots \\
                          && \slenvin{n} = \DFV{\slenv}{\sAin{n},\_}
      \end{array}
    \end{displaymath}
    \paperonly{\vspace{1ex}}
    \caption{CC Dependent Free Variable Sequences}
    \label{fig:cc:dfvs}
    \paperonly{\vspace{-1.5ex}}
  \end{figure}
}

\section{Closure Conversion}
{
    \allowdisplaybreaks
\label{sec:cc}
We present the closure conversion translation in \fullref[]{fig:cc:term}.
We define the following notation for this translation.

\begin{inlinedisplay}
  \se^+ \defeq \te~\where{\ccjudg{\slenv}{\se}{\sA}{\te}}
\end{inlinedisplay}

\noindent The CC-CC expression \im{\se^+} refers to the translation of the well-typed CC term \im{\se},
with typing derivation for \im{\se} as an implicit parameter.

Every case of the translation except for functions is trivial,
including application \rulename{CC-App}, since application is still the
elimination form for closures after closure conversion.
In the nontrivial case \rulename{CC-Lam}, we translate CC functions to CC-CC closures\paperonly{, as
described in \fullref[]{sec:idea}}.
The translation of a function \im{\sfune{\sx}{\sA}{\se}} produces a closure
\im{\tcloe{\teone}{\tetwo}}.
We compute the free variables (and their type annotations) of the function \im{\sfune{\sx}{\sA}{\se}},
\im{\sxi:\sAi\dots}, using the metafunction
\im{\DFV{\slenv}{\sfune{\sx}{\sA}{\se},\spity{\sx}{\sA}{\sB}}} defined shortly.
The first component \im{\teone} is closed code.
Ignoring the type annotation for a moment, the code
\im{\tnfune{(\tn,\tx)}{\tlete{\tnpaire{\txi\dots}}{\tn}{\se^+}}} projects each of the \im{\len{i}}
free variables \im{\txi\dots} from the environment \im{\tn} and binds them in the scope of the body \im{\se^+}.
But CC-CC is dependently typed, so we also bind the free variables from the environment in the
type annotation for the argument \im{\tx}, \ie, producing the annotation
\im{\tx:\tlete{\tnpaire{\txi\dots}}{\tn}{\sA^+}} instead of just \im{\tx:\sA^+}.
Next we produce the environment type \im{\tnsigmaty{(\txi:\sA^+\dots)}}, from the
free source variables \im{\sxi\dots} of types \im{\sAi\dots}.
We create the environment \im{\tetwo} by creating the dependent n-tuple \im{\tnpaire{\txi\dots}};
these free variables will be replaced by values at run time.

\FigDFVs
To compute the sequence of free variables and their types, we define the metafunction
\im{\DFV{\slenv}{\se,\sB}} in \fullref[]{fig:cc:dfvs}.
Just from the syntax of terms \im{\se,\sB}, we can compute some sequence of free variables
\im{\sxin{0},\dots,\sxin{n} = \FV{(\se,\sB)}}.
However, the types of these free variables \im{\sAin{0},\dots,\sAin{n}} may contain \emph{other} free
variables, and their types may contain still others, and so on!
We must, therefore, recursively compute the a sequence of free variables and their types with respect
to an environment \im{\slenv}.
Note that because the type \im{\sB} of a term \im{\se} may contain different free variables than the
term, we must compute the sequence with respect to both a term and its type.
However, in all recursive applications of this metafunction---\eg,
\im{\DFV{\slenv}{\sAin{0},\_}}---the type of \im{\sAin{0}} must be a universe and cannot have any free
variables.

\subsection{Type Preservation}
\label{sec:cc:type-pres}
First we prove type preservation, using the same staging as in \fullref[]{sec:target}.
After we show type preservation, we show correctness of separate compilation.
In CC, the lemmas required for type preservation do most of the work to allow us to prove correctness
of separate compilation, since type checking includes reduction and thus we prove preservation of
reduction sequences.
\omitthis{
Finally we show preservation and reflection of definitional equivalence.
This relies on our model from \fullref[]{sec:target} to essential act as a \emph{back-translation},
which is commonly needed when showing equivalence preservation and reflection results.
}

We first show \emph{compositionality}.
This lemma, which establishes that translation commutes with
substitution, is the key difficulty in our proof of type preservation because closure conversion
internalizes free variables.
Whether we substitute a term for a variable before or after translation can drastically affect the
shape of closures produced by the translation.
For instance, consider the term \im{\subst{(\sfune{\sy}{\sA}{\se})}{\sepr}{\sx}}.
If we perform this substitution before translation, then we will generate an environment with the
shape \im{\tnpaire{\txi\dots,\txin{j}\dots}}, \ie, with only free variables and without \im{\tx} in
the environment.
However, if we translate the individual components and then perform the substitution, then the
environment will have the shape \im{\tnpaire{\txi\dots,\se^{\sprime+},\txin{j}\dots}}---that is,
\im{\tx} would be free when we create the environment and substitution would replace it by
\im{\se^{\sprime+}}.
We use our \(\eta\)-principle for closures to show that closures that differ in this way are still
equivalent.
\begin{lemma}[Compositionality]
  \label{lem:cc:subst}
  \im{\tr{(\subst{\seone}{\setwo}{\sx})} \equiv \subst{\tr{\seone}}{\tr{\setwo}}{\sx}}
\end{lemma}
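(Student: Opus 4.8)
The plan is to prove this by induction on the typing derivation of \im{\seone} (equivalently, on the closure-conversion derivation that defines \im{\tr{\seone}}), case-splitting on its last rule; by the standard CC substitution lemma \im{\subst{\seone}{\setwo}{\sx}} is well typed, so the left-hand side is well defined. This mirrors the model-side compositionality lemma \fullref[]{lem:m:subst}, but with one genuinely new complication. For every rule except \rulename{CC-Lam} the translation is a structural congruence and \im{\subst{\cdot}{\setwo}{\sx}} commutes with the corresponding term former up to capture avoidance, so the claim follows from the induction hypotheses on the immediate subterms together with the fact that \im{\equiv} is a congruence: \rulename{CC-Var} yields two \emph{syntactically equal} sides once we distinguish \im{\sy = \sx} from \im{\sy \neq \sx}; \rulename{CC-App}, \rulename{CC-Let}, \rulename{CC-Prod-*}, \rulename{CC-Sig-*}, \rulename{CC-Fst}, and \rulename{CC-Snd} push the substitution through and reassemble; and \rulename{CC-Conv} is immediate since it leaves \im{\tr{\cdot}} unchanged.

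The interesting case is \rulename{CC-Lam}, where \im{\seone = \sfune{\sy}{\sA}{\se}}; by capture avoidance we may take \im{\sy} distinct from \im{\sx} and not free in \im{\setwo}. Let \im{\sxi:\sAi\dots} be the dependent free-variable sequence \im{\DFV{\slenv}{\sfune{\sy}{\sA}{\se},\spity{\sy}{\sA}{\sB}}} used by the rule. If \im{\sx} does not occur in this sequence, then it occurs nowhere free in \im{\tr{\seone}}---the only free occurrences in the resulting closure are the \im{\txi} forming its environment, since the underlying \rulename{Code} is closed---so \im{\subst{\tr{\seone}}{\tr{\setwo}}{\sx} = \tr{\seone} = \tr{(\subst{\seone}{\setwo}{\sx})}} and the case is trivial. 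Otherwise \im{\sx} is the \im{k}-th of the \im{\txi}. The right-hand side then substitutes \im{\tr{\setwo}} only into the environment component (the code being closed and untouched), producing a closure whose environment is \im{\tnpaire{\txi\dots}} with its \im{k}-th entry replaced by \im{\tr{\setwo}}; the left-hand side instead recomputes the dependent free-variable sequence of \im{\sfune{\sy}{\subst{\sA}{\setwo}{\sx}}{\subst{\se}{\setwo}{\sx}}}---a ``flattened'' list containing the free variables of \im{\setwo} in place of \im{\sx}---and builds a closure over \emph{that} sequence. These two closures are not \im{\alpha}-equivalent and, being values, share no common reduct, so ordinary conversion cannot relate them; this is precisely where the \im{\eta}-principle for closures is required.

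I would close the case using the closure \im{\eta}-rules \rulename{\figm{\equiv}-Clo\figm{_1}}/\rulename{\figm{\equiv}-Clo\figm{_2}} of \fullref[]{fig:cc-cc:conv}. Applying such a rule reduces the goal to comparing the two code bodies after inlining each closure's environment for its environment parameter \im{\tn} (the closure argument stays free). On the left, inlining turns the body into a pattern-match of the (new) free-variable tuple against the syntactically identical tuple, which---unfolding the \im{n}-tuple and pattern-match sugar into nested projections---reduces to \im{\tr{(\subst{\se}{\setwo}{\sx})}}, each variable being bound to itself. On the right the same inlining binds every free variable to itself except the \im{k}-th, which is bound to \im{\tr{\setwo}}, so the body reduces to \im{\subst{\tr{\se}}{\tr{\setwo}}{\sx}}. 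By the induction hypothesis on the subderivation for \im{\se}, \im{\tr{(\subst{\se}{\setwo}{\sx})} \equiv \subst{\tr{\se}}{\tr{\setwo}}{\sx}}, which finishes the case. Note that the \im{\tnfune} binder annotations never need to coincide, because the \im{\eta}-rules inspect only the code body, and no typing or coherence obligation arises since CC-CC equivalence is untyped (as already exploited in \fullref[]{sec:consistency}).

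The main obstacle is entirely in the \rulename{CC-Lam} case: recognizing that the statement genuinely forces the new closure \im{\eta}-rule---ordinary \im{\beta\delta\zeta}-conversion is too weak, since both translations are closure values differing in environment shape---and then making the bookkeeping of the two differently shaped environments line up once they are inlined. The rest (the congruence cases, unfolding the \im{n}-tuple and pattern-match sugar, and the capture-avoidance side conditions) is routine.
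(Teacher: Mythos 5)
Your proposal is correct and follows essentially the same route as the paper's proof: induction on the typing derivation, with all cases except \rulename{Lam} handled by structural congruence, and the \rulename{Lam} case closed by the closure \(\eta\)-rule \rulename{\figm{\equiv}-Clo\figm{_1}}, inlining both environments so that the left body \(\zeta\)-reduces to \im{\tr{(\subst{\se}{\setwo}{\sx})}} and the right body \(\beta\zeta\)-reduces to \im{\subst{\tr{\se}}{\tr{\setwo}}{\sx}}, which the induction hypothesis equates. Your explicit handling of the subcase where \im{\sx} is not among the free variables, and your observation that the two closures are values with no common reduct (so the \(\eta\)-rule is genuinely forced), are consistent with and slightly more detailed than the paper's presentation.
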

\begin{proof}
  By induction on the typing derivation for \im{\seone}. We give the key cases.
  \begin{itemize}
    \case \rulename{Ax-Var}

    \noindent We know that \im{\seone} is some free variable \im{\sxpr}, so either
    \im{\sxpr = \sx}, hence \im{\setwo^+ \equiv \setwo^+}, or
    \im{\sxpr \neq \sx}, hence \im{\sx^{\sprime+} \equiv \sx^{\sprime+}}.

    \case \rulename{T-Code-*}

    \noindent We know that \im{\seone = \spity{\sxpr}{\sA}{\sB}}. W.l.o.g., assume \im{\sxpr \neq \sx}.
    We must show
    \im{(\spity{\sxpr}{\subst{\sA}{\setwo}{\sx}}{\subst{\sB}{\setwo}{\sx}})^+ \equiv
      \subst{(\spity{\sxpr}{\sA}{\sB})^+}{\setwo^+}{\tx}}.
    \paperonly{\vspace{-2ex}}
    \begin{align}
      & (\spity{\sxpr}{\subst{\sA}{\setwo}{\sx}}{\subst{\sB}{\setwo}{\sx}})^+ \\
      =~& \tpity{\txpr}{(\subst{\sA}{\setwo}{\sx})^+}{(\subst{\sB}{\setwo}{\sx})^+}
        \\ & \text{by definition of the translation} \nonumber \\
      =~& \tpity{\txpr}{(\subst{\sA^+}{\setwo^+}{\tx})}{(\subst{\sB^+}{\setwo^+}{\tx})}
        \\ & \text{by the inductive hypothesis for \im{\sA} and \im{\sB}}
             \nonumber \\
      =~& \subst{(\tpity{\txpr}{\sA^+}{\sB^+})}{\setwo^+}{\tx}
          \\ & \text{by definition of substitution} \nonumber \\
      =~& \subst{(\spity{\sxpr}{\sA}{\sB})^+}{\setwo^+}{\tx}
          \\ & \text{by definition of translation} \nonumber
    \end{align}

    \techrptonly{\case \rulename{Prod-$\square$}. Similar to \rulename{Prod-*}}%

    \paperonly{\vspace{-1ex}}
    \case \rulename{Lam}

    \noindent We know that \im{\seone = \sfune{\sy}{\sA}{\se}}. W.l.o.g., assume that \im{\sy \neq \sx}.
    We must show that
    \im{(\subst{(\sfune{\sy}{\sA}{\se})}{\setwo}{\sx})^+ \equiv \subst{(\sfune{\sy}{\sA}{\se})^+}{\setwo^+}{\sx}}.
    Recall that by convention we have that \im{\styjudg{\slenv}{\sfune{\sy}{\sA}{\se}}{\spity{\sy}{\sA}{\sB}}}.
    \begin{align}
      &(\subst{(\sfune{\sy}{\sA}{\se})}{\setwo}{\sx})^+ \\
      =~& (\sfune{\sy}{(\subst{\sA}{\setwo}{\sx})}{\subst{\se}{\setwo}{\sx}})^+
        \\ & \text{by substitution} \nonumber \\
      &\begin{aligned}
          \hspace{-2ex}=\tcloe{(\tnfune{&\tn:\tnsigmaty{(\txi:\sAi^+\dots)},\ty:\tlete{\tnpaire{\txi\dots}}{\tn}{(\subst{\sA}{\setwo}{\sx})^+}}{
          \\&\tlete{\tnpaire{\txi\dots}}{\tn}{(\subst{\se}{\setwo}{\sx})^+}})}{\tnpaire{\txi\dots}}
      \end{aligned}
      \\ & \text{by definition of the translation} \nonumber
    \end{align}
    where \im{\sxi:\sAi\dots {}= \DFV{\slenv}{\sfune{\sy}{(\subst{\sA}{\setwo}{\sx})}{\subst{\se}{\setwo}{\sx}}}}.
    Note that \im{\sx} is not in the sequence \im{(\sxi\dots)}.

    On the other hand, we have
    \begin{align}
      \tf=~&\subst{(\sfune{\sy}{\sA}{\se})^+}{\setwo^+}{\sx} \label{prf:tf}\\
      &\begin{aligned}
          \hspace{-2ex}=\tcloe{(\tnfune{&\tn:\tnsigmaty{(\txin{j}:\sAin{j}^+\dots)},\ty:\tlete{\tnpaire{\txin{j}\dots}}{\tn}{\sA^+}}{
          \\&\tlete{\tnpaire{\txin{j}\dots}}{\tn}{\se^+}})}{\tnpaire{\txin{j_0}\dots,\setwo^+,\txin{j_{i+1}}\dots }}
      \end{aligned}
      \\ & \text{by definition of the translation} \nonumber
    \end{align}
    where \im{\sxin{j}:\sAin{j}\dots {}= \DFV{\slenv}{\sfune{\sy}{\sA}{\se}}}.
    Note that \im{\sx} is in \im{\sxin{j}\dots};
    we can write the sequence as \im{(\sxin{j_0} \dots \sx,\sxin{j_{i+1}}\dots)}.
    Therefore, the environment we generate contains \im{\setwo^+} in position \im{j_i}.

    By \rulename{\im{\equiv}-Clo\im{_1}}, it suffices to show that

    \noindent \im{\tlete{\tnpaire{\txi\dots}}{\tnpaire{\txi\dots}}{(\subst{\se}{\setwo}{\sx})^+} \equiv
      \tappe{\tf}{\ty}}
    where \im{\tf} is the closure from \fullref[]{prf:tf}.
    \begin{align}
      \tappe{\tf}{\ty} \equiv~&
                        \tlete{\tnpaire{\txin{j_0} \dots
                               \tx,\txin{j_{i+1}\dots}}}{\tnpaire{\txin{j_0}\dots,\setwo^+,\txin{j_{i+1}}\dots}}{\se^+}
      \\& \text{by \im{\step_{\beta}} in CC-CC} \nonumber \\
      \equiv~& \subst{\se^+}{\setwo^+}{\tx}
      \\ & \text{by \len{j} applications of \im{\step_{\zeta}}, since only \im{\tx} has a value} \nonumber \\
      \equiv~& (\subst{\se}{\setwo}{\sx})^+
      \\ & \text{by the inductive hypothesis applied to the derivation for \im{\se}} \nonumber \\
      \equiv~& \tlete{\tnpaire{\txi\dots}}{\tnpaire{\txi\dots}}{(\subst{\se}{\setwo}{\sx})^+}
      \\ & \text{by \len{i} applications of \im{\step_{\zeta}}, since no variable has a value} \nonumber
    \end{align}
  \end{itemize}
\paperonly{\vspace{-4ex}\qedhere}
\paperonly{\vspace{-.5ex}}
\end{proof}

Next we show that if a source term \im{\se} takes a step, then its translation \im{\se^+} reduces in
some number of steps to a definitionally equivalent term \im{\te}.
This proof essentially follows by \fullref[]{lem:cc:subst}.
Then we show by induction on the length of the reduction sequence that the translation preserves
reduction sequences.
Note that since \fullref[]{lem:cc:subst} relies on our \(\eta\) equivalence rule for closures, we can
only show reduction up to definitional equivalence.
That is, we cannot show \im{\se^+ \stepstar {\sepr}^+}.
This is not a problem; we reason about source programs to equivalence anyway, and not up to syntactic
equality.
\begin{lemma}[Preservation of Reduction]
  \label{lem:cc:pres-red}
  If \im{\sstepjudg{\slenv}{\se}{\sepr}} then
  \im{\tstepjudg[\stepstar]{\tr{\slenv}}{\tr{\se}}{\te}} and \im{\te \equiv \se^{\sprime+}}
\end{lemma}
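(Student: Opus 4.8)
The plan is to proceed by case analysis on the reduction rule witnessing $\sstepjudg{\slenv}{\se}{\sepr}$ --- one of $\step_\delta$, $\step_\zeta$, $\step_\beta$, $\step_{\pi_1}$, or $\step_{\pi_2}$. By convention the translation $\tr{\se}$ comes with a derivation $\ccjudg{\slenv}{\se}{\sA}{\tr{\se}}$, so inspecting it (the transparent \rulename{CC-Conv} steps aside) both recovers the shape of $\se$ and pins down which translation rule produced $\tr{\se}$; similarly $\ccenvjudg{\slenv}{\tr{\slenv}}$ relates the environments. Three of the cases are essentially immediate. For $\step_\delta$, if $\se = \sx$ with $\sx = \sepr : \sA \in \slenv$, then \rulename{W-Def} puts $\tx = \tr{\sepr} : \tr{\sA}$ into $\tr{\slenv}$, so $\tx \step_\delta \tr{\sepr}$ there and we take $\te = \tr{\sepr} \equiv \tr{\sepr}$. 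For $\step_{\pi_1}$ and $\step_{\pi_2}$, pairs and projections translate homomorphically, so $\tr{\se}$ is a projection of a translated pair and reduces in one CC-CC step to the syntactically equal translation of the corresponding component. For $\step_\zeta$, $\se = \salete{\sx}{\seone}{\sA}{\setwo}$ translates by \rulename{CC-Let} to $\talete{\tx}{\tr{\seone}}{\tr{\sA}}{\tr{\setwo}}$, which $\step_\zeta$-reduces to $\subst{\tr{\setwo}}{\tr{\seone}}{\tx}$; by Compositionality (\fullref[]{lem:cc:subst}) this is equivalent to $\tr{(\subst{\setwo}{\seone}{\sx})} = \tr{\sepr}$, the required $\te$.

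The interesting case is $\step_\beta$, where $\se = \sappe{(\sfune{\sx}{\sA}{\seone})}{\setwo}$ and $\sepr = \subst{\seone}{\setwo}{\sx}$. By \rulename{CC-App} and \rulename{CC-Lam}, $\tr{\se}$ is the application of a closure $\tcloe{\te_0}{\tnpaire{\txi\dots}}$ to $\tr{\setwo}$, where $\te_0$ is the code that projects the free variables $\sxi:\sAi\dots = \DFV{\slenv}{\sfune{\sx}{\sA}{\seone},\spity{\sx}{\sA}{\sB}}$ out of its environment argument $\tn$ (via a body of the form $\tlete{\tnpaire{\txi\dots}}{\tn}{\tr{\seone}}$) before running $\tr{\seone}$, and where crucially $\tx$ is not one of the $\txi$. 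Firing the closure $\step_\beta$ rule substitutes the environment $n$-tuple for $\tn$ and $\tr{\setwo}$ for $\tx$ in the code body, yielding $\subst{(\tlete{\tnpaire{\txi\dots}}{\tnpaire{\txi\dots}}{\tr{\seone}})}{\tr{\setwo}}{\tx}$. The $n$-tuple pattern sugar unfolds into $\len{i}$ nested single-variable let-bindings whose bound values are the projections of $\tnpaire{\txi\dots}$ --- i.e., the $\txi$ themselves --- so a bounded sequence of (contextual) projection and $\step_\zeta$ reductions, each amounting to substituting a variable $\txin{k}$ for itself, collapses the body to $\subst{\tr{\seone}}{\tr{\setwo}}{\tx}$. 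Collecting these reductions yields $\tstepjudg[\stepstar]{\tr{\slenv}}{\tr{\se}}{\subst{\tr{\seone}}{\tr{\setwo}}{\tx}}$, and \fullref[]{lem:cc:subst} gives $\subst{\tr{\seone}}{\tr{\setwo}}{\tx} \equiv \tr{(\subst{\seone}{\setwo}{\sx})} = \tr{\sepr}$; so we take $\te = \subst{\tr{\seone}}{\tr{\setwo}}{\tx}$.

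I expect the main obstacle to be exactly the bookkeeping in the $\step_\beta$ case: verifying that after substituting $\tnpaire{\txi\dots}$ for $\tn$ the generated pattern-match bindings really are identity substitutions (so that no free variable of $\tr{\seone}$ is renamed or captured, and no $\txi$ collides with the bound $\tx$), and that the dependent type annotation on $\tx$ in the code, which also mentions $\tn$, is harmless because $\step_\beta$ discards it. This computation is essentially the one already carried out in the \rulename{Lam} case of the proof of \fullref[]{lem:cc:subst}, so it poses no genuinely new difficulty. The one point worth recording is that, just as in Compositionality, we obtain only $\te \equiv \tr{\sepr}$ and not $\te = \tr{\sepr}$, since the $\eta$-equivalence for closures is used indirectly through \fullref[]{lem:cc:subst}; this is unproblematic because CC and CC-CC reason about programs only up to $\equiv$ anyway.
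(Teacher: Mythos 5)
Your proposal is correct and follows essentially the same route as the paper's proof: case analysis on the reduction rule, with the substitution-based cases ($\delta$, $\zeta$, $\beta$) discharged via Compositionality (\fullref[]{lem:cc:subst}), and the $\beta$ case tracked through the closure application and the subsequent collapse of the identity let-bindings for the environment variables, yielding $\te \equiv \se^{\sprime+}$ rather than syntactic equality. The only difference is cosmetic bookkeeping in where the substitution of $\tr{\setwo}$ for $\tx$ is written relative to the pattern-match lets.
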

\begin{proof}
  By cases on \im{\sstepjudg{\slenv}{\se}{\sepr}}.
  Most cases follow easily by \fullref[]{lem:cc:subst}, since most cases of reduction are defined by substitution.
  \paperonly{We give representative cases; see our \material.}
  \begin{itemize}
    \techrptonly{
      \case \im{\sx \step_{\delta} \sepr} where \im{\sx = \sepr : \sA \in \slenv}.

      \noindent We must show that \im{\tx \stepstar \te} and \im{\se^{\sprime+} \equiv \te}.
      Let \im{\te \defeq \se^{\sprime+}}.
      It suffices to show that \im{\tx \stepstar \se^{\sprime+}}.
      By definition of the translation, we know that \im{\tx = \se^{\sprime+} : \sA^+ \in
        \tr{\slenv}} and \im{\tx \step_{\delta} \se^{\sprime+}}.

      \case \im{\slete{\sx}{\seone}{\setwo} \step_{\zeta} \subst{\setwo}{\seone}{\sx}}

      \noindent We must show that \im{(\slete{\sx}{\seone}{\setwo})^+ \stepstar \te} and
      \im{(\subst{\setwo}{\seone}{\sx})^+ \equiv \te}. Let \im{\te \defeq \subst{\setwo^+}{\seone^+}{\tx}}.
      \begin{align}
        (\slete{\sx}{\seone}{\setwo})^+ = ~&\tlete{\tx}{\seone^+}{\setwo^+} & \text{by definition of
                                                                              the translation} \\
        \step_{\zeta}~& \subst{\setwo^+}{\seone^+}{\tx} \\
        \equiv~&(\subst{\setwo}{\seone}{\sx})^+
        \\ & \text{by\fullref{lem:cc:subst}} \nonumber
      \end{align}
    }

    \case \im{\sappe{(\sfune{\sx}{\sA}{\seone})}{\setwo} \step_{\beta} \subst{\seone}{\setwo}{\sx}}

    \noindent We must show that \im{(\sappe{(\sfune{\sx}{\sA}{\seone})}{\setwo})^+ \stepstar \te} and
    \im{(\subst{\setwo}{\seone}{\sx})^+ \equiv \te}. Let \im{\te \defeq  {\subst{\seone^+}{\setwo^+}{\tx}}}.

    \noindent By definition of the translation,
    \im{(\sappe{(\sfune{\sx}{\sA}{\seone})}{\setwo})^+ = \tappe{\tf }{\setwo^+}}, where
    \begin{align}
      \tf =
        \tcloe{(\tnfune{&\tn:\tnsigmaty{(\txi:\sAi^+\dots)},\tx:\tlete{\tnpaire{\txi\dots}}{\tn}{\sA^+}}{
          \\&\tlete{\tnpaire{\txi\dots}}{\tn}{\seone^+}})}{\tnpaire{\txi\dots}}
    \end{align}

    \noindent and where \im{\sxi:\sAi\dots {}= \DFV{\slenv}{\sfune{\sx}{\sA}{\seone}}}.

    To complete the proof, observe that,
    \begin{align}
      \tappe{\tf}{\setwo^+} \step_{\beta}~
      & \tlete{\tnpaire{\txi\dots}}{\tnpaire{\txi\dots}}{\subst{\seone^+}{\setwo^+}{\tx}} \\
      {\step_{\zeta}^{\len{i}}}~& {\subst{\seone^+}{\setwo^+}{\tx}} \\
      \equiv~&(\subst{\seone}{\setwo}{\sx})^+ & \text{by \fullref[]{lem:cc:subst}}
    \end{align}
  \end{itemize}
  \paperonly{\vspace{-1ex}\qedhere}
\end{proof}

\begin{lemma}[Preservation of Reduction Sequences]
  \paperonly{\vspace{-1ex}}
  \label{lem:cc:pres-norm}
  If \im{\sstepjudg[\stepstar]{\slenv}{\se}{\sepr}} then
  \im{\tstepjudg[\stepstar]{\tr{\slenv}}{\tr{\se}}{\te}} and \im{\tequivjudg{\tr{\slenv}}{\te}{\se^{\sprime+}}}.
\end{lemma}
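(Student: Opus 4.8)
The lemma is the closure-conversion analog of \fullref[]{lem:m:red*}, and I would prove it the same way: by induction on the length $n$ of the reduction sequence \im{\sstepjudg[\stepstar]{\slenv}{\se}{\sepr}}, viewing \im{\stepstar} as a finite composition of single (possibly contextual) reduction steps under \im{\slenv}. When $n = 0$ we have \im{\se = \sepr}, so we take \im{\te \defeq \tr{\se}}; then \im{\sstepjudg[\stepstar]{\tr{\slenv}}{\tr{\se}}{\te}} and \im{\tequivjudg{\tr{\slenv}}{\te}{\tr{\sepr}}} hold by reflexivity. When $n > 0$, decompose the sequence as one contextual step from \im{\se} to \im{\seone}, followed by \im{\sstepjudg[\stepstar]{\slenv}{\seone}{\sepr}} of length $n-1$.

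To handle a single contextual step I would first prove a congruence strengthening of \fullref[]{lem:cc:pres-red}: if \im{\se} head-reduces to \im{\sepr} inside a one-hole context \im{C}, then there is a \im{\te} with \im{\sstepjudg[\stepstar]{\tr{\slenv}}{\tr{(C[\se])}}{\te}} and \im{\tequivjudg{\tr{\slenv}}{\te}{\tr{(C[\sepr])}}}. This goes by induction on \im{C}. The base case \im{C = [\,]} is exactly \fullref[]{lem:cc:pres-red}. When the head constructor of \im{C} is an application, a projection, a \(\Pi\) or \(\Sigma\) former, a pair, or a \emph{let}, the translation is structural in the relevant position, so the inductive hypothesis supplies a reduction inside that subterm, which we lift to the whole term using congruence of \(\equiv\) (descending under a \emph{let} adds the corresponding local definition to the environment, as prescribed by \fullref[]{fig:cc:term}). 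The delicate case is when the hole lies in the body or the type annotation of a function: here \im{\tr{(C[\se])}} is a closure whose environment \im{\tnpaire{\txi\dots}} is built from the free variables of the abstraction, and reducing inside the body or annotation can change that set, so reduction does not literally produce \im{\tr{(C[\sepr])}}. We bridge the gap exactly as in the \rulename{Lam} case of \fullref[]{lem:cc:subst}: the \(\eta\)-equivalence rules for closures (\fullref[]{fig:cc-cc:conv}) let us inline both environments and run the code bodies, reducing the comparison to the inductive hypothesis on the body.

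Finally I would chain the two halves. The congruence lemma applied to the first step gives \im{\tr{\se} \stepstar \teone} with \im{\tequivjudg{\tr{\slenv}}{\teone}{\tr{\seone}}}; the induction hypothesis applied to the remaining $n-1$ steps gives \im{\tr{\seone} \stepstar \tetwo} with \im{\tequivjudg{\tr{\slenv}}{\tetwo}{\tr{\sepr}}}. Since reduction is contained in definitional equivalence, \im{\tr{\seone} \equiv \tetwo \equiv \tr{\sepr}}, so transitivity of \(\equiv\) gives \im{\tequivjudg{\tr{\slenv}}{\teone}{\tr{\sepr}}}, and taking \im{\te \defeq \teone} discharges both goals. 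The main obstacle is the function case of the congruence lemma --- reducing under a \(\lambda\) shifts the abstraction's free-variable set and hence the shape of the generated closure, so the reduced closure is only \emph{equivalent} to, not syntactically equal to, the translation of the reduced term. This is the same essential difficulty already solved in \fullref[]{lem:cc:subst}, and it is again discharged by the closure \(\eta\)-rule; the chaining step additionally relies on transitivity (and congruence) of \(\equiv\), standard properties of CC-CC that the model of \fullref[]{sec:consistency} reduces to the corresponding properties of CC.
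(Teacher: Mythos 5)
Your proof is correct and follows essentially the same route as the paper's: induction on the length of the reduction sequence, with the base case by reflexivity and the inductive case chaining the single-step preservation lemma with the induction hypothesis via transitivity of \(\equiv\) (the paper also notes, as you implicitly assume, that subject reduction in CC is needed so that \(\seone^{+}\) is defined). Your explicit congruence lemma for reduction under a one-hole context --- including the observation that reducing under a \(\lambda\) changes the free-variable set and hence the closure's environment, which must be repaired with the closure \(\eta\)-rule as in the \rulename{Lam} case of compositionality --- is a level of care the paper's proof elides by appealing directly to its single-step lemma, and is a genuine (and welcome) tightening rather than a different approach.
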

\techrptonly{
  \begin{proof}
    By induction on the length of the reduction sequence \im{n}.
    \begin{cases}
      \case \im{n = 0}
      Therefore \im{\sepr = \se}.

      Let \im{\te = \tr{\se}}

      By definition, \im{\tr{\se} \step^0 \tr{\se}}
      and \im{\tr{\se} \equiv \tr{\se}} by reflexivity.

      \case \im{n = i+1}
      By assumption,
      \im{\sstepjudg[\step]{\slenv}{\se}{\seone}} and
      \im{\sstepjudg[\stepstar]{\slenv}{\seone}{\sepr}}.

      It suffices to show that \im{\tr{\se} \stepstar \teone} and \im{\teone \equiv \se^{\sprime+}}.

      By \fullref[]{lem:cc:pres-red}, \im{\tr{\se} \stepstar \teone} and \im{\teone \equiv \tr{\seone}}.
      Note that our notation for translation, \im{^+}, requires that we have a typing derivation for
      \im{\seone}, thus here we rely on subject reduction of CC to know that such a derivation exists.

      It remains to be show that \im{\teone \equiv \se^{\sprime+}}.

      By the induction hypothesis, \im{\tr{\seone} \stepstar \te} and \im{\te \equiv \se^{\sprime+}}.

      Since \im{\tr{\seone} \stepstar \te}, by \rulename{\im{\equiv}}, \im{\tr{\seone} \equiv \te}.

      The goal follows by transitivity: \im{\teone \equiv \tr{\seone} \equiv \te \equiv
        \se^{\sprime+}}, therefore \im{\seone \equiv \se^{\sprime+}}.
    \end{cases}
  \end{proof}
}

We can now show \emph{coherence}, \ie, that equivalent terms are translated to equivalent terms.
As equivalence is defined primarily by \im{\stepstar}, the only interesting part of the next proof is
preserving \(\eta\) equivalence.
To show that \(\eta\) equivalence is preserved, we require our new \(\eta\) rules for closures.
\begin{lemma}[Coherence]
  \label{lem:cc:pres-equiv}
  If \im{\sequivjudg{\slenv}{\se}{\sepr}}, then
  \im{\tequivjudg{\tr{\slenv}}{\tr{\se}}{\seto{\sprime+}}}.
\end{lemma}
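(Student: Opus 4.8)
The plan is to induct on the derivation of \im{\sequivjudg{\slenv}{\se}{\sepr}}, which must end in one of the three rules of \fullref[]{fig:cc:conv}. The case of \rulename{\figm{\equiv}} is routine: we have a common reduct \im{\se_0} with \im{\se \stepstar \se_0} and \im{\sepr \stepstar \se_0}, so \fullref[]{lem:cc:pres-norm} applied to each reduction gives \im{\tr{\se} \stepstar \te_1} with \im{\te_1 \equiv \tr{\se_0}} and \im{\tr{\sepr} \stepstar \te_2} with \im{\te_2 \equiv \tr{\se_0}}, all in \im{\tr{\slenv}}; then \im{\tr{\se} \equiv \tr{\sepr}} follows by \rulename{\figm{\equiv}} in CC-CC together with transitivity of CC-CC equivalence. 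Nothing new is required here.

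The two \(\eta\) cases are the heart of the proof and are mirror images of each other; I describe \rulename{\figm{\equiv}-\figm{\eta_1}}, whose premises are \im{\se \stepstar \sfune{\sx}{\sA}{\se_0}}, \im{\sepr \stepstar \sepr_0}, and a sub-derivation of \im{\sequivjudg{\slenv,\sx:\sA}{\se_0}{\sappe{\sepr_0}{\sx}}}. By \fullref[]{lem:cc:pres-norm} and transitivity, it suffices to show \im{\tr{(\sfune{\sx}{\sA}{\se_0})} \equiv \tr{\sepr_0}} in \im{\tr{\slenv}}. By definition of the translation (\rulename{CC-Lam}), the left-hand side is already a closure \im{\tcloe{\te_c}{\tnpaire{\txi\dots}}} whose code \im{\te_c} has body \im{\tlete{\tnpaire{\txi\dots}}{\tn}{\tr{\se_0}}}, so I apply \rulename{\figm{\equiv}-Clo\figm{_1}}, discharging both reduction-sequence premises reflexively. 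Substituting the environment \im{\tnpaire{\txi\dots}} for \im{\tn} into the code body and then performing the same \(\zeta\)-reductions as in the last step of the \rulename{Lam} case of \fullref[]{lem:cc:subst}---each of which binds a variable to itself---turns the remaining premise into the obligation \im{\tr{\se_0} \equiv \tappe{\tr{\sepr_0}}{\tx}}. Since \im{\tappe{\tr{\sepr_0}}{\tx} = \tr{(\sappe{\sepr_0}{\sx})}} by definition of the translation of application, this obligation is precisely what the induction hypothesis yields when applied to the sub-derivation \im{\sequivjudg{\slenv,\sx:\sA}{\se_0}{\sappe{\sepr_0}{\sx}}}. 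The case \rulename{\figm{\equiv}-\figm{\eta_2}} is symmetric, using \rulename{\figm{\equiv}-Clo\figm{_2}}.

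The step I expect to need the most care is reconciling the local environments. The induction hypothesis delivers the body equivalence under \im{\tr{\slenv},\tx:\tr{\sA}}, whereas \rulename{\figm{\equiv}-Clo\figm{_1}} states it under \im{\tr{\slenv}} extended by \im{\tx} at the closure's argument type, which---after the environment is substituted for \im{\tn}---is \(\zeta\)-convertible to, but not syntactically identical with, \im{\tr{\sA}}. Because an assumption (unlike a definition) never participates in \im{\step_\delta}, the CC-CC equivalence judgment is insensitive to this difference, so I would discharge it with a routine environment-conversion lemma for \im{\equiv}; such a lemma is needed in any case and is immediate by induction, since assumption bindings are invisible to reduction. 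A second, purely cosmetic point is that \fullref[]{lem:cc:pres-norm} only gives reduction to a term \im{\equiv}-equal to \im{\tr{(\sfune{\sx}{\sA}{\se_0})}} rather than to it on the nose, which is why the argument routes everything through \rulename{\figm{\equiv}} and transitivity instead of feeding the reduct directly into \rulename{\figm{\equiv}-Clo\figm{_1}}.
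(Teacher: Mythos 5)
Your proposal is correct and follows essentially the same route as the paper's proof: induction on the equivalence derivation, the \rulename{\figm{\equiv}} case by \fullref[]{lem:cc:pres-norm} plus transitivity, and the \(\eta\) cases by reducing both sides via \fullref[]{lem:cc:pres-norm}, applying \rulename{\figm{\equiv}-Clo\figm{_1}}/\rulename{\figm{\equiv}-Clo\figm{_2}} to the translated closure, collapsing the self-binding \(\zeta\)-redexes, and invoking the induction hypothesis on \im{\seone \equiv \sappe{\setwo}{\sx}}. Your remark about reconciling the annotation on \im{\tx} in the extended environment is a point the paper silently elides; it is harmless for the untyped equivalence used here, exactly as you argue.
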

\begin{proof}
  By induction on the \im{\se \equiv \sepr} judgment.
  \begin{itemize}
    \techrptonly{
      \case \rulename{\im{\equiv}}

      \noindent By assumption, \im{\se \stepstar \seone} and \im{\sepr \stepstar \seone}.

      \noindent By \fullref[]{lem:cc:pres-norm}, \im{\tr{\se} \stepstar \te} and \im{\te \equiv \tr{\seone}}, and
      similarly.
      \im{\seto{\sprime+} \stepstar \tepr} and \im{\tepr \equiv \tr{\seone}}.
      The result follows by symmetry and transitivity.
    }
    \case \rulename{\im{\equiv}-\im{\eta_1}}

    \noindent By assumption, \im{\se \stepstar \sfune{\sx}{\st}{\seone}}, \im{\sepr \stepstar \setwo} and
    \im{\seone \equiv \sappe{\setwo}{\sx}}.

    \noindent Must show \im{\tr{\se} \equiv \seto{\sprime+}}.

    \noindent By \fullref[]{lem:cc:pres-norm}, \im{\tr{\se} \stepstar \te} and \im{\te \equiv
      \tr{(\sfune{\sx}{\st}{\seone})}}, and similarly \im{\seto{\sprime+} \stepstar \tepr} and \im{\tepr
      \equiv \tr{\setwo}}.

    \noindent By transitivity of \im{\equiv}, it suffices to show
    \im{\tr{(\sfune{\sx}{\st}{\seone})} \equiv \tr{\setwo}}.

    \noindent By definition of the translation,
    \begin{align}
          (\sfune{\sx}{\st}{\seone})^+ =
      \tcloe{(\tnfune{&\tn:\tnsigmaty{(\txi:\sAi^+\dots)},\tx:\tlete{\tnpaire{\txi\dots}}{\tn}{\sA^+}}{
          \nonumber \\&\tlete{\tnpaire{\txi\dots}}{\tn}{\seone^+}})}{\tnpaire{\txi\dots}} \nonumber
    \end{align}
    where \im{\sxi:\sAi\dots {}= \DFV{\slenv}{\sfune{\sx}{\st}{\seone}}}.

    \noindent By \rulename{\im{\equiv}-Clo\im{_1}} in CC-CC, it suffices to show that
    \begin{align}
      &\tlete{\tnpaire{\txi\dots}}{\tnpaire{\txi\dots}}{\seone^+} \\
      \equiv~&\seone^+ \\ & \text{by \len{i} applications of \im{\step_\zeta}} \nonumber \\
      \equiv~&\tappe{\setwo^+}{\tx}
      \\ & \text{by the inductive hypothesis applied to \im{\seone \equiv \sappe{\setwo}{\sx}}} \nonumber
    \end{align}
    \techrptonly{
      \case \rulename{\im{\equiv}-\im{\eta_2}}
      Symmetric to the previous case; requires \rulename{\im{\equiv}-\im{\eta_2}} instead of
      \rulename{\im{\equiv}-\im{\eta_1}}.
    }%
  \end{itemize}
\paperonly{\vspace{-3.9ex}}\qedhere
\end{proof}

Now we can prove type preservation.
We give the technical version of the lemma required to complete the proof, followed by the desired
statement of the theorem.
\begin{lemma}[Type Preservation (technical)]
  \label{lem:cc:type-pres}
  ~
  \begin{enumerate}
    \item If \im{\swf{\slenv}} then \im{\twf{\slenv^+}}
    \item If \im{\styjudg{\slenv}{\se}{\sA}} then \inlinemath{\ttyjudg{\slenv^+}{\se^+}{\sA^+}}
  \end{enumerate}
\end{lemma}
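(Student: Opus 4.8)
The plan is to prove parts~1 and~2 simultaneously, by mutual induction on the derivations of \im{\swf{\slenv}} and \im{\styjudg{\slenv}{\se}{\sA}}, reusing the staging already set up in \fullref[]{sec:target} (\emph{compositionality} first, then preservation of reduction and equivalence, then typing, already discharged as \fullref[]{lem:cc:subst}, \fullref[]{lem:cc:pres-red}, \fullref[]{lem:cc:pres-norm}, \fullref[]{lem:cc:pres-equiv}). The environment cases \rulename{W-Empty}, \rulename{W-Assum}, \rulename{W-Def} follow immediately from the corresponding CC-CC rule and the two induction hypotheses, using that the translation sends universes to universes. Among the typing rules, \rulename{Ax-*}, \rulename{Var}, the two \rulename{Prod} rules, and the two \rulename{Sig} rules are purely structural: the translation commutes with the term constructor and with the universe annotation, so the matching CC-CC rule applies directly to the translated premises. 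For \rulename{Let}, \rulename{App}, \rulename{Fst}, and \rulename{Snd}, the CC-CC rule delivers \im{\se^+} at a type of the form \im{\subst{\sB^+}{\setwo^+}{\tx}}, which I would rewrite to \im{(\subst{\sB}{\setwo}{\sx})^+} via \fullref[]{lem:cc:subst} and then \rulename{Conv} in CC-CC; the side condition that \im{(\subst{\sB}{\setwo}{\sx})^+} is well typed follows from the induction hypothesis and the standard CC fact that a well-typed term has a well-typed type. The \rulename{Conv} case is immediate: the translation leaves the term unchanged, so the induction hypothesis gives \im{\ttyjudg{\slenv^+}{\se^+}{\sA^+}} and \im{\ttyjudg{\slenv^+}{\sB^+}{\sU}}, \fullref[]{lem:cc:pres-equiv} gives \im{\sequivjudg{\slenv^+}{\sA^+}{\sB^+}}, and \rulename{Conv} in CC-CC retypes \im{\se^+} at \im{\sB^+}.

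The substance is the \rulename{Lam} case. Here the term is \im{\sfune{\sx}{\sA}{\se}} with premise \im{\styjudg{\slenv,\sx:\sA}{\se}{\sB}}; writing \im{\sxi:\sAi\dots{}=\DFV{\slenv}{\sfune{\sx}{\sA}{\se},\spity{\sx}{\sA}{\sB}}}, the translation is \im{\tcloe{\teone}{\tetwo}} with \im{\teone = \tnfune{(\tn:\tnsigmaty{(\txi:\sAi^+\dots)},\tx:\tlete{\tnpaire{\txi\dots}}{\tn}{\sA^+})}{\tlete{\tnpaire{\txi\dots}}{\tn}{\se^+}}} the closed code and \im{\tetwo = \tdnpaire{\txi\dots}{\tnsigmaty{(\txi:\sAi^+\dots)}}} the environment tuple, and the goal type is \im{(\spity{\sx}{\sA}{\sB})^+ = \tpity{\tx}{\sA^+}{\sB^+}}. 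I would proceed in three steps. First, \im{\ttyjudg{\slenv^+}{\tetwo}{\tnsigmaty{(\txi:\sAi^+\dots)}}}: unfolding the \im{n}-tuple sugar, it suffices to check \im{\txin{k}:\sAin{k}^+} for each \im{k}, which holds by \rulename{Var}, since \im{\DFV} guarantees \im{\styjudg{\slenv}{\sxin{k}}{\sAin{k}}} and hence (up to conversion) \im{\txin{k}:\sAin{k}^+} belongs to \im{\slenv^+}, using the induction hypotheses to know that \im{\slenv^+} and the \im{\sAin{k}^+} are well formed. Second, \im{\teone} has the closed code type \im{\tcodety{\tn:\tnsigmaty{(\txi:\sAi^+\dots)},\tx:\tlete{\tnpaire{\txi\dots}}{\tn}{\sA^+}}{\tlete{\tnpaire{\txi\dots}}{\tn}{\sB^+}}} by \rulename{Code}; unfolding the pattern-matching let into nested \rulename{Let}s, which introduce definitions \im{\txin{k}=\tfste{\tsnde{\cdots\tn}}:\sAin{k}^+}, this reduces to typechecking \im{\se^+} against \im{\sB^+} in the context \im{\cdot,\tn:\tnsigmaty{(\txi:\sAi^+\dots)},\tx:\tlete{\tnpaire{\txi\dots}}{\tn}{\sA^+},\txin{0}=\tfste{\tn}:\sAin{0}^+,\dots}. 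The induction hypothesis applied to \im{\styjudg{\slenv,\sx:\sA}{\se}{\sB}} gives \im{\ttyjudg{\slenv^+,\tx:\sA^+}{\se^+}{\sB^+}}, and I would bridge the two contexts with a weakening/strengthening lemma for CC-CC, using that the free variables of \im{\se^+} and \im{\sB^+} lie among \im{\txi\dots} and \im{\tx}, that \im{\DFV} lists those variables in a dependency-respecting order (so the reconstructed local context is well formed), and that the annotation \im{\tlete{\tnpaire{\txi\dots}}{\tn}{\sA^+}} on \im{\tx} reduces by \im{\len{i}} applications of \im{\step_{\zeta}} to \im{\sA^+} with each \im{\txin{k}} replaced by its matching projection of \im{\tn}, which is \im{\step_{\delta}}-equivalent to \im{\sA^+} once the \im{\txin{k}} definitions are in scope, so that \rulename{Conv} lets \im{\tx} be used at type \im{\sA^+}. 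Third, \rulename{Clo} applied to the first two steps types \im{\tcloe{\teone}{\tetwo}} at \im{\tpity{\tx}{\subst{(\tlete{\tnpaire{\txi\dots}}{\tn}{\sA^+})}{\tetwo}{\tn}}{\subst{(\tlete{\tnpaire{\txi\dots}}{\tn}{\sB^+})}{\tetwo}{\tn}}}; substituting the literal tuple \im{\tetwo} for \im{\tn} turns each let into \im{\step_{\pi_1}}/\im{\step_{\pi_2}}-redexes followed by trivial \im{\step_{\zeta}}-redexes, so this type reduces to \im{\tpity{\tx}{\sA^+}{\sB^+}}, and one more use of \rulename{Conv} in CC-CC---with the target type well typed by the induction hypothesis on \im{\styjudg{\slenv}{\spity{\sx}{\sA}{\sB}}{\sU}}---closes the case.

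I expect the second step of the \rulename{Lam} case to be the main obstacle: it is where the source-style context \im{\slenv^+,\tx:\sA^+} must be reconciled with the closure-converted context, in which all of \im{\se^+}'s free variables arrive packed in the single argument \im{\tn} and are re-exposed by projections. Making this precise needs a weakening (and strengthening) lemma for CC-CC, care that \im{\DFV} emits variables in an order for which the reconstructed local context is itself well formed, and the observation that the let-elaborated type annotations agree with the original source annotations only up to the CC-CC conversion relation, not syntactically---which is exactly why steps two and three both need \rulename{Conv}. Everything else reduces to a direct appeal to an induction hypothesis, to \fullref[]{lem:cc:subst}, or to \fullref[]{lem:cc:pres-equiv}.
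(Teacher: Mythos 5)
Your proposal is correct and follows essentially the same route as the paper's proof: mutual induction with all the work concentrated in the \rulename{Lam} case, where the environment tuple is typed directly from the translated context, the code is typed via \rulename{Code} plus the induction hypothesis and a context-bridging (weakening/strengthening) step, and \rulename{Clo} followed by \rulename{Conv} reconciles the let-laden closure type with \im{\tpity{\tx}{\sA^+}{\sB^+}} by \im{\step_\zeta}/projection reduction, with the substitution-shaped cases such as \rulename{App} discharged by \fullref[]{lem:cc:subst}. The paper is terser about reconciling \im{\slenv^+,\tx:\sA^+} with the code's closed context (it simply says ``by weakening, since \im{\sxi\dots} are the free variables''), but your more explicit account of re-exposing the free variables as projection-definitions bound from \im{\tn} is the same argument.
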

\begin{proof}
  Parts 1 and 2 proven simultaneously by induction on the mutually defined judgments
    \im{\swf{\slenv}} and \im{\styjudg{\slenv}{\se}{\sA}}.

  Part 1 follows easily by induction and part 2.
  We give the key cases for part 2.
  \begin{itemize}
    \case \rulename{Lam}

    \noindent We have that \im{\styjudg{\slenv}{\sfune{\sx}{\sA}{\se}}{\spity{\sx}{\sA}{\sB}}}.
    We must show that \im{\ttyjudg{\slenv^+}{(\sfune{\sx}{\sA}{\se})^+}{(\spity{\sx}{\sA}{\sB})^+}}.

    \noindent By definition of the translation, we must show that

    \noindent \im{{
          \tcloe{(\tnfune{(\begin{stackTL}\tn:\tnsigmaty{(\txi:\sAi^+\dots)},\tx:\tlete{\tnpaire{\txi\dots}}{\tn}{\sA^+})}{
              \\\tlete{\tnpaire{\txi\dots}}{\tn}{\seone^+}})}{\tnpaire{\txi\dots}}
        \end{stackTL}} : {\tpity{\tx}{\sA^+}{\sB^+}}}

    \noindent where \im{\sxi : \sAi\dots {}= \DFV{\slenv}{\sfune{\sx}{\st}{\seone}}}.

    Notice that the annotation in the term \im{\tx:\tlete{\tnpaire{\txi\dots}}{\tn}{\sA^+}}, does not match
    the annotation in the type \im{\tx : \sA^+}.
    However, by \rulename{Clo}, we can derive that the closure has type:

    \noindent \im{\tpity{(\tx}{\tlete{\tnpaire{\txi\dots}}{\tnpaire{\txi\dots}}{\sA^+})}{(\tlete{\tnpaire{\txi\dots}}{\tnpaire{\txi\dots}}{\sB^+})}},

    \noindent This is equivalent to \im{\tpity{\tx}{\sA^+}{\sB^+}} (under
    \im{\slenv^+}), since \im{(\tlete{\tnpaire{\txi\dots}}{\tnpaire{\txi\dots}}{\sA^+}) \equiv \sA^+} as
    we saw in earlier proofs.
    So, by \rulename{Clo} and \rulename{Conv}, it suffices to show that the environment and the
    code are well-typed.

    By part 1 of the induction hypothesis applied (since each of \im{\sxi:\sAi\dots} come from
    \im{\slenv}), we know the environment is well-typed:
    \im{\ttyjudg{\slenv^+}{\tnpaire{\txi\dots}}{\tnsigmaty{(\txi:\sAi^+\dots)}}}.

    Now we show that the code

    \im{(\tnfune{(\begin{stackTL}\tn:\tnsigmaty{(\txi:\sAi^+\dots)},\tx:\tlete{\tnpaire{\txi\dots}}{\tn}{\sA^+})}{
            \\\tlete{\tnpaire{\txi\dots}}{\tn}{\seone^+}})
        \end{stackTL}}

    \noindent has type \im{\tcodety{\tn,\tx}{\tlete{\tnpaire{\txi\dots}}{\tn}{\sB^+}}}. For brevity,
    we omit the duplicate type annotations on \im{\tn} and \im{\tx}.

    \noindent Observe that by the induction hypothesis applied to \im{\styjudg{\slenv}{\sA}{\sU}} and by weakening

    \im{\ttyjudg{\tn: {\tnsigmaty{(\txi:\sAi^+\dots)}}}{\tlete{\tnpaire{\txi\dots}}{\tn}{\sA^+}}{\sU^+}}.

    \noindent Hence, by \rulename{Code}, it suffices to show

    \im{\ttyjudg{\cdot,\tn,\tx}{
        \tlete{\tnpaire{\txi\dots}}{\tn}{\seone^+}}{\tlete{\tnpaire{\txi\dots}}{\tn}{\sB^+}}}

    \noindent which follows by the inductive hypothesis applied to \im{\styjudg{\slenv,\sx:\sA}{\seone}{\sB}},
    and by weakening, since \im{\sxi\dots} are the free variables of \im{\seone}, \im{\sA}, and \im{\sB}.

    \case \rulename{App}

    \noindent We have that \im{\styjudg{\slenv}{\sappe{\seone}{\setwo}}{\subst{\sB}{\setwo}{\sx}}}.
    We must show that \im{\ttyjudg{\slenv^+}{\tappe{\seone^+}{\setwo^+}}(\subst{\sB}{\setwo}{\sx})^+}.
    By \fullref[]{lem:cc:subst}, it suffices to show
    \im{\ttyjudg{\slenv^+}{\tappe{\seone^+}{\setwo^+}}\subst{\sB^+}{\setwo^+}{\tx}}, which follows by
    \rulename{App} and the inductive hypothesis applied to \im{\seone}, \im{\setwo} and \im{\sB}. \qedhere
  \end{itemize}
\end{proof}

\begin{theorem}[Type Preservation]
\paperonly{\vspace{-1ex}}
  \nonbreaking{If \im{\styjudg{\slenv}{\se}{\st}} then \im{\ttyjudg{\tr{\slenv}}{\tr{\se}}{\tr{\st}}}.}
\paperonly{\vspace{-2ex}}
\end{theorem}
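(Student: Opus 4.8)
The plan is to read this theorem off as the headline consequence of \fullref[]{lem:cc:type-pres}, whose two parts were bundled into one ``technical'' statement only so that the mutual induction on \im{\swf{\slenv}} and \im{\styjudg{\slenv}{\se}{\st}} would go through. The first thing I would check is that the functional notations \im{\tr{\se}}, \im{\tr{\st}}, and \im{\tr{\slenv}} occurring in the statement are actually defined on the given data. From \im{\styjudg{\slenv}{\se}{\st}} the standard inversion lemma recalled in \fullref[]{sec:source} yields that \im{\slenv} is well formed and that \im{\st} is itself typeable, \im{\styjudg{\slenv}{\st}{\sU}} for some universe \im{\sU}. Since every typing and well-formedness rule of CC (\fullref[]{fig:cc:type}, \fullref[]{fig:cc:wf}) is mirrored by a closure-conversion rule in \fullref[]{fig:cc:term} with structurally identical premises, a routine induction on these derivations produces closure-conversion derivations \im{\ccjudg{\slenv}{\se}{\st}{\tr{\se}}}, \im{\ccjudg{\slenv}{\st}{\sU}{\tr{\st}}}, and \im{\ccenvjudg{\slenv}{\tr{\slenv}}}; that is, the relational translation is total on well-typed source terms, which is exactly what gives the \im{\tr{\cdot}} notation a meaning.

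With those derivations fixed, the rest is immediate: part~1 of \fullref[]{lem:cc:type-pres} applied to \im{\swf{\slenv}} gives \im{\twf{\tr{\slenv}}}, and part~2 applied to the derivation of \im{\styjudg{\slenv}{\se}{\st}} gives \im{\ttyjudg{\tr{\slenv}}{\tr{\se}}{\tr{\st}}}, which is the statement to be proved. No fresh case analysis is needed: all of the genuine work --- in the \rulename{Lam} case, reconciling the argument annotation carried by the translated code with the annotation carried by the closure type, and, in the application case, commuting the translation with substitution --- already lives inside the proof of \fullref[]{lem:cc:type-pres}, which in turn rests on \fullref[]{lem:cc:subst} (Compositionality) and \fullref[]{lem:cc:pres-equiv} (Coherence).

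Accordingly there is no real obstacle at this level; the one point I would make explicit is a bookkeeping subtlety about the \im{\tr{\cdot}} notation. Because Compositionality is stated up to \im{\equiv} rather than as a syntactic identity --- itself a consequence of the new \(\eta\) rule for closures --- the translation of a term, hence of a type, is determined only up to \im{\equiv}; so, for instance, \im{\tr{\st}} computed directly need not be syntactically equal to the type one gets by translating the constituents of \im{\st} and substituting along \rulename{App}. This is harmless for the theorem, since part~2 of \fullref[]{lem:cc:type-pres} already absorbs the necessary appeals to \rulename{Conv}; I would simply note that the statement is to be read for any choice of closure-conversion derivations, each of which the technical lemma certifies to be well typed at the correspondingly translated type.
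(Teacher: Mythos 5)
Your proposal is correct and matches the paper's (implicit) argument: the theorem is stated as an immediate consequence of part~2 of the technical Type Preservation lemma, with all the real work (the \rulename{Lam} case, Compositionality, Coherence) living in that lemma's proof. Your additional remarks on the totality of the closure-conversion relation and the up-to-\im{\equiv} determinacy of the \im{\tr{\cdot}} notation are sound bookkeeping that the paper leaves tacit.
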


\subsection{Correctness}
We prove \emph{correctness of separate compilation} and \emph{whole program correctness}.
These two theorems follow easily from \fullref[]{lem:cc:pres-norm}, but requires a little more work to
state formally.

First, we need an independent specification that relates source values to target values in CC-CC.
We do this by adding ground types, such as \im{\sBool}, to both languages and consider results related
when they are the same boolean: \im{\sTrue \approx \tTrue} and \im{\sFalse \approx \tFalse}.
It is well known how specify more sophisticated notions of observations\paperonly{, and we discuss
these in \fullref[]{sec:related}}.

Next, we define components and linking.
Components in both CC and CC-CC are well-typed open terms, \ie, \im{\styjudg{\slenv}{\se}{\sA}}.
We implement linking by substitution, and define valid closing substitutions \im{\ssubst} as follows.

\begin{inlinedisplay}
  \begin{array}{rcl}
  \wf{\slenv}{\ssubst} & \defeq & \forall \sx:\sA \in \slenv. \styjudg{\cdot}{\ssubst(\sx)}{\sA}
  \end{array}
\end{inlinedisplay}

\noindent We extend the compiler to closing substitutions \im{\ssubst^+} by point-wise application of
the translation.

Our separate compilation guarantee is that the translation of the source component \im{\se} linked
with substitution \im{\ssubst} is equivalent to first compiling \im{\se} and then linking with some
\im{\tsubst} that is definitionally equivalent to \im{\tr{\ssubst}}.
\begin{theorem}[Correctness of Separate Compilation]
  \label{thm:cc:comp-correct}
  If \im{\styjudg{\slenv}{\se}{\sA}} and \im{\sA} is a ground type,
  \im{\ssubstok{\slenv}{\ssubst}},
  \im{\ssubstok{\tr{\slenv}}{\tsubst}},
  \im{\ssubst(\se) \stepstar \sv},
  and \im{\tr{\ssubst} \equiv \tsubst}
  then
  \im{\tsubst(\tr{\se}) \stepstar \tvpr} and \im{\tr{\sv} \approx \tvpr}
\end{theorem}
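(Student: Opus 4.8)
The plan is to relate \im{\tsubst(\tr{\se})} to \im{\tr{(\ssubst(\se))}}: the linked source component \im{\ssubst(\se)} is a closed, well-typed term of ground type whose reduction behavior is given, and \fullref[]{lem:cc:pres-norm} tells us how its translation behaves, so it suffices to bridge the two substitutions using compositionality together with the hypothesis \im{\tr{\ssubst}\equiv\tsubst}.

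First I would lift \fullref[]{lem:cc:subst} from single-variable substitutions to closing substitutions, proving that for any \im{\ssubstok{\slenv}{\ssubst}} and \im{\styjudg{\slenv}{\se}{\sA}} we have \im{\tr{(\ssubst(\se))} \equiv \tr{\ssubst}(\tr{\se})}, where \im{\ssubst(\se)} is closed and well-typed at \im{\sA} by the CC substitution lemma (so the \im{{}^+} notation applies). Since every component of \im{\ssubst} is closed, the simultaneous substitution factors into a sequence of single substitutions, and the claim follows by induction on the length of \im{\ssubst}, invoking \fullref[]{lem:cc:subst} at each step and using that definitional equivalence in CC-CC is a congruence, so that the equivalences produced at successive steps compose. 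Using the hypothesis \im{\tr{\ssubst} \equiv \tsubst} pointwise together with congruence of \im{\equiv} under substitution, I then obtain \im{\tsubst(\tr{\se}) \equiv \tr{\ssubst}(\tr{\se}) \equiv \tr{(\ssubst(\se))}}.

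Next I apply \fullref[]{lem:cc:pres-norm} to \im{\ssubst(\se) \stepstar \sv} (legal since \im{\ssubst(\se)} is well-typed and, by subject reduction for CC, so is \im{\sv}), obtaining \im{\tr{(\ssubst(\se))} \stepstar \te} for some \im{\te \equiv \tr{\sv}}; combining with the previous step, \im{\tsubst(\tr{\se}) \equiv \tr{\sv}}. To produce the target value: by \fullref[]{lem:cc:type-pres}, \im{\ttyjudg{\tr{\slenv}}{\tr{\se}}{\tr{\sA}}}, and since \im{\sA} is a ground type, \im{\tr{\sA}} is the corresponding closed ground type; so with \im{\ssubstok{\tr{\slenv}}{\tsubst}} and the CC-CC substitution lemma, \im{\tsubst(\tr{\se})} is a closed term of ground type. \fullref[]{thm:m:safe} then yields \im{\tsubst(\tr{\se}) \stepstar \tvpr} with \im{\tvpr} irreducible, which is the first conclusion, and hence \im{\tvpr \equiv \tsubst(\tr{\se}) \equiv \tr{\sv}}. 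Finally, \im{\sv} is a closed source value of ground type, so by canonicity it is a ground constant and \im{\tr{\sv}} is the matching target constant (\eg, \im{\tr{\sTrue} = \tTrue}); likewise \im{\tvpr} is a ground constant; and since distinct ground constants are not definitionally equivalent (a standard consequence of consistency, \fullref[]{thm:m:sound}), \im{\tvpr = \tr{\sv}}, so \im{\tr{\sv} \approx \tvpr} by definition of \im{\approx}.

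I expect the only substantive step to be the first --- extending \fullref[]{lem:cc:subst} to closing substitutions, which hinges on definitional equivalence being a congruence, in particular stable under substitution on both sides. Once that is in hand, the rest is bookkeeping together with the routine canonicity argument at ground type.
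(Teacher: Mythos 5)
Your proposal is correct and follows essentially the same route as the paper's proof, which is just the commuting diagram built from compositionality, preservation of reduction/equivalence, and transitivity, followed by the observation that \im{\equiv} at ground type implies \im{\approx}. You spell out the details the paper leaves implicit---lifting \fullref[]{lem:cc:subst} to closing substitutions, obtaining the target reduction \im{\tsubst(\tr{\se}) \stepstar \tvpr} via type preservation and \fullref[]{thm:m:safe}, and the canonicity argument at ground type---all of which are the right way to discharge those implicit steps.
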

\begin{proof}
  Since the translation commutes with substitution, preserves equivalence, reduction implies
  equivalence, and equivalence is transitive, the following diagram commutes.

  \begin{nop}
    \small
  \begin{tikzcd}
    \tr{(\ssubst(\se))} \arrow[r, "\equiv"] \arrow[d, "\equiv"]
    & \tsubst(\tr{\se}) \arrow[d, "\equiv"] \\
      \tr{\sv} \arrow[r, "\equiv"]
      & \tvpr
  \end{tikzcd}
  \end{nop}

  \noindent Since \im{\equiv} on ground types implies \im{\approx}, we know that \im{\sv \approx \tvpr}. \qedhere
\end{proof}

As a simple corollary, our compiler must also be whole-program correct.
If a whole-program \im{\se} evaluates to a value \im{\sv}, then the translation \im{\tr{\se}} runs to
a value equivalent to \im{\tr{\sv}}.
\begin{corollary}[Whole-Program Correctness]
  If \im{\styjudg{\cdot}{\se}{\sA}} and \im{\sA} is a ground type,
  and \im{\se \stepstar \sv}
  then \im{\tr{\se} \stepstar \tv} and \im{\tr{\sv} \approx \tv}
\end{corollary}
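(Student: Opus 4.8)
The plan is to obtain this as the degenerate case of \fullref[]{thm:cc:comp-correct} in which the linking environment is empty. First I would instantiate Correctness of Separate Compilation with \im{\slenv = \cdot} and \im{\ssubst = \tsubst = \cdot} (the empty closing substitutions, for which \im{\tr{\cdot} = \cdot}). All of the hypotheses then discharge immediately: \im{\ssubstok{\cdot}{\cdot}} holds vacuously on both the source and the target side, \im{\tr{\ssubst} \equiv \tsubst} is just \im{\cdot \equiv \cdot} and holds reflexively (there is nothing to compare), and since the empty substitution applied to a term is the identity, \im{\ssubst(\se) = \se} and \im{\tsubst(\tr{\se}) = \tr{\se}}, so the remaining hypothesis \im{\ssubst(\se) \stepstar \sv} is exactly the corollary's assumption \im{\se \stepstar \sv}.

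With these instantiations, \fullref[]{thm:cc:comp-correct} yields a target value \im{\tv} with \im{\tsubst(\tr{\se}) = \tr{\se} \stepstar \tv} and \im{\tr{\sv} \approx \tv}, which is precisely the statement to be proved. (The notation \im{\tr{\sv}} is well-defined because \im{\sv} is typeable by subject reduction for CC; this is already relied on in \fullref[]{thm:cc:comp-correct} itself.) If one preferred to avoid the separate-compilation theorem, the same conclusion follows directly from \fullref[]{lem:cc:pres-norm} --- \im{\se \stepstar \sv} gives \im{\tr{\se} \stepstar \te} with \im{\te \equiv \tr{\sv}} --- together with type safety of CC-CC (\fullref[]{thm:m:safe}) to reduce \im{\te} to a value \im{\tv}, noting that \im{\tv \equiv \tr{\sv}} forces \im{\tv \approx \tr{\sv}} at a ground type.

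There is essentially no obstacle here. All the substantive work --- preservation of reduction sequences up to definitional equivalence, and the resulting commuting diagram in the proof of \fullref[]{thm:cc:comp-correct} --- is already established, and the only thing to check is that the premises of that theorem collapse correctly when the environment is empty, which is routine bookkeeping.
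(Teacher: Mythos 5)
Your proposal is correct and matches the paper's intent exactly: the paper presents this statement as ``a simple corollary'' of \fullref[]{thm:cc:comp-correct}, obtained precisely by specializing to the empty environment and empty closing substitutions as you do. Your alternative route via \fullref[]{lem:cc:pres-norm} is also sound but unnecessary given the theorem already in hand.
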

}

\section{Related Work and Discussion}
\label{sec:related}

\paragraph{Preserving Dependent Types}
\citet{barthe1999} study the call-by-name (CBN) CPS translation for the Calculus of Constructions
without \(\Sigma\) types.
In 2002, when attempting to extend the translation to CIC,
\citet{barthe2002} noticed that in the presence of \(\Sigma\) types,
the standard typed CPS translation fails.
In recent work, \citet{bowman2018:cps-sigma} show how to recover type preservation for both CBN and call-by-value CPS.
\citet{bowman2018:cps-sigma} add a new typing rule that keeps track of additional contextual information
while type checking continuations, similar to our \rulename{Clo} typing rule that keeps track of the
environment (via substitution) while type checking closures.
In CPS, a term \im{\te} is evaluated to values indirectly, by a continuation \im{\tnfune{\tx}{\tepr}}.
When resolving type equivalence, they end up requiring that \im{\te} is equivalent to \im{\tx}.
An essentially similar problem arises in the translation \(\Pi\) types described in \fullref[]{sec:idea}.
In closure conversion, we need show that a free variable \im{\tx} is the same as a projection from an
environment \im{\tfste{\tn}} when resolving type equivalence.

There is also work on typed compilation of restricted forms of
dependency, which avoid the central type theoretic difficulties we solve.
\citet{chen2010} develop a type-preserving compiler from Fine, an ML-like language with refinement
types, to a version of the .NET intermediate language with type-level computation.
This system lacks full spectrum types and can rely on computational irrelevance of type-level
arguments, unlike our setting as discussed in \fullref[]{sec:idea}.
\citet{shao2005} use CIC as an \emph{extrinsic type system} over an ML-like language, so that CIC
terms can be used as specifications and proofs, but restrict arbitrary terms from appearing in types.
They develop a type-preserving closure conversion translation for this language.
Their closure conversion is simpler to develop than ours, because the extrinsic type system
avoids the issues of computational relevance by disallowing terms from appear in types.
Instead, a separate type-level representation of a subset of terms can appear in types.
This can increase the burden of proving programs correct compared to an intrinsic system such as CC
because the programmer is required to duplicate programs into the type-level representation.

\vspace{-1ex}
\paragraph{Separate and Compositional Compilation}
In this work, we prove \fullref{thm:cc:comp-correct}
This is similar to the guarantees of SepCompCert~\cite{kang2016}.
We could support a \emph{compositional} compiler correctness result by developing a
relation independent of the compiler between source and target components to classify which components
are safe to link with.
There are well known techniques for developing such relations which we think will extend to
CC~\cite{perconti2014,stewart2015,neis2015:pilsner,bowman2015:niforfree,new2016}.

\vspace{-1ex}
\paragraph{Type-Preserving Compilation}
Type-preserving compilation has been widely used to rule out linking errors, and even extended to
statically rule out security attacks introduced by linking.
The seminal work by \citet{morrisett1998:ftotal} uses type-preserving compilation and give a safe
linking semantics to Typed Assembly Languages (TAL): linking any two
TAL components---regardless of
whether they were generated by a correct compiler or
hand-written---is guaranteed to be type and memory safe.
Our target language CC-CC provides similar guarantees to TAL, although it is still a high-level
language by comparison.
\section{Future Work}
\label{sec:future}

\paragraph{The Calculus of Inductive Constructions}
As future work, we plan to scale our translation to the Calculus of Inductive Constructions (CIC), the
core language of Coq.
There are two key challenges in scaling to CIC.

First, we need to scale our work to recursive functions.
Our translation should scale easily, but adding recursion to CC-CC will be challenging.
In CIC, recursive functions must always terminate to ensure consistency.
Coq enforces this with a syntactic guard condition that cannot be preserved by closure conversion, as
it relies on the structure of free variables.
Instead, we intend to investigate two alternatives to ensuring termination in CC-CC.
One is to compile the guard condition to inductive eliminators---essentially a
primitive form that folds over the tree structure of an inductive type and is terminating by
construction, which has been studied in Coq~\cite{gimenez1995}, but it is not clear how to encode this
in a typed assembly.
A more theoretically appealing technique is to design CC-CC with semantic termination concepts
such as sized types~\cite{gregoire2010}.
However, it is not clear how to compile Coq programs based on the guard condition to sized types.

Recursion also introduces an important performance consideration.
Abstract closure conversion introduces additional allocations and dereferences compared to the
existential type translation~\cite{minamide1996,morrisett1998:reccc}
To solve this, we need to adapt our definition of closures to enable environments to be separated from
the closures, but still hide their type.

The second challenge is how to address computational relevance.
In CC, it is simple to make a syntactic distinction between relevant and irrelevant
terms~\cite{barthe1999} based on the universes \im{\sstarty} and \im{\sboxty}, although we avoid doing
so for simplicity of the presentation.
Coq features an infinite hierarchy of universes, so the distinction is not easy to make.
We would need to design a source language in which computational relevance has already been made
explicit so that we can easily decide which terms to closure-convert.
Some work has been done on the design of such languages~\cite{mishra-linger2008:phd,mishra-linger2008:fossacs,miquel2001,barras2008}, but to our knowledge
none of the work has been used to encode Coq's computational relevance semantics.

\vspace{-1ex}
\paragraph{Full-Spectrum Dependently Typed Assembly Language}
Our ultimate goal is to compile to a dependently typed assembly-like language in which we can safely
link and then generate machine code.
We imagine the assembly could be targeted from many languages, such as Coq and OCaml, and type
checking in this assembly would serve to ensure safe linking.
This would require a general purpose typed assembly, with support for interoperating
between pure code and effectful code~\cite{patterson2017:linkingtypes} and different type
systems~\cite{ahmed2015:snapl}.

A minimal compiler for a functional language performs CPS translation, closure conversion,
heap allocation, and assembly-code generation.
Recent work solves CPS~\cite{bowman2018:cps-sigma} and we solve closure conversion, so two passes remain.

Heap allocation makes memory and allocation explicit, so we need new typing and equivalence rules
that can reason about memory and allocation.
This seems straightforward for CC, but we anticipate further challenges for CIC.
In CIC, we must allow cycles in the heap to support recursive functions but still ensure soundness and
termination.
As discussed earlier, ensuring terminating recursion alone will introduce new challenges, but other
techniques may help us solve the problem once the heap is explicit.
Linear types have been used to allow cycles in the heap but still guarantee strong
normalization~\cite{ahmed2007:l3}.
Unfortunately, linear types and dependent types are difficult to integrate~\cite{mcbride2016}.

The design of a dependently typed assembly language will be hard.
Assembly language will make machine-level concepts explicit, such as registers, word sizes, and
first-class control. We will need typing and equivalence rules
to reason about these machine-level concepts.
First-class control presents a particular challenge, since it is been shown inconsistent with
dependent type theory~\cite{herbelin2005}.
This is related to the problem of CPS and dependent types, so we anticipate that we can build on the work
of \citet{bowman2018:cps-sigma} to restrict control and regain consistency.
Even if control is not a problem, the consistency proof will introduce new challenges.
In this work, we develop a model of the CC-CC in CC to prove type safety and consistency.
This relies critically on \emph{compositionality}.
Assembly languages are typically not compositional, so this proof architecture may not scale to the
assembly language.
However, in recent work on interoperability between a high-level functional language and a typed assembly
language,~\citet{patterson2017:funtal} successfully defined a compositional
assembly language, and we are hopeful that we can extend this work to the dependently typed setting.

\vspace{-.5ex}
\paragraph{Secure Compilation}
Type preservation has been widely studied to statically enforce \emph{secure compilation}, \ie, to
statically guarantee that compiled code cannot be linked with components (attackers) that violate data
hiding, abstraction, and information flow security
properties~\cite{bowman2015:niforfree,ahmed2008,new2016,ahmed2011}.
These compilers typically prove that the translation preserves and reflects \emph{contextual
equivalence}, \ie, that the compiler is fully abstract.
As future work, we plan to investigate what security guarantees can be implied just from preservation
and reflection of \emph{definitional equivalence}, which we
conjecture holds of our translation.

In our model, we prove \fullref{lem:m:coherence}, \ie, that we can translate any two
\emph{definitionally equivalent} CC-CC terms into definitionally equivalent CC terms.
In our compiler, we prove \fullref{lem:cc:pres-equiv}, \ie, that we translate any two definitionally
equivalent CC terms into definitionally equivalent CC-CC terms.
These two lemmas resemble the statements of preservation and reflection, although in terms
of definitional equivalence instead of contextual equivalence.
Since \fullref[]{lem:m:coherence} is not stated in terms of our compiler, we need the following
condition to complete the proof of preservation and reflection: \im{\se \equiv (\se^+)^\circ}, \ie,
that compiling to CC-CC and then translating back to CC is equivalent to the original term; we
conjecture this equivalence holds.

Definitional equivalence in dependently typed languages is sound, but not
necessarily complete, with respect to contextual equivalence, so even if the above conjecture holds,
more work remains to prove full abstraction.  Typed closure
conversion based on the existential-type encoding is well known to be
fully abstract~\cite{ahmed2008,new2016}.  We conjecture that our
translation is also fully abstract; the essential observation is that
CC-CC does not include any constructs that would allow an attacker to inspect the environment.
\vspace{-1ex}
\begin{acks}
We gratefully acknowledge the valuable feedback provided by Greg
Morrisett, Stephanie Weirich, and the anonymous reviewers.
We also give special thanks to Dan Grossman, without whom this paper would not have been.
Part of this work was done at Inria Paris in Fall 2017, while Amal Ahmed was a Visiting
Professor and William J. Bowman held an internship.
This material is based upon work supported by the National Science
Foundation under grants CCF-1422133 and CCF-1453796, and the European
Research Council under ERC Starting Grant SECOMP (715753). 
Any opinions, findings, and conclusions or recommendations expressed
in this material are those of the authors and do not necessarily
reflect the views of our funding agencies. 
\end{acks}


\let\tt=\oldtt
\bibliographystyle{ACM-Reference-Format}
\bibliography{cccc}

\end{document}